\newcommand{\boundary}{\partial}
\numberwithin{theorem}{section}
\newcommand{\TheTitle}{An Approach to Quad Meshing Based on Harmonic Cross-Valued Maps and the Ginzburg-Landau Theory}
\newcommand{\TheAuthors}{R. Viertel and B. Osting}
\newcommand{\ShortTitle}{Quad Meshing, Cross Fields, and the Ginzburg-Landau Theory}
\headers{\ShortTitle}{\TheAuthors}
\title{{\TheTitle}\thanks{\today
\funding{B. Osting is partially supported by NSF DMS 16-19755 and 17-52202. R. Viertel is supported by Sandia National Laboratories. Sandia National Laboratories is a multi-mission laboratory managed and operated by National Technology \& Engineering Solutions of Sandia, LLC., a wholly owned subsidiary of Honeywell International, Inc., for the U.S. Department of Energy's National Nuclear Security Administration under contract DE-NA0003525. SAND2017-8242 J.}}}
\author{
  Ryan Viertel\thanks{Department of Mathematics, University of Utah, Salt Lake City, UT (\email{viertel@math.utah.edu},
    \email{osting@math.utah.edu}).}
  \and
  Braxton Osting\footnotemark[2]
}
\DeclareMathOperator{\sgn}{sgn}
\DeclareMathOperator*{\interior}{int}
\DeclareMathOperator*{\Index}{I}
 \newtheorem{assumption}[theorem]{Assumption}
\begin{document}

\maketitle

\begin{abstract}
A generalization of vector fields, referred to as $N$-direction fields or cross fields when $N = 4$, has been recently introduced and studied for geometry processing, with applications in quadrilateral (quad) meshing, texture mapping, and parameterization. We make the observation that cross field design for two-dimensional quad meshing is related to the well-known Ginzburg-Landau problem from mathematical physics. This yields a variety of theoretical tools for efficiently computing boundary-aligned quad meshes, with provable guarantees on the resulting mesh, such as the number of mesh defects and bounds on the defect locations. The procedure for generating the quad mesh is to (i) find a complex-valued  ``representation'' field that minimizes the Ginzburg-Landau energy subject to a boundary constraint, (ii) convert the representation field into a boundary-aligned, smooth cross field, (iii) use separatrices of the cross field to partition the domain into four sided regions, and (iv)  mesh each of these four-sided regions using standard techniques. Leveraging the Ginzburg-Landau theory, we prove that this procedure can be used to produce a cross field whose separatrices partition the domain into four sided regions. To minimize the Ginzburg-Landau energy for the representation field, we use an extension of the Merriman-Bence-Osher (MBO) threshold dynamics method, originally conceived as an algorithm to simulate mean curvature flow. Finally, we demonstrate the method on a variety of test domains.
\end{abstract}

\begin{keywords}
Quad Meshing, Cross Fields, $N$-direction Fields, Ginzburg-Landau Theory, Computational Geometry, Geometry Processing
\end{keywords}

\begin{AMS}
35Q56, 65N50, 68U05
\end{AMS}

\section{Introduction}
A generalization of vector fields, referred to as \emph{$N$-direction fields} allow for $N$ directions to be encoded at each point of a domain. When $N = 4$, the term \emph{cross field} is often used. Such fields are better suited to encode multi-directional information than simply using $N$ overlaid vector fields because they allow singularities of ``fractional index'' in the neighborhood of which an $N$-direction field turns $2\pi/N$ radians (see \cref{fig:fractional_index}). $N$-direction fields have found recent applications in quad re-meshing for computer graphics and finite element simulations \cite{bommes_mixed-integer_2009,jakob_instant_2015,kalberer_quadcover_2007,kowalski_pde_2013,panozzo_frame_2014}, parameterization \cite{myles_robust_2014,ray_periodic_2006}, non-photorealistic rendering \cite{hertzmann_illustrating_2000}, and texture mapping \cite{knoppel_stripe_2015}.

\begin{figure}
  \begin{center}
    \includegraphics[width=.3\linewidth]{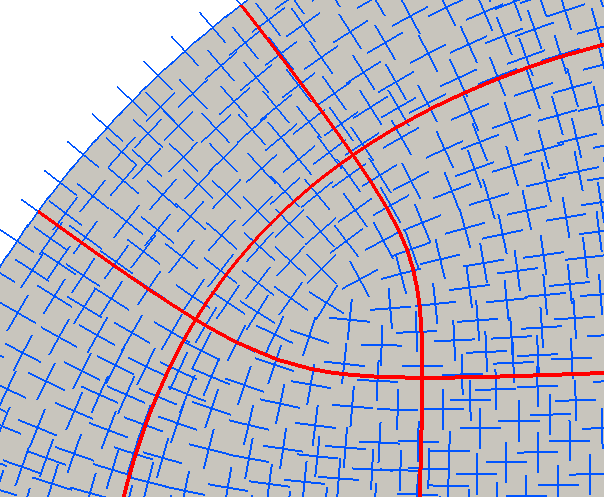} \qquad
    \includegraphics[width=.3\linewidth, trim=250 0 415 0, clip]{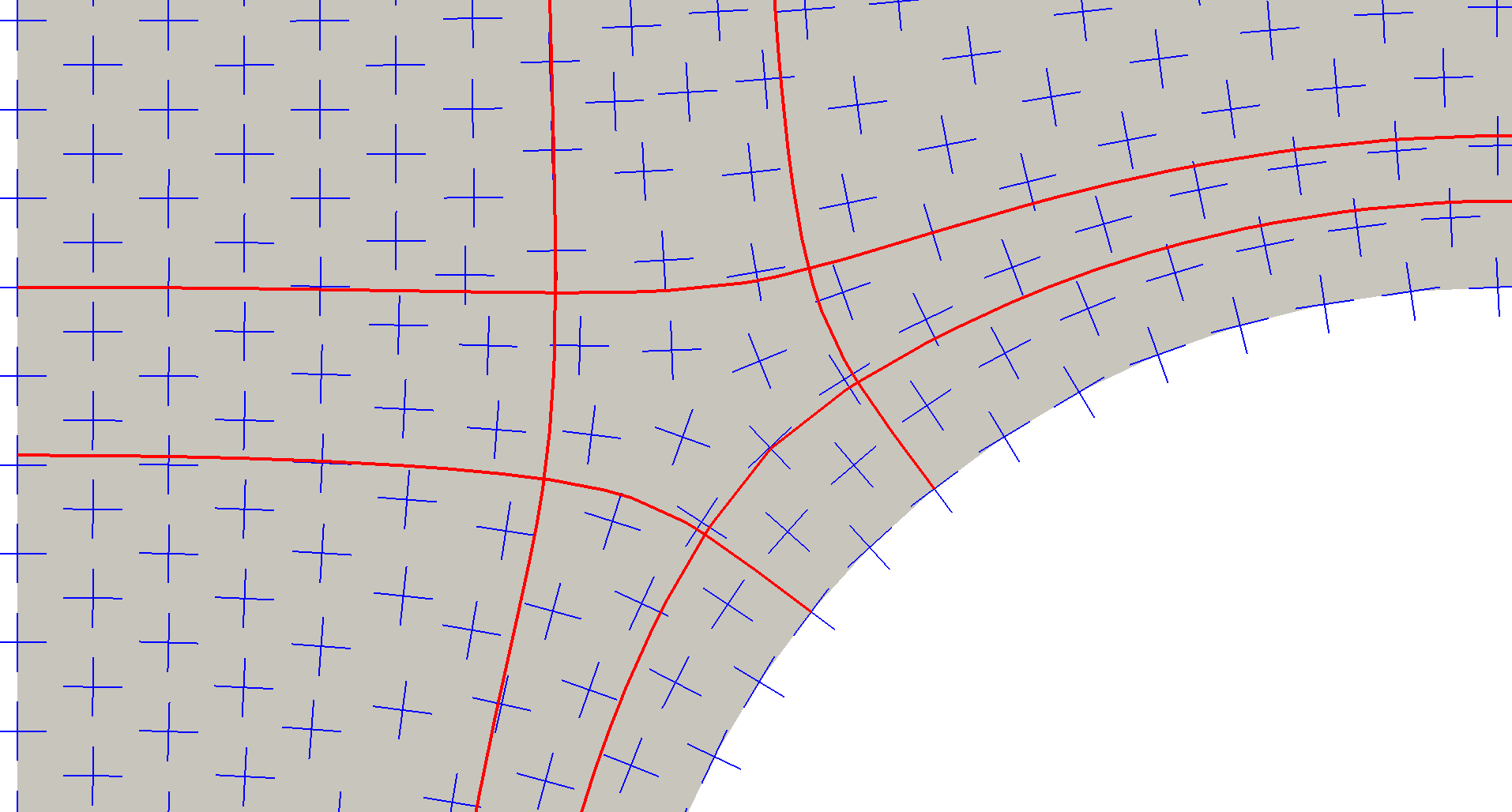}
  \end{center}
  \caption{{\bf Singularities in a cross field with indices $+1/4$ and $-1/4$.}  {\bf (left)} A singularity with index $+1/4$ is contained in the region enclosed by the three red lines, which are streamlines of the cross field. Since the index is +1/4, the cross field makes a quarter turn counter-clockwise when circulating around this singularity.
 {\bf (right)} Similarly, a singularity with index -1/4 is enclosed by 5 streamlines. The cross field makes a quarter turn clockwise when circulating around this singularity.}
\label{fig:fractional_index}
\end{figure}

\begin{figure}
  \begin{center}
    \includegraphics[width=.45\linewidth,trim=150 0 150 0, clip]{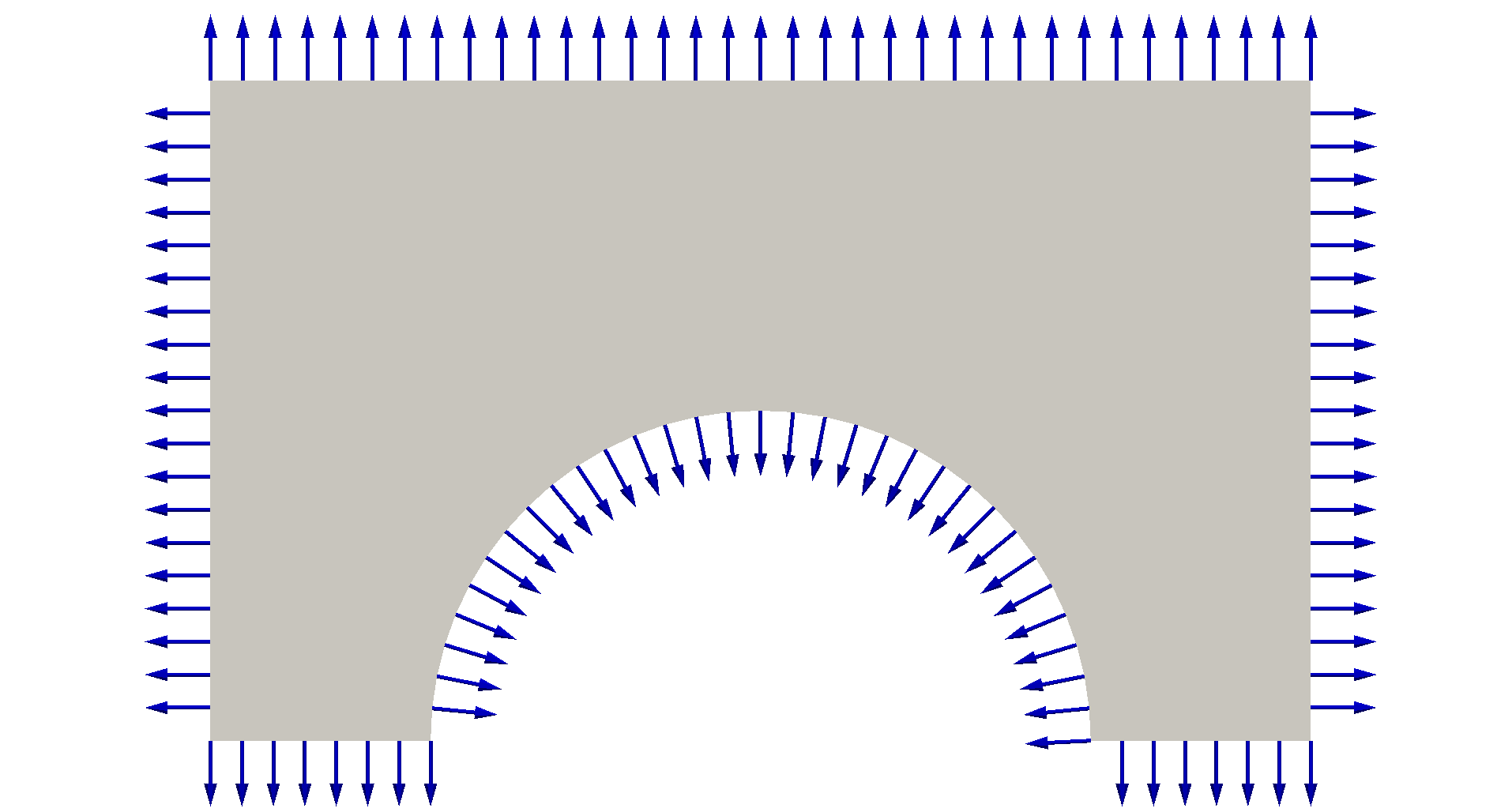} \qquad
    \includegraphics[width=.45\linewidth,trim=150 0 150 0, clip]{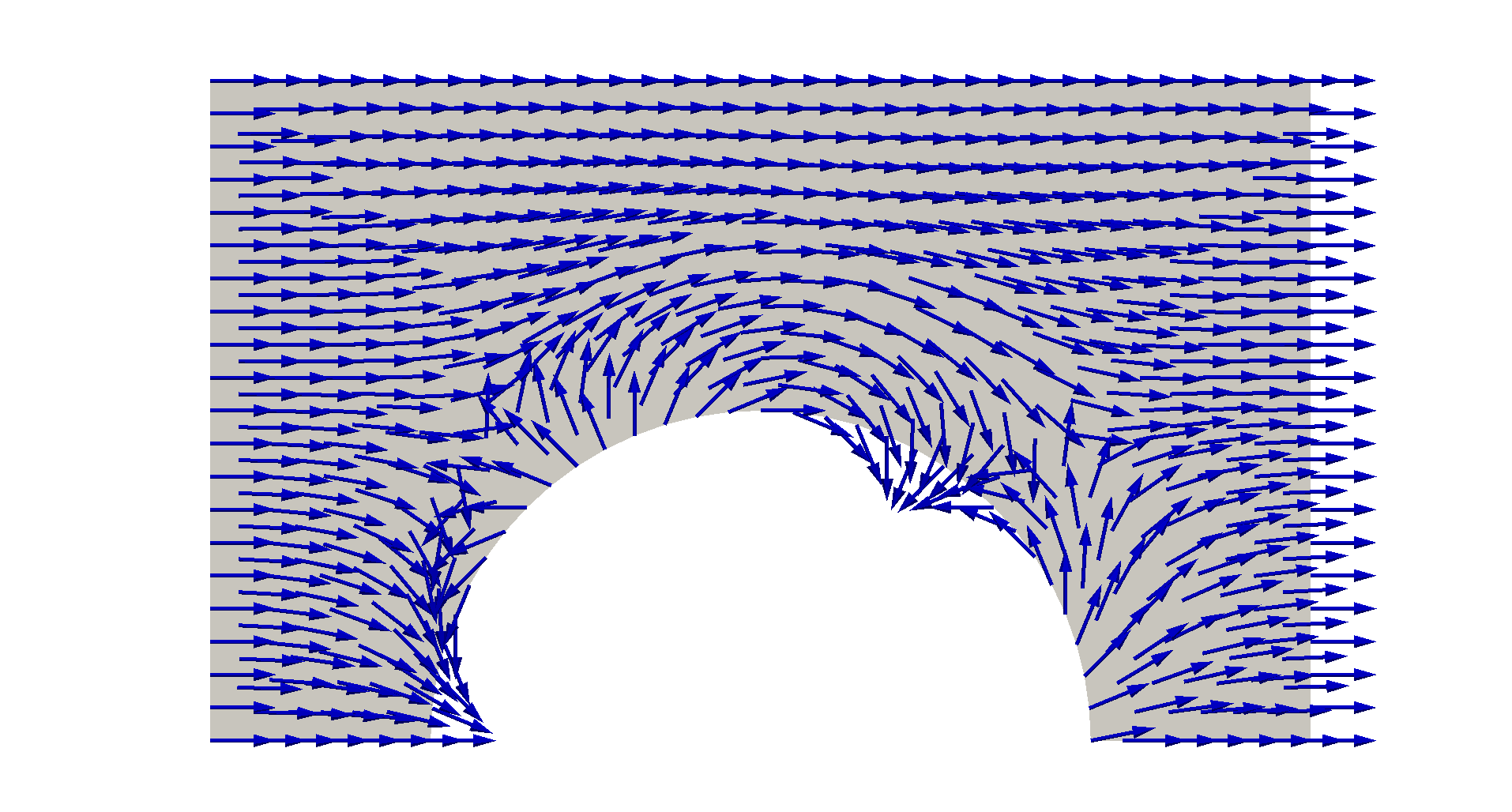} \\
    \includegraphics[width=.45\linewidth,trim=150 0 150 0, clip]{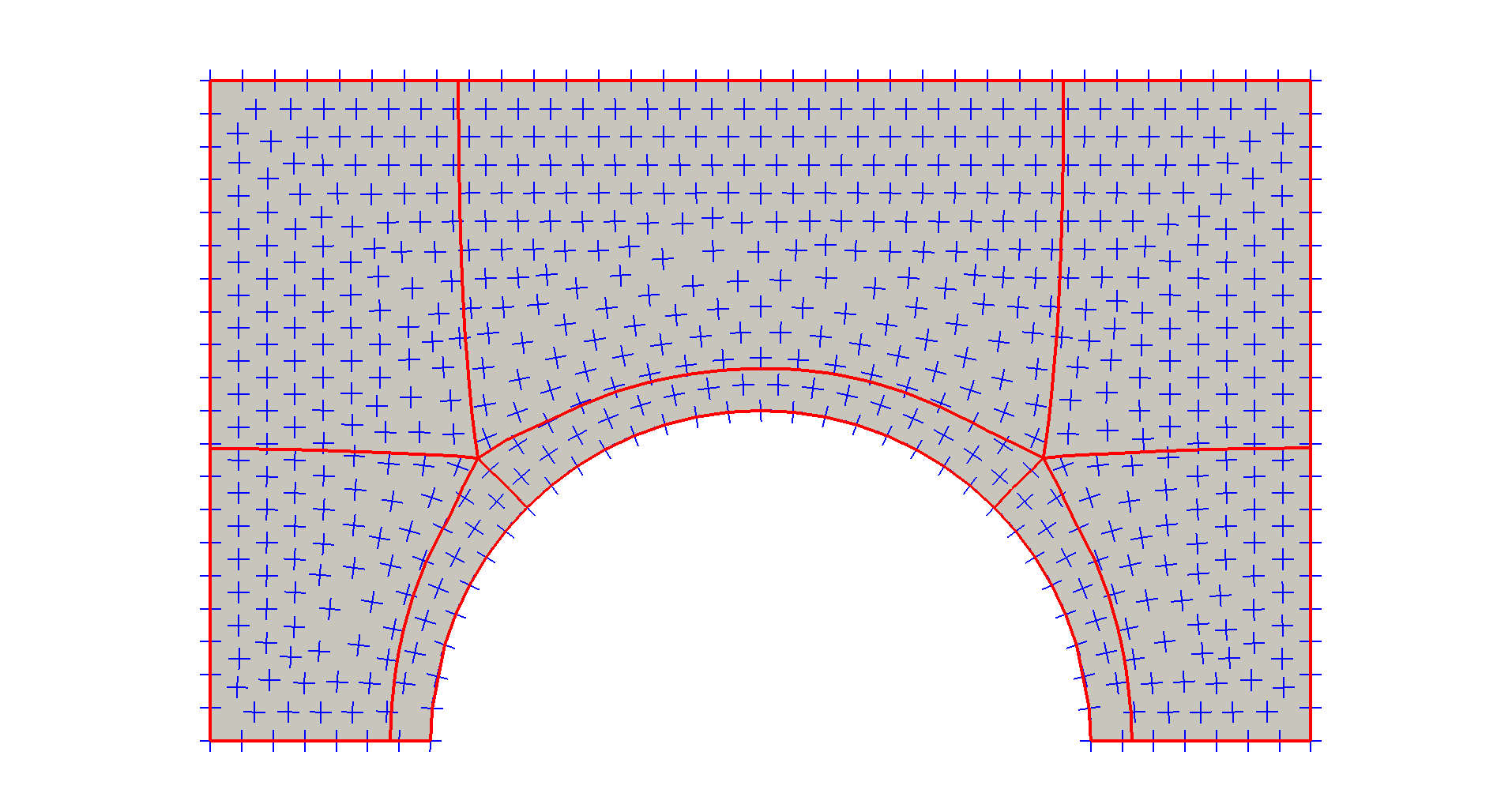} \qquad
    \includegraphics[width=.45\linewidth,trim=150 0 150 0, clip]{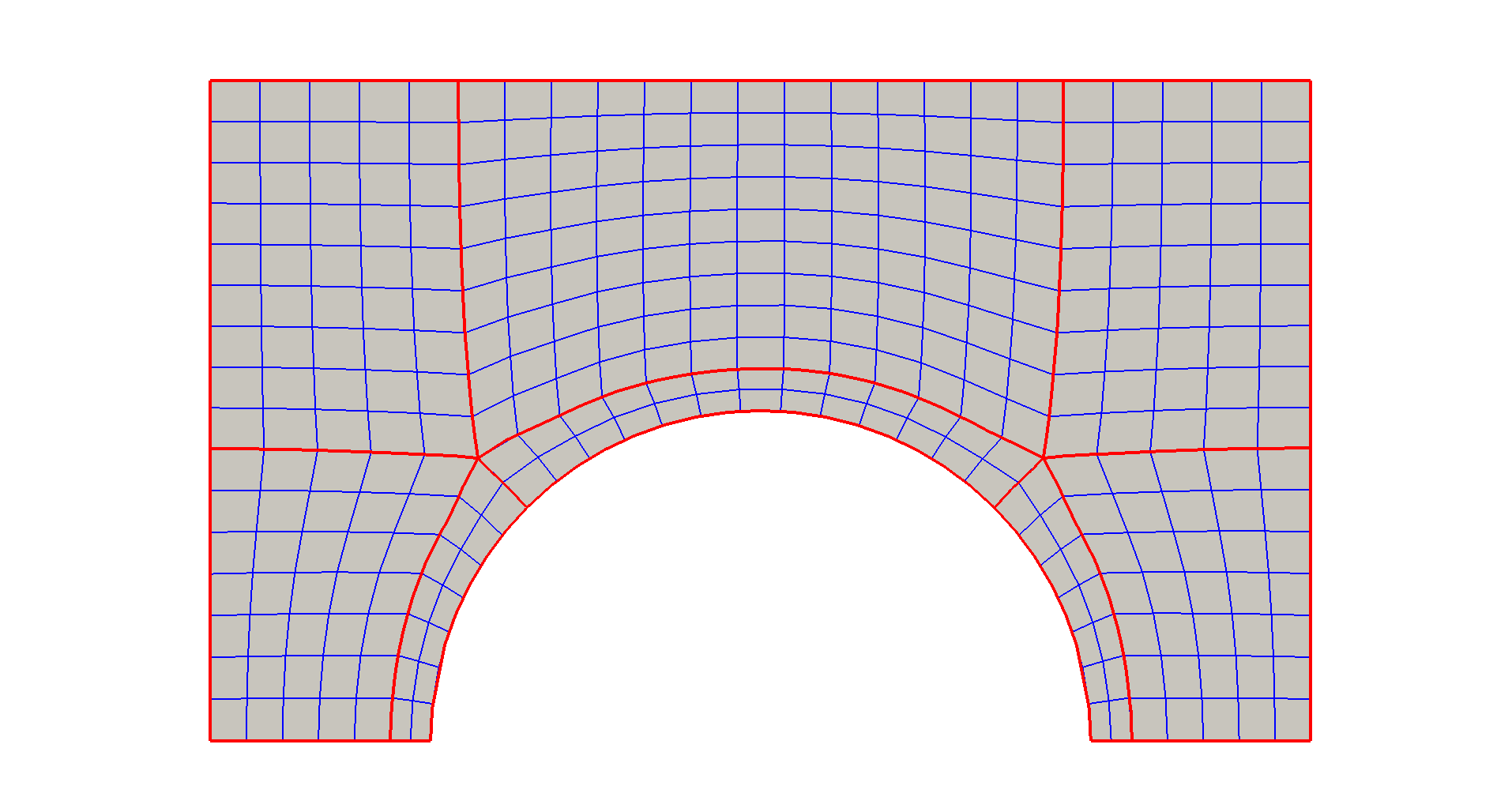}
  \end{center}
  \caption{{\bf Overview of the cross field based meshing method.} {\bf (top left)} The domain is shown with outward pointing normals. {\bf (top right)} A 4-aligned boundary condition is assigned (see \cref{def:boundary-aligned}) and a representation vector field is found by approximately minimizing the Ginzburg-Landau energy. {\bf (bottom left)} The representation field is mapped to a smooth cross field and separatrices of the cross field are traced to partition the domain into a quad layout. {\bf (bottom right)} A regular mesh is mapped into each region.}\label{fig:overview}
\end{figure}

Various cross field based quad meshing techniques have been proposed (see \cref{sec:cross_meshing}). One basic procedure is illustrated in \cref{fig:overview}. The top left panel shows a domain, $D$, with outward normal boundary vectors indicated. In the top right panel, a complex-valued ``representation'' field that minimizes the Dirichlet energy subject to a boundary condition and a unit norm constraint is found. In the bottom left panel, the representation field is converted into a boundary-aligned, smooth cross field. Separatrices of the cross field are computed which partition the domain into four sided regions, referred to as a quad layout. Finally, the bottom right panel shows a regular mesh mapped into each of the four-sided regions. Variations on this procedure have been employed in \cite{kowalski_pde_2013,myles_robust_2014}. In this paper we investigate the mathematics of cross field generation and cross field guided quad meshing via streamline tracing.

\subsection{Contributions}
We make the observation that cross field design is related to the Ginzburg-Landau theory from mathematical physics. In particular, many of the computational methods currently used for cross field design attempt to minimize an energy that is, or is very similar to, a discrete Ginzburg-Landau energy; see, for example, \cite{beaufort_computing_2017,bommes_mixed-integer_2009,jakob_instant_2015,jiang_frame_2015,kowalski_pde_2013,palacios_rotational_2007,ray_n-symmetry_2008,ray_geometry-aware_2009}.

We make this correspondence precise and use results from the Ginzburg-Landau theory to prove, in \cref{thm:quad-partitioning,thm:partitioning_termination}, that the separatrices of a harmonic cross field with indices $\leq 1/4$ partition a domain into four-sided regions, possibly with a $T$-junction if a limit cycle is present. The corners of these four-sided regions are located at the singularities of the harmonic cross field. The proof of this result depends on an asymptotic analysis of the cross field near singularities, which uses results from the Ginzburg-Landau theory; see \cref{sec:behavior}. Because we consider domains with corners, we also make precise a notion of boundary singularities (see \cref{def:boundary_singularity}) and study their properties. We show, in \cref{lem:sector_index}, the structure of a boundary singularity with these definitions is consistent with the structure of a cross field near an (interior) singularity. These properties of cross field singularities have been observed before and are assumed ubiquitously in the cross field literature. Our contribution is the proof that singularities with such properties are obtained in cross fields which approximate local minimizers of the Ginzburg-Landau energy (\cref{sec:topology}). In addition, the renormalization of the Ginzburg-Landau energy derived by Bethuel et al. \cite{bethuel_ginzburg-landau_1994} (\cref{sec:renormalized}) guarantees that the singularities of such a cross field will be isloated and have index $\pm1/4$. Based on these results, we develop an algorithm (\cref{alg:partitioning}) that uses a harmonic cross field to partition a domain into four-sided regions, possibly with T-junctions. While similar streamline tracing methods have been employed before \cite{kowalski_pde_2013,myles_robust_2014}, the steps of our algorithm carefully leverage the theorems in \cref{sec:topology} to guarantee a partition of four-sided regions with a bound on the number of T-junctions.

To compute a harmonic cross field, we approximately minimize the Ginzburg-Landau energy and find a suitable representation field by using a generalization of the MBO algorithm (\cref{alg:MBO}) \cite{merriman_motion_1994}. This results in a harmonic cross field with isolated singularities of degree $\pm 1/4$. This can then be used as input for \cref{alg:partitioning} to find a partition with four-sided regions. The partition can then be used to generate a high-quality quad mesh on the domain with standard techniques.

Finally, we use our cross field design algorithm and partitioning theorem to design quad meshes for several example geometries; see \cref{fig:ExampleMeshes} for examples. Throughout, we also include figures, generated using the algorithms described in this paper, to illustrate the main ideas.

\section{Previous Work}\label{sec:review}
Here we present a selection of works that have contributed to the understanding of fundamental issues in the cross field design and meshing problems. For brevity, we only review the work that is most relevant to this paper, however excellent literature reviews are available on cross field and directional field design \cite{vaxman_directional_2016} and recent approaches to quad meshing \cite{bommes_quad_2012}.

\subsection{Cross Field Design}
Designing globally smooth direction fields has proven to be a challenging task. Such approaches are typically formulated as an energy minimization problem, where the energy functional is a discrete approximation of the Dirichlet energy of the cross field. Papers which use the $N$-RoSy representation, introduced in \cite{palacios_rotational_2007}, minimize the Dirichlet energy of a vector valued field \cite{beaufort_computing_2017,jiang_frame_2015,knoppel_globally_2013,kowalski_pde_2013,ray_geometry-aware_2009}. Directional information is typically removed from the problem by requiring a pointwise unit-norm constraint. Since the na\"ive formulation is ill-posed, researchers have proceeded by adding a penalty term to the energy in place of directly enforcing the pointwise unit norm constraint. This penalty term is typically nonlinear and thus leads to an inefficient solution. Kowalski et al. \cite{kowalski_pde_2013} add the linear constraint term $u_{n-1} \cdot u_n = 1$ to the objective function via Lagrange multipliers. At each iteration the solution is normalized so that the solution to the next iteration is penalized wherever it is not of unit norm. Jiang et al. \cite{jiang_frame_2015} add the non-linear term $( |u|^2 - 1)^2$ to their objective and use a nonlinear solver to minimize the cross field energy with respect to a custom Riemannian metric. Beaufort et al. \cite{beaufort_computing_2017} take direction from the Ginzburg-Landau theory and also add the nonlinear term $( |u|^2 - 1)^2$ to the objective. They then minimize the energy via a Newton method.

Rather than applying a pointwise unit norm constraint to remove directional information, Kn\"oppel et al. \cite{knoppel_globally_2013} use a modified energy and enforce an $L^2$ norm constraint of the field. The minimizer is then given by the principle eigenfunction of the Laplacian. Unfortunately, this method is not directly applicable when aligning a cross field to the boundary as the Dirichlet boundary condition is inhomogeneous. Bommes et al. \cite{bommes_mixed-integer_2009} use an angle based representation for crosses and use integer variables called \emph{period jumps}, introduced in \cite{li_representing_2006}, to encode rotations between crosses on adjacent nodes. This leads to a mixed-integer optimization problem. Jakob et al. \cite{jakob_instant_2015} use a multigrid approach and perform per-node local smoothing iterations on the cross field, where each cross vector is renormalized after smoothing. In \cref{sec:comparison}, we compare our cross field design method with a nonlinear method that directly minimizes the Dirichlet energy with a nonlinear penalty term and with instant meshes \cite{jakob_instant_2015}.

\begin{figure}
  \begin{center}
    \includegraphics[width=.32\linewidth, trim=0 -30 -40 0, clip]{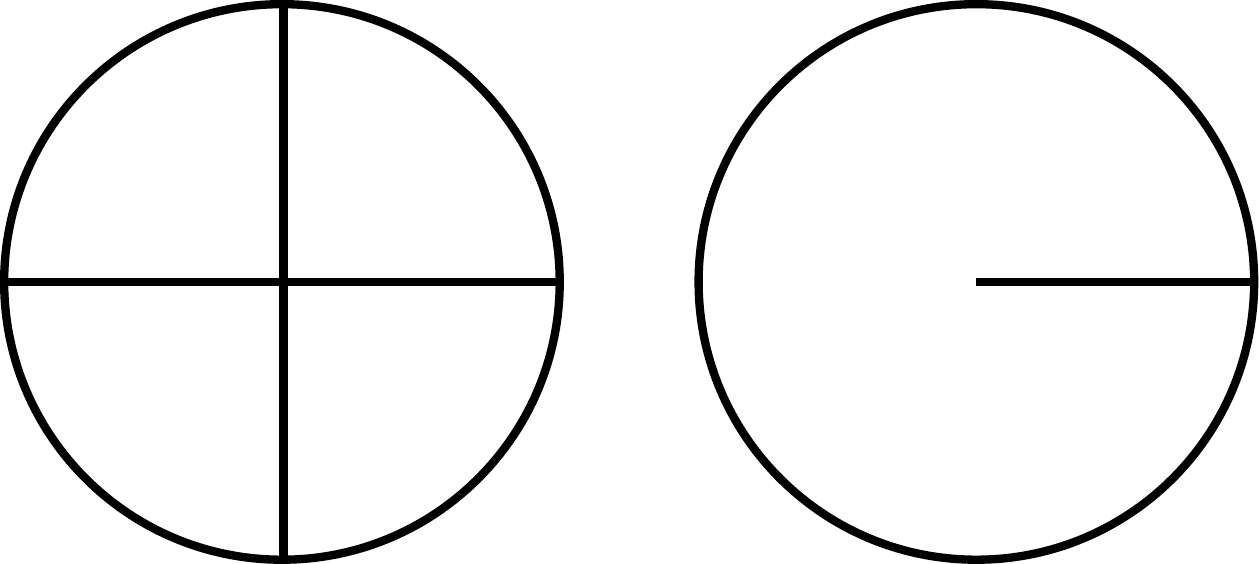}  \qquad \qquad
    \includegraphics[width=.32\linewidth, trim=0 -30 -40 0, clip]{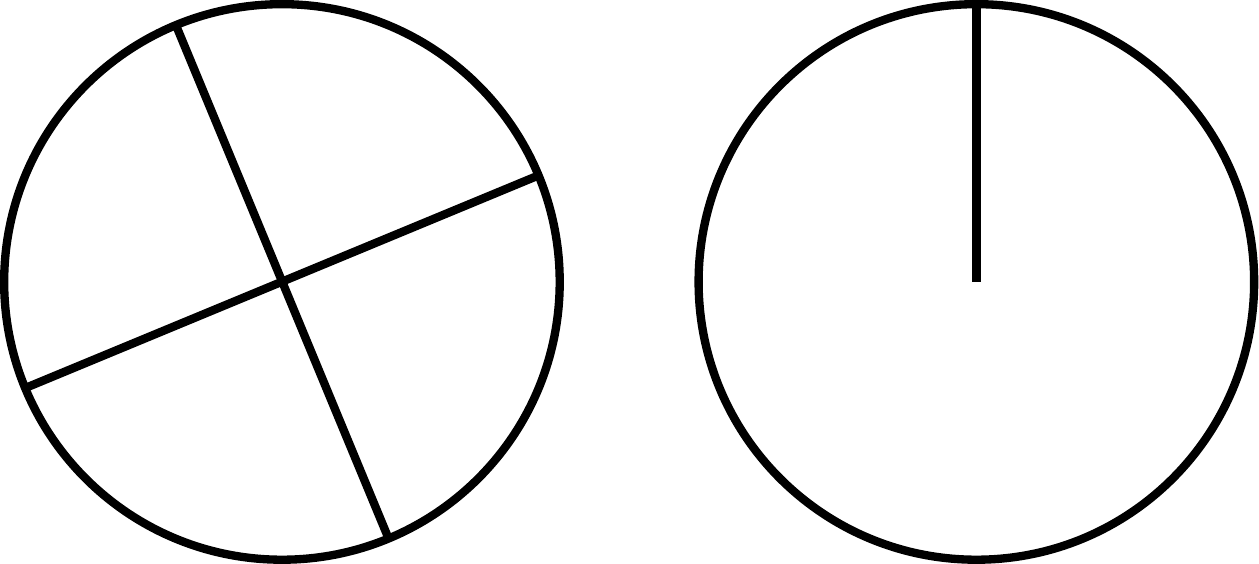} \\
    \includegraphics[width=.32\linewidth, trim=0 -30 -40 0, clip]{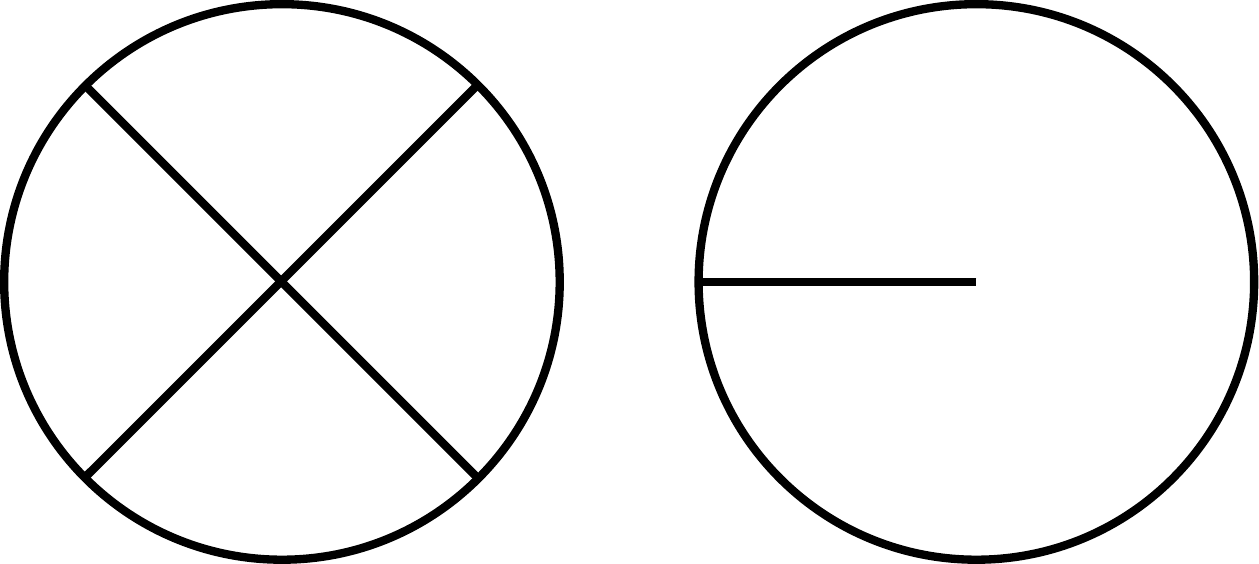}  \qquad \qquad
    \includegraphics[width=.32\linewidth, trim=0 -30 -40 0, clip]{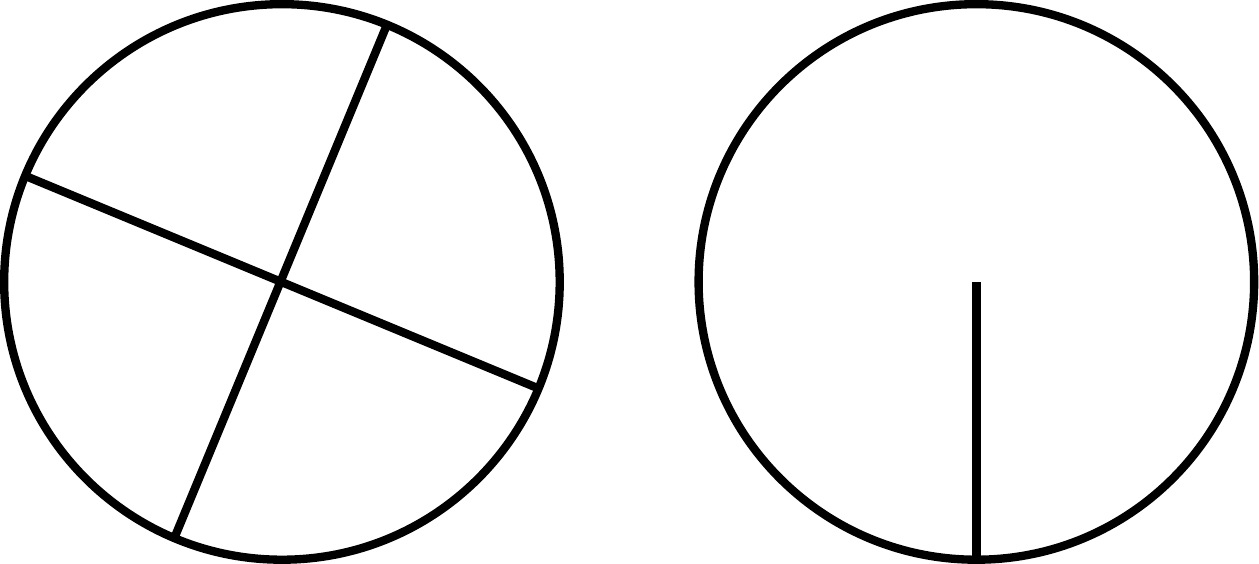}
  \end{center}
\caption{{\bf The representation map for cross fields. } The left figure for each pair represents an element, $c$, of $C_4$ (see \cref{def:n-direction}), visualized by a line from zero to the representative numbers on the unit circle. The right figure shows the representation as a line from zero to $[c]^4$. This representation is equivalent to the $N$-RoSy representation, only expressed in complex numbers.}\label{fig:nrosy}
\end{figure}

\subsection{Cross Field Guided Meshing} \label{sec:cross_meshing}

Most cross field driven quad meshing techniques fall into two categories: streamline tracing techniques and parameterization techniques. While parameterization techniques have found success in some commercial applications and received more attention in early research \cite{bommes_mixed-integer_2009,bommes_integer-grid_2013,kalberer_quadcover_2007}, we choose to focus on streamline tracing methods because of their direct connection with the topology of the cross field. We refer  the reader interested in parameterization based meshing techniques to \cite{bommes_quad_2012}.

Streamline tracing methods decompose the geometry into four-sided regions by tracing separatrices of the cross field. Alliez et al. \cite{alliez_anisotropic_2003} produced quad dominant meshes by tracing streamlines in principle curvature directions and filling in flat areas with triangle elements. Kowalski et al. \cite{kowalski_pde_2013} implemented a streamline tracing algorithm in flat 2D by tracing streamlines starting at singularities and continuing until the streamline reaches either the boundary of the domain or another singularity. This approach works well on most domains but can fail on geometries with limit cycles. Na\"ive streamline tracing algorithms have the disadvantage that parallel streamlines can intersect due to numerical inaccuracies. Ray and Sokolov \cite{ray_robust_2014} and Myles and Zorin \cite{myles_robust_2014} independently developed robust streamline tracing methods that prevent such errors. Further, Myles and Zorin \cite{myles_robust_2014} developed a robust algorithm to partition a 2-manifold into four-sided regions and demonstrated its robustness on a database of 100 objects. They prove that for a cross field whose singularities satisfy certain assumptions, the regions traced by their algorithm are guaranteed to have four sides. In \cref{sec:topology} we show that the singularities of cross fields that are local minimizers of the Ginzburg-Landau energy will satisfy such assumptions. Campen et al. \cite{campen_quantized_2015} use a hybrid approach where streamlines are traced on the surface to form a T-mesh and then integer values are assigned to the edges of the T-mesh to compute a valid conformal parameterization.

\section{Basic Definitions}\label{sec:definitions}
Here we recall some background material and establish nomenclature for the paper. We briefly review concepts in meshing, but provide more detailed and precise definitions for concepts related to cross fields. For an introduction to quad meshing, see \cite{bommes_quad_2012,mario_botsch_polygon_2010,owen_survey_1998}.

\subsection{Quad Meshing}\label{sec:quad-meshing-def}
A 2D \emph{quad mesh} is a is a finite collection of \emph{nodes}, \emph{edges}, and \emph{faces} where each face is is bounded by four edges. Here edges are non-intersecting straight lines between nodes, and faces do not overlap. Nodes can be seen as points in $\mathbb{R}^2$. The \emph{valence} of a node is the number of edges connected to it. An internal node of a quad mesh is said to be \emph{irregular} if its valence is not four, and a boundary node if its valence is not three. A \emph{quad layout} is similar to a quad mesh, but in place of straight edges piecewise smooth curves are allowed. A mesh or quad layout is \emph{conforming} if any two faces share at most a single vertex or an entire edge. A \emph{T-junction} is a feature in a non-conforming mesh where at least two faces share only part of an edge; see \cref{fig:LimitCycle} (right). A quad mesh has an underlying structure called a \emph{base complex} or \emph{skeleton} which is the coarsest quad layout obtained from a subset of the mesh that has the same boundary and set of irregular nodes. For example, the skeleton of the quad meshes in \cref{fig:overview} (bottom right), \cref{fig:LocalAnalysis} (bottom right), and \cref{fig:ExampleMeshes} (right) are highlighted in red.

\subsection{Cross Fields}\label{sec:cross-field-def}
\begin{definition}\label{def:n-direction}
Let $\mathbb{T} = \{ z \in \mathbb{C}\colon |z| = 1 \}$ be the circle group with group operation given by complex multiplication and let $\rho(N)$ be the set of the $N$th roots of unity. An \emph{$N$-direction} is an element of $C_N = \mathbb{T}/\rho(N)$. A 4-direction is also called a \emph{cross}.
\end{definition}

\begin{definition}
There is a canonical group isomorphism $R \colon C_N \rightarrow \mathbb{T}$ called the \emph{representation map} given by \[R([c]) = c^N,\] where $c$ is any representative member of the equivalence class $[c] \in C_N$. The \emph{inverse representation map} $R^{-1} \colon \mathbb{T} \rightarrow C_N$ is given by
\[
R^{-1}(u) = \left[\sqrt[N]{u}\right],
\]
i.e., by choosing the equivalence class of the $N$th roots of $u$.
\end{definition}

\begin{assumption} \label{as:dom}
  Throughout this paper we will assume that $D$ is a bounded, simply connected domain in $\mathbb{R}^2$ with piecewise-smooth boundary.
\end{assumption}

\begin{definition}
  An \emph{$N$-direction field} on $D$ is a map $f \colon D \rightarrow C_N \cup \{ 0\}$ where only finitely many points are mapped to zero. The map $R\circ f \colon D \rightarrow \mathbb{T} \cup \{ 0\}$ is called the \emph{representation field for $f$}.
\end{definition}

An $N$-direction, $[c] \in C_N$, is often visualized as an unordered set  of $N$ unit vectors, $\{v_0,v_1,\ldots,v_{N-1}\}$, each one pointing from the origin towards the $N$ representative elements of the class $[c]$. Clearly, an $N$-direction has a rotational symmetry of $2\pi/N$. An example of a cross field is illustrated in \cref{fig:overview}(bottom left).

\begin{definition}
Let $N(f)$ be the zero set of an $N$-direction field, $f$ on $D$. Then $f$ is \emph{smooth} if $R \circ f \colon D \setminus N(f) \rightarrow \mathbb{T}$ is a smooth map. Similarly, $f$ is \emph{harmonic} if $R \circ f$ is harmonic on $D \setminus N(f)$, i.e., satisfies $\Delta (R \circ f) = 0$.
\end{definition}

\begin{definition}\label{def:boundary-aligned}
  Let $\nu = (\nu_1,\nu_2)$ be the outward pointing unit normal vector on $\partial D$, and let $c_\nu = (\nu_1 + i \nu_2)$. If a map $g \colon  D \rightarrow \mathbb{T}$ is such that $g = c_\nu^N$ for every smooth point $p \in\partial D$, then  $g$ is said to be \emph{$N$-aligned} to the boundary of $D$. If $f\colon D \to C_N$ is an $N$-direction field on $D$ such that $R \circ f|_{\partial D} = c_\nu^N$, then $f$ is said to be \emph{boundary-aligned} to $D$.
\end{definition}

\begin{definition}
The \emph{Brouwer degree} of a map $g \colon \partial D \rightarrow \mathbb{T}$, written $d = deg(g,\partial D)$, is the winding number of the curve $g(\partial D)$ around the origin in the complex plane.
\end{definition}

\begin{definition} \label{def:IntInd}
Let $\gamma \colon [0,1] \rightarrow D$ be a simple closed curve circulating a single zero of the $N$-direction field $f$ at an interior point $p$ of $D$. Then the value \[\Index(p) := \frac{\arg{R(f(\gamma(1)))} - \arg{R(f(\gamma(0)))}}{2\pi N}\] is the \emph{index} of $p$. The zero $p$ is called a \emph{singularity} of the $N$-direction field if its index is not zero. The index of a singularity of an $N$-direction field is  $1/N$ times the index of the corresponding singularity of the representation field.
\end{definition}

\begin{definition}\label{def:boundary_singularity}
  Let $\partial D$ be piecewise smooth with corners $\{c_1, \ldots, c_k\}$. For a corner, $c_i$, let $\gamma \colon [0,1] \rightarrow \overline{D}$ be a simple closed curve such that $\gamma(0) = c_i = \gamma(1)$, and $y'(0)$ and $y'(1)$ are tangent to $\partial D$ at $c$, and containing no other singularity. Let
\[\Delta \arg(c_i) = \lim_{s \downarrow 0} \ \arg{R(f(\gamma(1-s)))} - \arg{R(f(\gamma(s)))}, \]
and let $\interior{c_i}$ be the interior angle at $c_i$. The  \emph{index of corner $c_i$}  is defined
\begin{equation}\label{eq:boundary_index}
  \Index(c_i) \coloneqq \frac{\pi - \interior(c_i) - \frac{1}{N}\Delta \arg(c_i)}{2\pi}.
\end{equation}
The corner $c_i$ is said to be a \emph{boundary singularity} if its index is non-zero.
\end{definition}

\begin{figure}
\begin{center}
  \includegraphics[width=\linewidth]{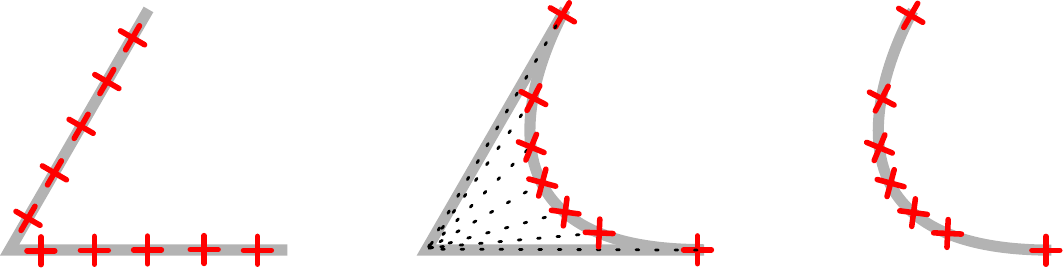}
\end{center}
\caption{{\bf (left)} The cross field and boundary singularity at a sharp corner with index $1/4$. The cross turns $90^{\circ}$ counterclockwise with respect to the direction of the curve, all at a single point. {\bf (center)} Transformation between a sharp corner and a smooth curve, the crosses turn smoothly with the angle of inclination. {\bf (right)} The cross field along a smooth curve turning $90^{\circ}$ counterclockwise with respect to the direction of the curve.}
\label{fig:CornerSmoothing}
\end{figure}
The index of a boundary singularity can be interpreted as the number of $1/N$ turns in the counterclockwise direction that the $N$-direction makes in relation to the boundary. It is akin to the concept of a \emph{turning number} from \cite{ray_n-symmetry_2008}, except that it happens at a single point (see \cref{fig:CornerSmoothing}).

\subsubsection{Streamlines}\label{sec:streamline-def}

A characteristic trait of a smooth vector field is that away from zeros it locally foliates the space, meaning that streamlines of the vector field partition the space into disjoint curves. Streamlines can be similarly defined for an $N$-direction field except that these streamlines can intersect themselves and each other at angles of $2\pi/N$, as  precisely describe below.

For the representation field $R\circ f \colon D \rightarrow \mathbb{T}$ of the $N$-direction field $f$ on $D$, the map $\Lambda_N \colon D \rightarrow \mathbb{T}$ defined by $\Lambda_N = \sqrt[N]{R(f)}$ is a multi-valued map on $D$. We can make a branch cut from each of the singularities of $R(f)$ to the boundary, and define a Riemann surface, $\mathcal{R}$ for this choice of branch cuts.
\begin{definition}
The \emph{covering vector field} on $D$ for the given $N$-direction field $f$ and choice of branch cuts of $\Lambda_N$ is the continuous vector field
$\widehat{\Lambda}_N \colon \mathcal{R} \rightarrow \mathbb{T}$ defined by assigning the vector pointing from the origin to $\Lambda_N(p)$ at each point $p$ of $\mathcal{R}$.
\end{definition}

The  observation that a cross field has a corresponding continuous vector field on a 4-covering of the domain of definition is accredited to K\"alberer et al. \cite{kalberer_quadcover_2007}.

\begin{figure}\label{fig:n-covering}
\begin{center}
\includegraphics[width=.45\linewidth]{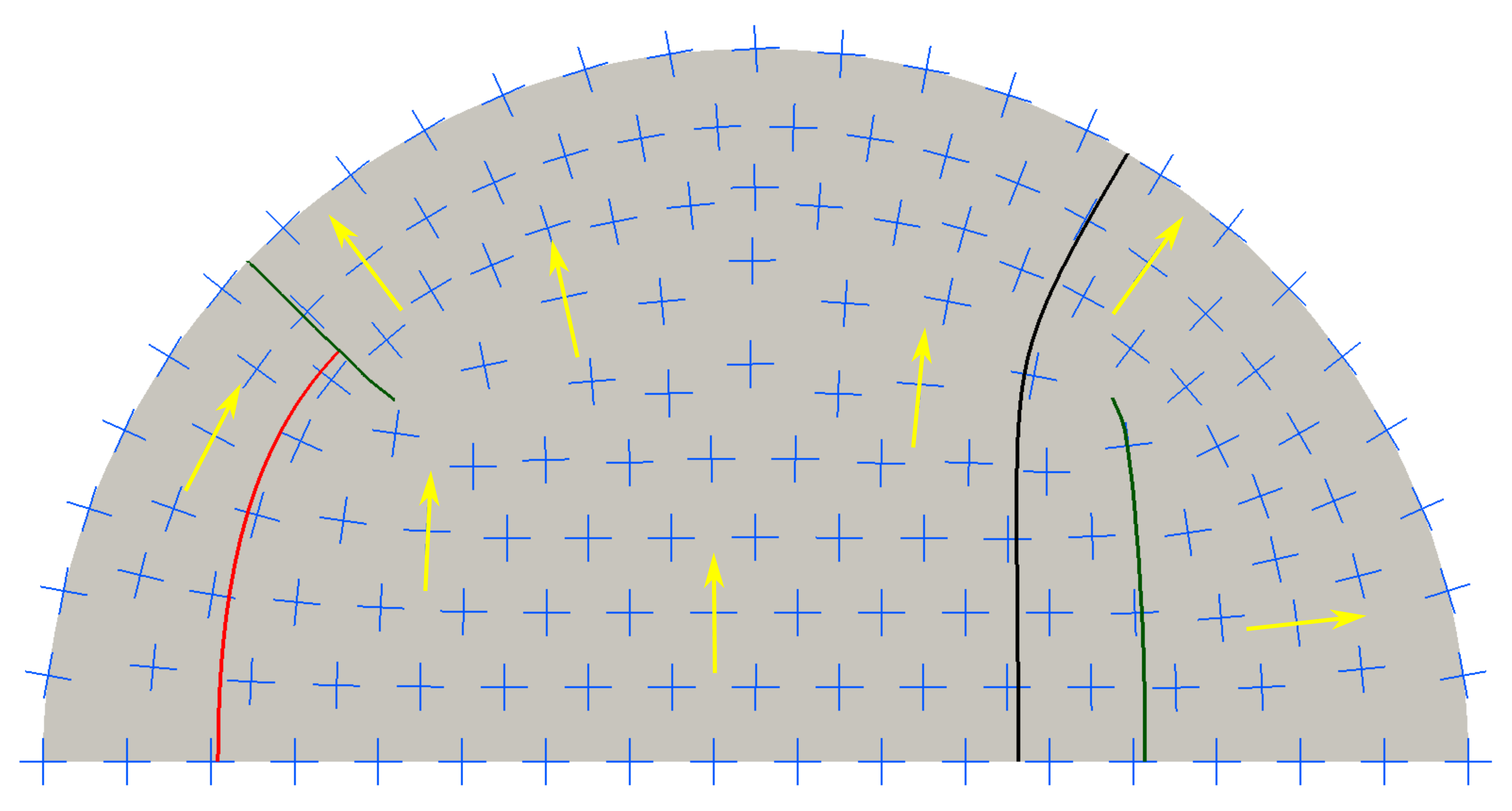} \qquad
\includegraphics[width=.45\linewidth]{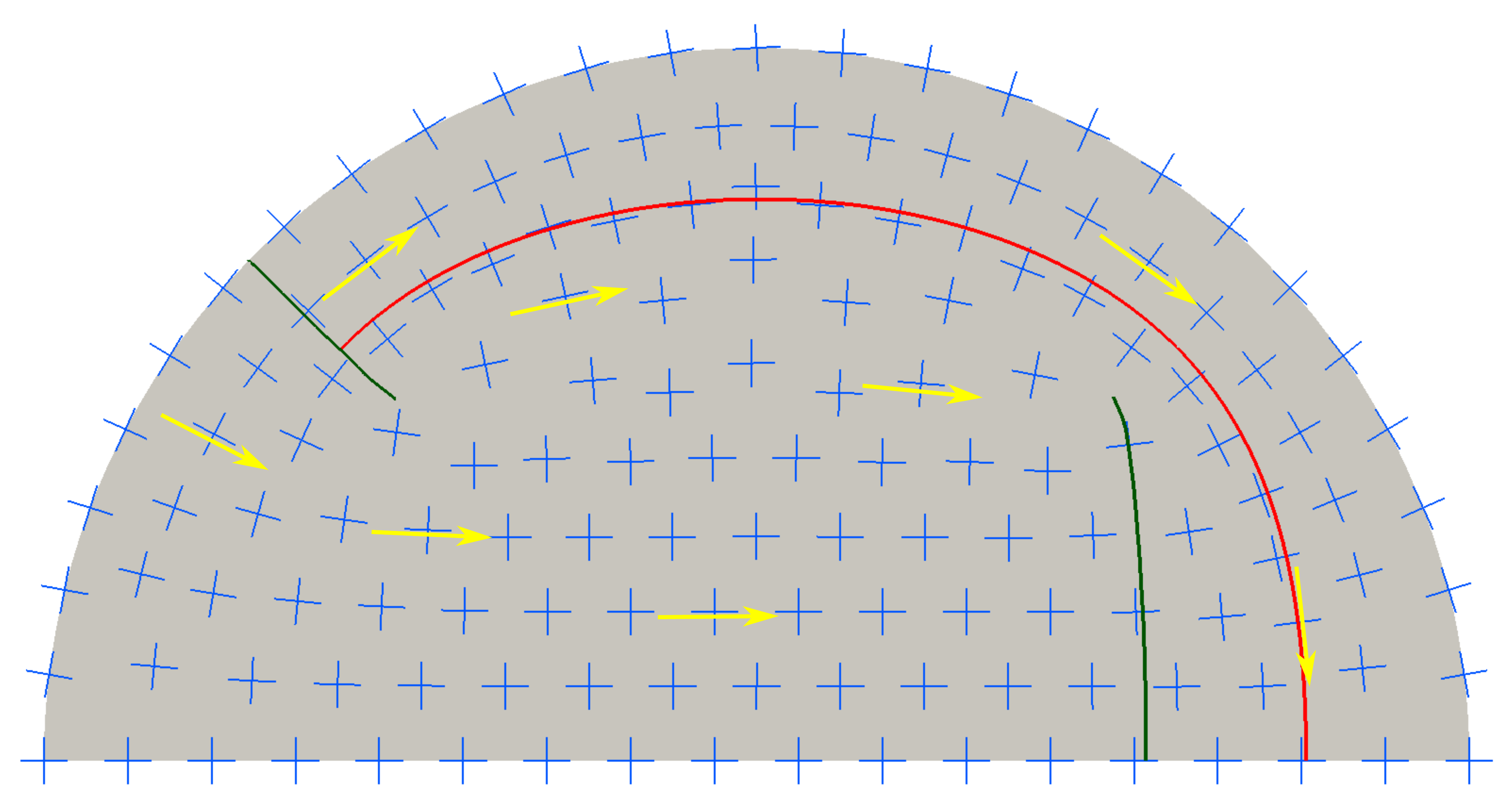} \\
\includegraphics[width=.45\linewidth]{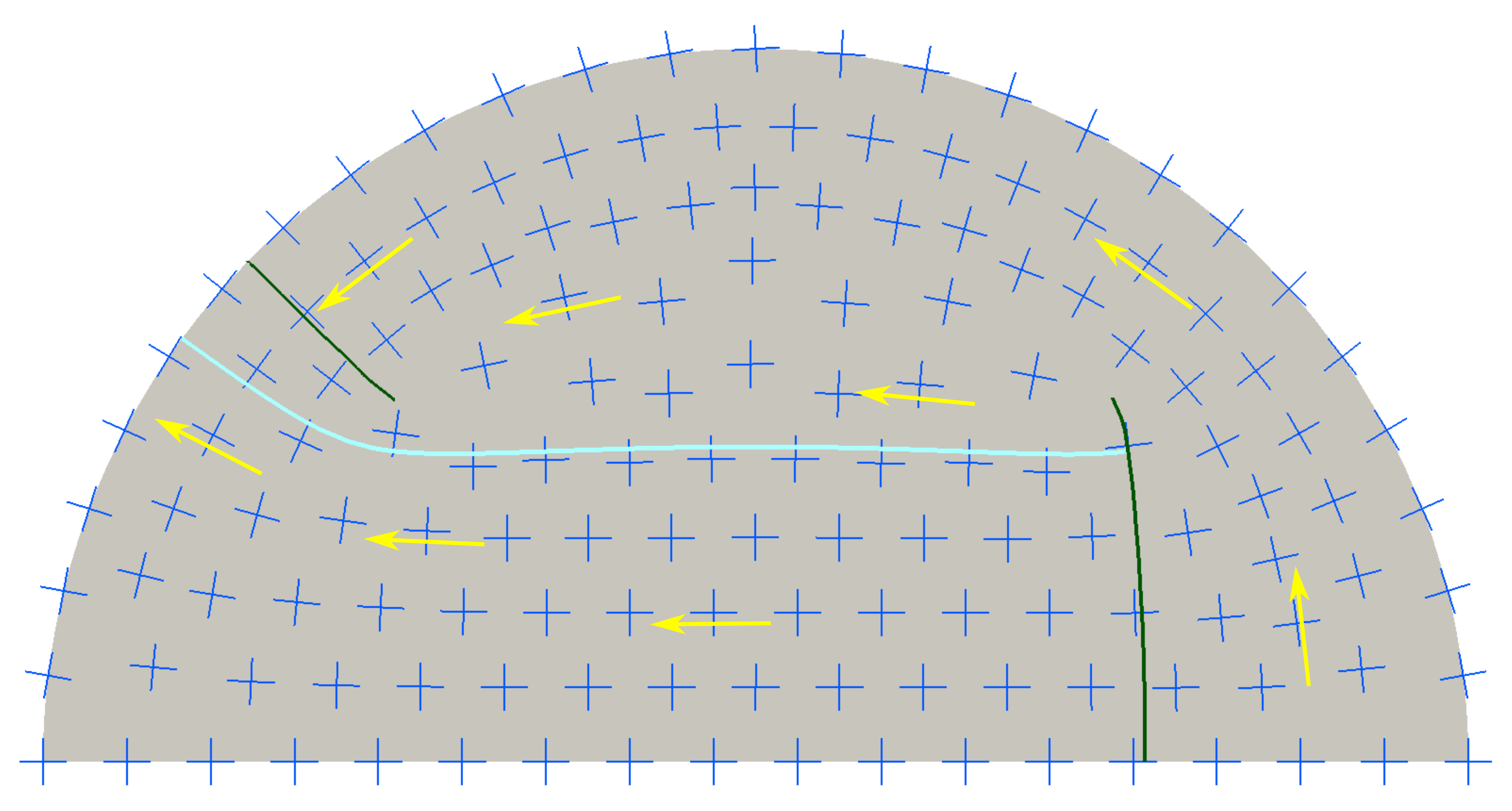} \qquad
\includegraphics[width=.45\linewidth]{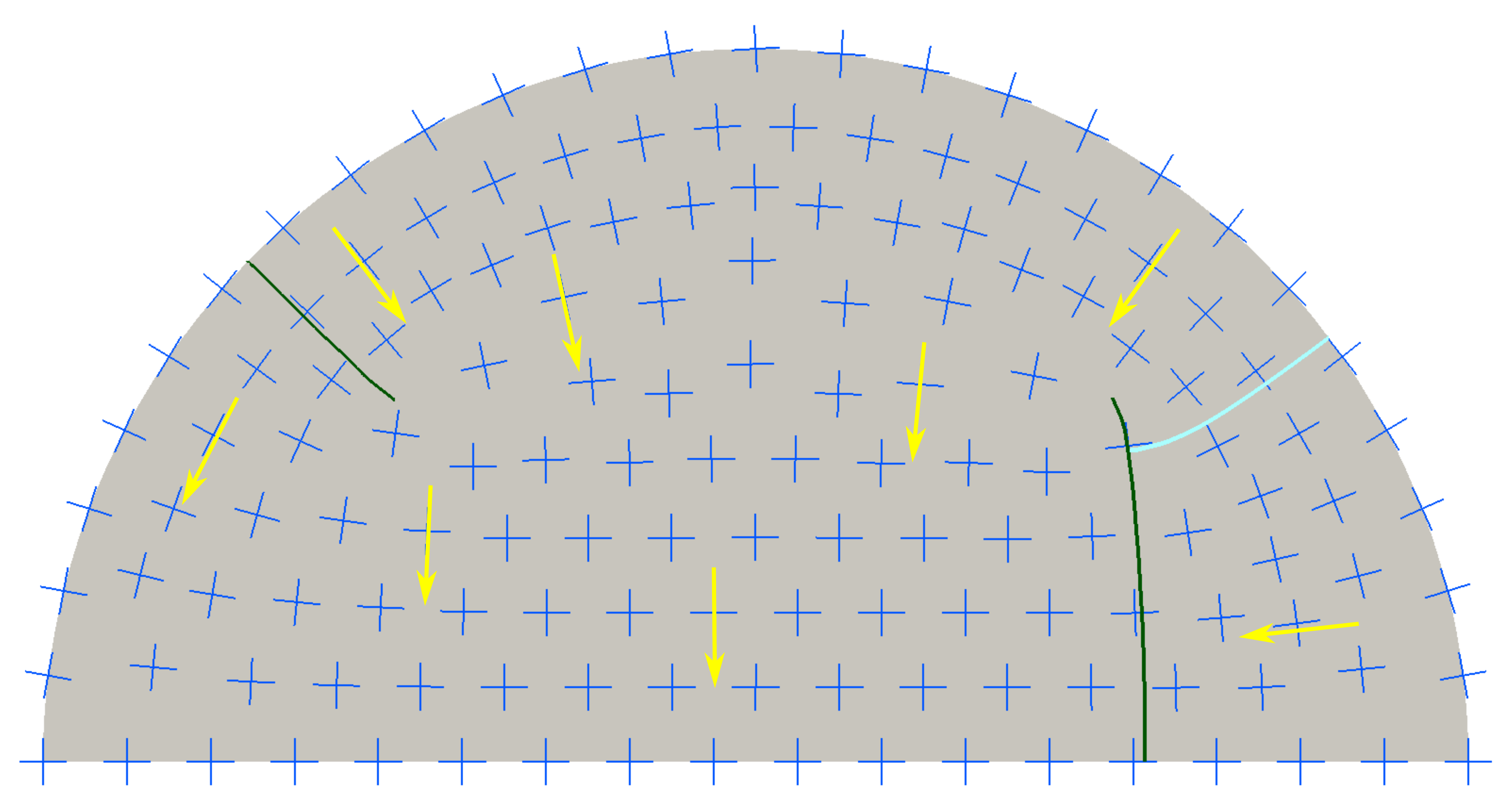}
\end{center}
\caption{The covering vector field and the four sheets of the Riemann surface for the half disk. The red and cyan streamlines both cross a green branch cut, causing the streamline to change sheets. The red, black, and cyan streamlines of the covering vector field project onto the base domain as streamlines of the cross field.}
\end{figure}

\begin{definition}\label{def:streamline}
  Let $\gamma_N \colon [a,b] \rightarrow \mathcal{R}$ be a streamline of $\widehat{\Lambda}_N$, satisfying $\frac{d\gamma_N}{dt} = \widehat{\Lambda}_N(\gamma_N(t))$ for $t \in [a,b]$. Let $\pi \colon \mathcal{R} \rightarrow D$ be the canonical projection from $\mathcal{R}$ to $D$. Then the function $\gamma \colon [a,b] \rightarrow D$ given by
  \[\gamma = \pi \circ \gamma_N\]
  is a \emph{streamline} of the $N$-direction field $f$. A \emph{separatrix} of an $N$-direction field is a streamline that begins or ends at a singularity.
\end{definition}

\Cref{fig:n-covering} shows an illustration of the covering  field for a cross field on a half disk. The two green lines are branch cuts that join a singularity to a point on the boundary.  Several example streamlines are drawn. For example, the red streamline in the top left panel is continued in the top right panel.

No two streamlines of $\widehat{\Lambda}_N$ can intersect each other on the Riemann surface $\mathcal{R}$. Thus, streamlines of the $N$-direction field intersect at a point $p \in D$ only if $\pi^{-1}(\gamma)|_p$ is on two different sheets of $\mathcal{R}$. We conclude that streamlines of an $N$-direction field can only intersect themselves and each other at integer multiples of the angle $2\pi/N$; see, {\it e.g.}, \cref{fig:ExampleMeshes} (bottom center).

\subsection{Treating domains with corners} \label{sec:domains-with-corners}

The theorems of the Ginzburg-Landau theory in \cref{sec:ginzburg-landau} require a smooth boundary, however many domains of interest for the meshing problem have corners. Throughout this paper, we will assume that $D$ is a bounded, simply connected domain in $\mathbb{R}^2$ with piecewise smooth boundary (\cref{as:dom}). In order to make use of the Ginzburg-Landau theory, we define a smooth \emph{auxiliary domain}, $\widetilde{D} \subseteq D$, as follows. We smooth each corner, $c$, of $D$ with a B\'ezier curve, $\partial \widetilde{D}_c$, with three control points, one on the corner, and two on the boundary at a distance $\varepsilon \ll 1$ from the corner. The boundary condition along $\partial \widetilde{D}_c$ is then assigned by linearly interpolating the cross with the angle of inclination above the corner; see \cref{fig:CornerSmoothing}. The smooth domain $\widetilde{D}$ will be considered in \cref{sec:ginzburg-landau}.

 If $\widetilde{f}$ is a cross field on $\widetilde{D}$, then a cross field $f$ on $D$ can be defined as follows: set $f|_{\widetilde{D}} = \widetilde{f}$, and for each corner $c$ of $D$, assign the crosses on $D \setminus \widetilde{D}$ by propagating the crosses from $\partial \widetilde{D}_c$ constant along each line segment from $c$ to $\partial \widetilde{D}_c$. The cross field $f$ is called the \emph{extension} of $\widetilde{f}$ to $D$. Note that if $\widetilde{f}$ is continuous then $f$ is also continuous. It is important to note that defining the auxiliary domain $\widetilde{D}$ consequently extending $\widetilde{f}$ to $D$ is only important for theoretical considerations, to be able to directly apply the theorems from \cite{bethuel_ginzburg-landau_1994}. In practice there is no need to perform such operations to generate a cross field on a discrete representation of the domain.

\section{Correspondence Between Cross Field Design and the Ginzburg-Landau Theory} \label{sec:ginzburg-landau}

The Ginzburg-Landau equation and associated energy functional are classically used to describe the physics of superconductors and superfluids. In this section, we describe the connection between a commonly used formulation of the cross field design problem and the Ginzburg-Landau theory. In each subsection we review different aspects of the Ginzburg-Landau theory in \cite{bethuel_ginzburg-landau_1994} and apply the results to the cross field design problem.

As mentioned in \cref{sec:review}, the goal of the cross field design problem is often to find a harmonic cross field. This is sometimes formulated as an energy minimization problem with a feasibility constraint. Approaches using the $N$-RoSy representation for a cross field, $f$, use the Dirichlet energy for the representation map, given by
\begin{equation} \label{eq:energy}
  E[R(f)],
  \qquad \qquad \textrm{where} \quad
   E[u] := \frac{1}{2}\int_{\widetilde{D}}{|\nabla{u}|^2dA}.
\end{equation}
The problem then is to find the minimizer among all complex fields that represent a cross field:
\begin{align}
\label{eq:ginzburg-landau-problem}
  \inf_{u \in H_g^1(\widetilde{D}; \mathbb T)}  \  {E(u)}
\end{align}
where
$$
H_g^1(\widetilde{D};\mathbb{T}) := \{ u \in H^1(\widetilde{D}; \mathbb{C}) \colon \  u(x) = g(x) \  \forall x \in \partial \widetilde{D} \ \ \textrm{and} \ \ |u(x)| = 1   \ \textrm{a.e.} \  x \in \widetilde{D} \}.
$$
Here, $g$ is boundary aligned (see \cref{def:boundary-aligned}), and the feasibility constraint $|u(x)| = 1$ keeps the solution on the unit circle so that a cross can be defined at a point $x$ by $R^{-1}(u(x))$. Note that the notion that a cross should only carry \emph{directional} information motivates the definition of a cross given in \cref{def:n-direction}, and in turn restricts the solution space to $\mathbb{T}$.

The admissible set in \cref{eq:ginzburg-landau-problem}, $H_g^1(\widetilde{D},\mathbb{T})$, is empty whenever the Brouwer degree, $d$, is non-zero. \cite{bethuel_ginzburg-landau_1994}. Indeed, if $d \neq 0$, the Poincar\'e-Hopf theorem necessitates a singularity will occur somewhere in $\widetilde{D}$, but the Dirichlet energy  \cref{eq:energy} in any neighborhood of the singularity is infinite \cite{bethuel_ginzburg-landau_1994,knoppel_globally_2013}. Problem \cref{eq:ginzburg-landau-problem} can be relaxed by enlarging the admissible set to $H_g^1(\widetilde{D},\mathbb{C})$, so that the solution can approach zero in the neighborhood of a singularity. A penalty term can then be added to the Dirichlet energy giving  the minimization  problem
\begin{equation} \label{eq:GL_functional}
  \inf_{u \in H_g^1(\widetilde{D},\mathbb{C})}{E_\varepsilon(u)}
  \qquad \qquad \textrm{where} \quad
  E_\varepsilon(u) = \frac{1}{2}\int_{\widetilde{D}}{|\nabla{u}|^2dA} + \frac{1}{4\varepsilon^2}\int_{\widetilde{D}}{(|u|^2 - 1)^2dA}.
\end{equation}
This is the approach taken in the study of the Ginzburg-Landau theory, and has also been taken in the cross field literature \cite{beaufort_computing_2017,jiang_frame_2015}; see also \cite{jakob_instant_2015,kowalski_pde_2013,ray_geometry-aware_2009} which use different strategies to enforce the pointwise unit norm constraint. In their foundational work on Ginzburg-Landau vortices, Betheul et al. show that there is a well-defined sense in which there exists a generalized solution to \cref{eq:ginzburg-landau-problem} as $\varepsilon \to 0$.

\subsection{Existence of a Generalized Solution} \label{sec:existence}

\begin{theorem}[{\cite[Theorem 0.1]{bethuel_ginzburg-landau_1994}}] \label{thm:existence}
Let $\widetilde{D} \subset \mathbb{R}^2$ be a bounded, simply connected domain with smooth boundary
\footnote{\Cref{thm:existence,thm:fixedField} are stated in \cite{bethuel_ginzburg-landau_1994} for star shaped domains, but this assumption was relaxed to simply connected domains in \cite{Struwe1994}.}
and let $g \colon \partial \widetilde{D} \rightarrow \mathbb{T}$ be a smooth function. Let $d = \deg(g,\partial \widetilde{D})$ be the Brouwer degree of $g$ on $\partial \widetilde{D}$. Denote by $u_\varepsilon$ a solution of \eqref{eq:GL_functional} for $\varepsilon >0$.
Given a sequence $\varepsilon_n \rightarrow 0$ there exists a subsequence $\varepsilon_{n_i}$ and exactly $d$ points $\{a_1,a_2,...,a_d\} \subset \widetilde{D}$ and a smooth harmonic map $u_\star \colon \widetilde{D} \setminus \{a_1,...,a_d\} \rightarrow \mathbb{T}$ with $u_\star = g \text{ on } \partial \widetilde{D}$ such that
\begin{equation*}
  u_{\varepsilon_{n_i}} \rightarrow u_\star \qquad \text{in} \ \  C_{loc}^k(\widetilde{D} \setminus \underset{i}{\cup}(a_i)) \ \ \forall k \ \ \text{and in} \ \  C^{1,\alpha}(\bar{\widetilde{D}} \setminus \underset{i}{\cup}(a_i)) \ \ \forall \alpha < 1.
\end{equation*}
In addition, if $d \neq 0$, each singularity of $u_\star$ has index $\sgn(d)$ and, more precisely, there are complex constants, $\alpha_i,$ with $|\alpha_i| = 1$ such that
\begin{equation*}
  \left\lvert u_\star(z) - \alpha_i\frac{z-a_i}{|z-a_i|} \right\rvert \leq C|z-a_i|^2 \quad \text{ as } \quad  z \rightarrow a_i \text{, }  \ \ \forall i.
\end{equation*}
\end{theorem}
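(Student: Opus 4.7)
The plan is to follow the Bethuel--Brezis--H\'elein strategy in four steps: (i) a sharp upper bound of order $|\log\varepsilon|$ on the minimum energy via an explicit competitor, (ii) localization of the set where $|u_\varepsilon|$ is small into finitely many ``bad discs'' together with a vortex-counting argument, (iii) convergence of $u_\varepsilon$ away from the limiting vortex centers by elliptic regularity, and (iv) the pointwise asymptotic expansion near each singularity.

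For step (i), I would choose any $d$ distinct points $b_1,\dots,b_d$ in the interior of $\widetilde{D}$, form the canonical $\mathbb{T}$-valued harmonic map with those prescribed singularities and boundary data $g$, and modify it inside each ball $B_\varepsilon(b_i)$ by a radial linear interpolation down to $0$. A direct computation shows this competitor has energy $\pi|d||\log\varepsilon| + O(1)$, which transfers to the minimizer. For step (ii), the Euler--Lagrange equation $-\Delta u_\varepsilon = \varepsilon^{-2}u_\varepsilon(1-|u_\varepsilon|^2)$ combined with a standard $|\nabla u_\varepsilon| \leq C/\varepsilon$ bound shows that $\{|u_\varepsilon| \leq 1/2\}$ can be covered by finitely many discs of radius $\lambda\varepsilon$. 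An iterative ball-growth (Vitali-type) selection, combined with the lower bound $E_\varepsilon(u_\varepsilon;B_r(a)) \geq \pi|d_r|\log(r/\varepsilon) - C$ on any ball that carries local degree $d_r$, together with the upper bound from step (i), caps the sum $\sum|d_i|$ by $|d|$. Extracting a subsequence $\varepsilon_{n_i}$, the disc centers converge to points $a_1,\dots,a_N \in \overline{\widetilde{D}}$; the minimality of $u_\varepsilon$, together with the renormalized energy analysis (the analogue of Theorem~I.7 in \cite{bethuel_ginzburg-landau_1994}), forces $N=|d|$, each $a_i$ in the interior, and each local degree equal to $\sgn(d) = \pm 1$.

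For step (iii), on any compact $K \subset \widetilde{D}\setminus\{a_i\}$ a PDE comparison argument upgrades the bound to $|u_\varepsilon|^2 \geq 1 - C_K\varepsilon^2$, so the penalty term $\varepsilon^{-2}(|u_\varepsilon|^2-1)^2$ stays uniformly bounded there. Bootstrapping elliptic regularity on the Ginzburg--Landau equation then yields $u_{\varepsilon_{n_i}} \to u_\star$ in $C^k_{\mathrm{loc}}(\widetilde{D}\setminus\{a_i\})$, where $u_\star$ takes values in $\mathbb{T}$, is harmonic, and satisfies $u_\star = g$ on $\partial\widetilde{D}$. Boundary Schauder estimates, available since $g$ is smooth, give the $C^{1,\alpha}(\overline{\widetilde{D}}\setminus\{a_i\})$ convergence. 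For step (iv), since $u_\star$ is $\mathbb{T}$-valued, harmonic on a punctured neighborhood of $a_i$, and has degree $\sgn(d)$ there, a lifting argument writes $u_\star(z) = e^{i\psi(z)}\,(z-a_i)^{\sgn(d)}/|z-a_i|$ with $\psi$ harmonic on the punctured disc. The logarithmic-conjugate pairing with the explicit vortex eliminates the singular part of $\psi$ and shows $\psi$ extends smoothly across $a_i$; a Taylor expansion $e^{i\psi(z)} = e^{i\psi(a_i)} + O(|z-a_i|)$, refined by one further order using harmonicity, produces the asserted bound with $\alpha_i = e^{i\psi(a_i)}$. (The quadratic error, rather than linear, comes from matching the $O(1)$ angular derivative against the radial unit factor.)

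The main obstacle is step (ii): establishing the sharp per-vortex lower bound $\pi|\log\varepsilon|$ and controlling the weighted degree sum. This requires choosing a well-separated hierarchy of intermediate scales between $\varepsilon$ and a fixed $\eta > 0$, running the ball-growth lemma so that the final bad discs are disjoint yet cover $\{|u_\varepsilon|\leq 1/2\}$, and tracking cross-interactions between candidate vortices via the renormalized energy so that configurations with $|d_i|\geq 2$ or mixed signs are energetically excluded. Once this is in place, steps (iii)--(iv) are essentially standard applications of elliptic regularity and a local harmonic lift.
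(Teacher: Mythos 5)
The paper offers no proof of this statement: \cref{thm:existence} is imported verbatim from Bethuel--Brezis--H\'elein (with the star-shaped hypothesis relaxed by Struwe), so there is no in-paper argument to compare against. Your outline is in effect a reconstruction of the BBH monograph, and steps (i)--(iii) --- the $\pi|d|\,|\log\varepsilon|+O(1)$ upper bound by a competitor with excised vortex cores, the bad-disc covering with the per-vortex lower bound, and the elliptic bootstrap away from the limiting vortices --- track that argument faithfully at the level of a sketch.

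The concrete gap is in step (iv). Near $a_i$ one writes $u_\star(z)=\frac{z-a_i}{|z-a_i|}\,e^{iH_i(z)}$ with $H_i$ harmonic and smooth across $a_i$, and Taylor expansion gives
$u_\star(z)-e^{iH_i(a_i)}\frac{z-a_i}{|z-a_i|}=O\bigl(|\nabla H_i(a_i)|\,|z-a_i|\bigr)+O\bigl(|z-a_i|^2\bigr)$.
Harmonicity of the phase does not kill the first-order term, and no ``matching of the angular derivative against the radial factor'' will: for a canonical harmonic map with an \emph{arbitrary} singularity configuration the error genuinely is only linear --- that is exactly the weaker estimate \cref{eq:estimate} in \cref{thm:fixedField}. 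The quadratic bound asserted in \cref{thm:existence} is equivalent to $\nabla H_i(a_i)=0$, which Bethuel et al.\ obtain from the fact that the limiting configuration is a minimizer, hence a critical point, of the renormalized energy $W$; this is the one place where the minimality of $u_\varepsilon$ must re-enter step (iv), and your sketch never invokes it. A secondary omission: the Pohozaev-type boundary identity underlying the bad-disc lower bounds is precisely where BBH use star-shapedness, so for the theorem as stated (simply connected $\widetilde{D}$) you would also need Struwe's modification of that estimate, which your outline does not address.
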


In other words, \cref{thm:existence} guarantees a sequence of minimizers of the relaxed problem \cref{eq:GL_functional} that converges to a function, $u_\star$, that is harmonic on $\widetilde{D} \setminus \{a_1,...,a_d\}$. Thus we can think of $u_\star$ as a generalized solution of the minimization problem \cref{eq:ginzburg-landau-problem}. Note that while $u_\star$ is a (generalized) global minimizer, it is not necessarily unique. For example, it can be shown on the disk that the the energy \cref{eq:GL_functional} is invariant under rotation. Unfortunately, $u^\star$ cannot in general be found analytically. Further, due to the nonlinear second term of the energy functional in \cref{eq:GL_functional}, the energy becomes non-convex for small epsilon \cite{bethuel_ginzburg-landau_1994} and there is no guarantee that numerical methods will converge to the global minimum. (Interestingly, the global minimum may not always be the preferred solution for generating a quad mesh; see \cref{sec:discussion}). There is however a related concept of which we can make practical use.

\subsection{Canonical Harmonic Maps} \label{sec:canonical-harmonic-maps}
\begin{theorem}[{\cite[Corollary I.1]{bethuel_ginzburg-landau_1994}}] \label{thm:fixedField}
Let $\widetilde{D} \subset \mathbb{R}^2$ and $g \colon \partial \widetilde{D} \rightarrow \mathbb{T}$ be as in \cref{thm:existence} with $d = \deg(g,\partial \widetilde{D})$. Given any configuration $a = \{a_1,\ldots,a_n\}$ of distinct points $a_j \in \widetilde{D}$ with indices $I = \{d_1,\ldots,d_n\}$ satisfying $d = \sum_{i=1}^n d_i$, there is a unique function $u_0$ satisfying
\begin{enumerate}
\item[(i)] $u_0$ is a smooth harmonic map from $\widetilde{D} \setminus \cup_i a_i$ to $\mathbb{T}$,
\item[(ii)] $u_0 = g \text{ on } \partial \widetilde{D} $, and
\item[(iii)] for some complex numbers $\alpha_j$ with $|\alpha_j| = 1$,
\begin{equation}     \label{eq:estimate}
\left\lvert u_0(z) - \alpha_j\frac{{(z-a_j)}^{d_j}}{|z-a_j|^{d_j}} \right\rvert \leq C|z-a_j|
\ \ \text{as} \ \  z \rightarrow a_j, \ \   \forall j.
\end{equation}
\end{enumerate}
Furthermore, $u_0$ is given by
  \begin{equation} \label{eq:u_0}
    u_0 = e^{i\varphi(z)} \ \frac{{(z-a_1)}^{d_1}}{|z-a_1|^{d_1}} \
    \frac{{(z-a_2)}^{d_2}}{|z-a_2|^{d_2}}  \ \cdots \ \frac{{(z-a_n)}^{d_n}}{|z-a_n|^{d_n}},
  \end{equation}
where $\varphi$ is the solution of the Dirichlet problem
\begin{subequations} \label{eq:phidef}
  \begin{align}
        \Delta \varphi &= 0   &&\text{ in } \widetilde{D}\\
        \varphi &= \varphi_0   &&\text{ on } \partial \widetilde{D},
  \end{align}
  \end{subequations}
and $\varphi_0$ is defined on $\partial \widetilde{D}$ by
  \begin{equation} \label{eq:phibc}
    e^{i\varphi_0(z)} = g(z) \ \frac{|z-a_1|^{d_1}}{{(z-a_1)}^{d_1}} \ \frac{|z-a_2|^{d_2}}{{(z-a_2)}^{d_2}} \  \cdots \  \frac{|z-a_n|^{d_n}}{{(z-a_n)}^{d_n}}.
  \end{equation}
\end{theorem}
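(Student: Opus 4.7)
The plan is to prove existence by constructing $u_0$ directly via \eqref{eq:u_0}--\eqref{eq:phibc} and verifying (i)--(iii) term by term, and then to prove uniqueness by a removable-singularity argument applied to the ratio of any two candidates. Since the formula essentially dictates the construction, the real content lies in (a) showing that the boundary data $\varphi_0$ is well defined as a continuous single-valued function, (b) establishing the local estimate in (iii), and (c) the uniqueness argument.

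For existence, I would first set $\psi(z) := \prod_{j=1}^n (z-a_j)^{d_j}/|z-a_j|^{d_j}$, a smooth $\mathbb{T}$-valued map on $\widetilde{D} \setminus \{a_1,\ldots,a_n\}$. On $\partial\widetilde{D}$ the factor $(z-a_j)^{d_j}/|z-a_j|^{d_j}$ winds $d_j$ times about the origin, so $\psi$ has Brouwer degree $\sum_j d_j = d = \deg(g, \partial\widetilde{D})$. Hence $g\,\overline{\psi}$ has degree zero on $\partial\widetilde{D}$ and admits a continuous real-valued logarithm, giving a well-defined $\varphi_0 \in C(\partial\widetilde{D})$ by \eqref{eq:phibc}. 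Solving the Dirichlet problem \eqref{eq:phidef} produces a smooth harmonic $\varphi$ on $\widetilde{D}$, and I define $u_0 := e^{i\varphi}\psi$. Property (ii) is immediate from the construction of $\varphi_0$. For (i), on any simply connected $U \subset \widetilde{D}\setminus\{a_1,\ldots,a_n\}$ pick a smooth single-valued branch $\Theta(z) = \sum_j d_j \arg(z-a_j)$, which is harmonic on $U$ as the imaginary part of $\sum_j d_j \log(z-a_j)$; then $u_0|_U = e^{i(\varphi + \Theta)}$ with real harmonic phase, so $u_0$ is a smooth harmonic map into $\mathbb{T}$ on $\widetilde{D}\setminus\{a_1,\ldots,a_n\}$. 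For (iii), fix $j$: the factor $\prod_{k\neq j}(z-a_k)^{d_k}/|z-a_k|^{d_k}$ is smooth and $\mathbb{T}$-valued near $a_j$, so $u_0(z) = h_j(z)\,(z-a_j)^{d_j}/|z-a_j|^{d_j}$ for a smooth unimodular $h_j$; setting $\alpha_j := h_j(a_j)$ and applying the mean-value inequality to $h_j$ yields the required $O(|z-a_j|)$ bound.

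For uniqueness, suppose $u_0$ and $u_0'$ both satisfy (i)--(iii), and form $w := u_0\,\overline{u_0'}$, which is smooth and $\mathbb{T}$-valued on $\widetilde{D}\setminus\{a_1,\ldots,a_n\}$. Locally writing $u_0 = e^{i\theta_1}$ and $u_0' = e^{i\theta_2}$ with $\theta_1,\theta_2$ real harmonic shows $w = e^{i(\theta_1 - \theta_2)}$ is itself a harmonic map into $\mathbb{T}$ with a real harmonic phase. By (iii), the singular factors of $u_0$ and $u_0'$ match and cancel, so $w(z) \to \alpha_j\,\overline{\alpha_j'} \in \mathbb{T}$ as $z \to a_j$, and $w$ extends continuously to all of $\widetilde{D}$; on $\partial\widetilde{D}$, $w = g\,\bar g \equiv 1$. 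Since $\widetilde{D}$ is simply connected, I lift $w = e^{i\Psi}$ with $\Psi$ continuous on $\widetilde{D}$, real-valued, and vanishing on $\partial\widetilde{D}$. Then $\Psi$ is harmonic on $\widetilde{D}\setminus\{a_1,\ldots,a_n\}$ and bounded, so the standard removable-singularity theorem for bounded harmonic functions in $\mathbb{R}^2$ extends it harmonically across each $a_j$; uniqueness of the Dirichlet problem on $\widetilde{D}$ then forces $\Psi \equiv 0$, and hence $u_0 \equiv u_0'$.

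The main obstacle is the continuous extension of $w$ across each $a_j$, since this is the step on which the whole uniqueness argument pivots: it relies essentially on the sharp $O(|z-a_j|)$ error bound in (iii), as a weaker $o(1)$ decay would allow $w$ to wind around the unit circle without converging to a single limiting value and would block the lift to a continuous $\Psi$. A parallel degree-theoretic bookkeeping point is that the hypothesis $\sum_j d_j = d$ is exactly what forces $g\,\overline{\psi}$ to have degree zero on $\partial\widetilde{D}$ and thereby makes the boundary data $\varphi_0$ of \eqref{eq:phibc} single-valued; without this identity the explicit formula \eqref{eq:u_0} would not even make sense.
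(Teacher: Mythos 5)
This statement is quoted verbatim from Bethuel--Brezis--H\'elein (Corollary~I.1 of \cite{bethuel_ginzburg-landau_1994}); the paper supplies no proof of its own, so there is nothing internal to compare against. Your argument --- existence by the explicit construction $u_0 = e^{i\varphi}\psi$ with the degree count $\deg(g\,\overline{\psi},\partial\widetilde{D}) = d - \sum_j d_j = 0$ making $\varphi_0$ single-valued, the Lipschitz bound on the smooth unimodular factor $h_j$ for (iii), and uniqueness via the ratio $w = u_0\,\overline{u_0'}$ extended across the $a_j$ by the bounded-harmonic removable-singularity theorem --- is precisely the standard proof given in that reference, and it is correct (modulo the trivial normalization that $\Psi|_{\partial\widetilde{D}}$ is a priori a constant in $2\pi\mathbb{Z}$ rather than literally zero, which is fixed by the choice of lift).
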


\begin{definition} \label{d:canHarMap}
 The smooth harmonic map $u_0 \colon \widetilde{D} \setminus \cup_i a_i  \to  \mathbb{T}$  in \cref{thm:fixedField} is called the \emph{canonical harmonic map} associated with the boundary condition $g$ and singularity configuration with locations $a = \{a_1,\ldots,a_n\}$ and indices $I = \{d_1,\ldots,d_n\}$. The $N$-direction field associated with the canonical harmonic map, defined by $R^{-1}(u_0)$ is called the \emph{canonical harmonic $N$-direction field} associated with $(g,a,I)$.
\end{definition}

\cref{thm:fixedField} states that a unique canonical harmonic map exists for a given boundary condition and singularity configuration. The immediately obvious application to cross fields is that such representation maps can be generated easily via \cref{eq:u_0}. \Cref{fig:mushroom} displays streamlines of cross fields having the same geometry and boundary condition but different singularity configurations. Each of these cross fields was generated using the explicit formulation from \cref{thm:fixedField}.

\begin{figure}
\begin{center}
\includegraphics[width=.32\linewidth]{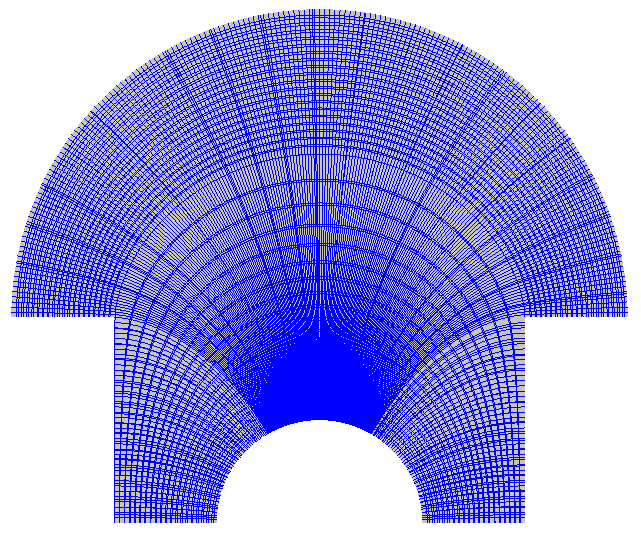}
\includegraphics[width=.32\linewidth]{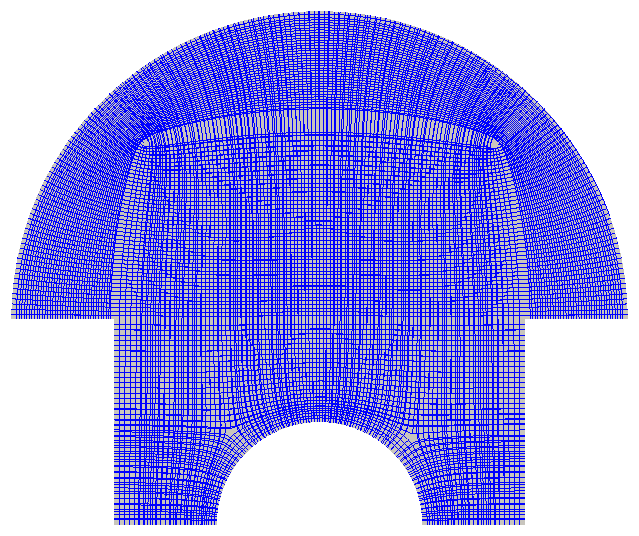}
\includegraphics[width=.32\linewidth]{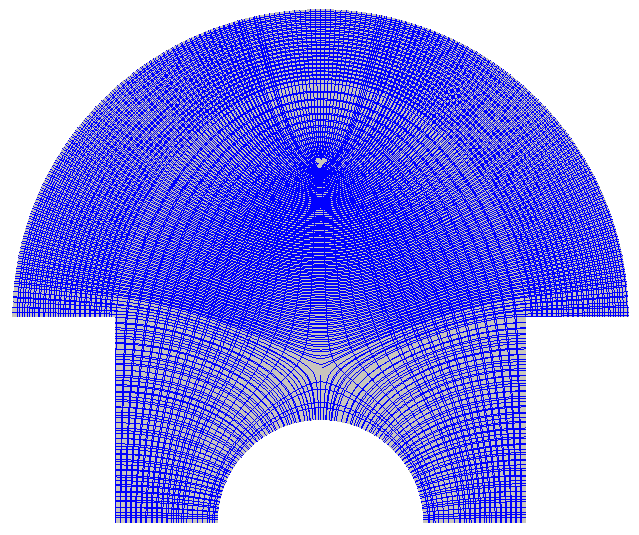}
\end{center}
\caption{A plot of the streamlines for multiple smooth cross fields on a ``mushroom'' domain, each with different singularity configuration. The Brouwer degree of this domain is zero.
{\bf (left)} This is the boundary-aligned canonical harmonic cross field with no singularities, which by \cite[Theorem 0.3]{bethuel_ginzburg-landau_1994} is the global minimizer of the Ginzburg-Landau energy.
{\bf (center)} This cross field has four singularities. The top two have degree $+1/4$ and the bottom two have degree $-1/4$.
{\bf (right)} This cross field has two singularities. The top one has degree $+1/2$ and the bottom one has degree $-1/2$. }
  \label{fig:mushroom}
\end{figure}

\cref{thm:fixedField} has an interesting theoretical consequence that draws a connection between angle based approaches to cross field design (\cite{bommes_mixed-integer_2009,li_representing_2006,ray_n-symmetry_2008}) and $N$-RoSy approaches, i.e., problem \cref{eq:ginzburg-landau-problem}. In the case where $d = 0$, the global minimizer of \cref{eq:ginzburg-landau-problem}, $u_\star$, admits no singularities. A result by Bethuel et al. \cite[Theorem 0.3]{bethuel_ginzburg-landau_1994} states that $u_\star$ is the canonical harmonic map for its associated singularity configuration. Thus, since no singularities occur, we can solve for $u_\star$ via \cref{eq:u_0}, simply by solving \cref{eq:phidef}. Since there are no singularities, the quantity $\varphi$ becomes exactly the angle of inclination of the $N$-Rosy representation vector, which in turn is simply a scalar multiple of the angle of inclination of the cross. Thus when $d = 0$, $u_\star$ is found by minimizing the Dirichlet energy of the cross angle. Put slightly stronger, away from singularities, $u_\star$ minimizes both the Dirichlet energy over fields $u \colon D \rightarrow \mathbb{T}$, and the Dirichlet energy of the \emph{argument} $\varphi = \arg(u)$ over all $u \colon D \rightarrow \mathbb{T}$.

Another useful fact from \cref{thm:fixedField} is that for a given boundary condition and singularity configuration, the canonical harmonic map has the smallest Dirichlet energy \cref{eq:energy}. Thus local minimizers of \cref{eq:GL_functional} in the limit as $\varepsilon \rightarrow 0$ must themselves be canonical harmonic maps. In \cref{sec:behavior} we use the estimate \cref{eq:estimate} to establish some properties about the singularities of canonical harmonic cross fields. These results extend previous results on the asymptotic behavior of cross fields near singularities. A common assumption in the cross field meshing literature is that such properties hold for cross fields given as input to meshing algorithms. Since local minimizers are canonical harmonic maps, the results of \cref{sec:behavior} provide theoretical justification for making such assumptions when the input cross field is derived by a method that approximately minimizes the energy \cref{eq:GL_functional} in the limit as $\varepsilon \rightarrow 0$.

\subsection{Renormalized Energy, Singularity Location, and Index}\label{sec:renormalized}
From \Cref{thm:existence} we know that a \emph{global minimizer}, $u_\star$, of \cref{eq:ginzburg-landau-problem} will have isolated singularities occurring on the interior of $\widetilde{D}$, each with index $\sgn(d)$. But as previously noted, for practical applications such as meshing, we cannot rely on finding a global minimizer as the problem is non-convex. An asymptotic analysis reveals that even the singularities of \emph{local minimizers} of \cref{eq:GL_functional} as $\varepsilon \rightarrow 0$ must be simple, isolated, and occur on the interior of $\widetilde{D}$. The details of this argument are presented in \cite{bethuel_ginzburg-landau_1994}, but we include an outline here to keep the discussion self-contained. A similar discussion appears in \cite{beaufort_computing_2017}.

In the limit as $\varepsilon \rightarrow 0$, the dominating term of the energy \cref{eq:GL_functional} is multiplied by the sum of the squares of singularity indices. Thus a local minimizer of \cref{eq:GL_functional} can only have simple singularities, since the dominating term carries more energy for a singularity of index $n$ where $|n| > 1$ than it does for $n$ simple singularities. Beside the dominating term, the remaining \emph{renormalized energy} is given by two terms that depend only on singularity placement. The first is a logarithmic term which repels singularities of the same sign, and attracts singularities of opposite sign. The second term becomes infinite as singularities approach the boundary. All together, we conclude that the singularities of \emph{local minimizers} of \cref{eq:GL_functional} in the limit $\varepsilon \rightarrow 0$ must be simple, isolated and occur on the interior. Further, they are typically well distributed because of the logarithmic term of the renormalized energy. It is interesting to note that local minimizers can admit singularities of opposite signs, even though $u_\star$ does not. These results apply directly to $N$-direction fields because the singularities of an $N$-direction field occur in the same locations as those of their representation fields, only with index multiplied by $1/N$.

\section{Cross Field Topology and Quad Mesh Structure}\label{sec:topology}
In this section, we  make rigorous the connection between the topology of a harmonic cross field and the structure that can be extracted from it for use in building a quad mesh on a domain. Central to this idea is the relation between the index of a cross field singularity, and the number of separatrices meeting at that singularity, which in turn determines the valence of a node in the quad mesh.

In \cref{sec:behavior} we generalize previous results on $N$-direction field singularities by studying the asymptotic properties of a singularity of a boundary-aligned canonical harmonic $N$-direction field, not necessarily a minimizer of the Ginzburg-Landau energy.  \Cref{lem:int_singularity_partition,lem:bnd_singularity_partition} use an asymptotic estimate provided by the Ginzburg-Landau theory to make explicit the relationship between index and the number and distribution of separatrices meeting at a singularity. \Cref{lem:sector_index} shows that the local topology of the cross field in each sector between separatrices is identical to that of a constant field on a $90^{\circ}$ corner. These lemmas culminate in a proof that the separatrices of a cross field decompose the geometry into four-sided regions. Our results are illustrated in \cref{fig:LocalAnalysis}. For the sake of generality, the results in the following section are stated in terms of $N$-direction fields.

\subsection{Behavior of $N$-Direction Fields Near Singularities}\label{sec:behavior}

\begin{figure}
\begin{center}
\includegraphics[width=.32\linewidth, trim=50 -30 300 0, clip]{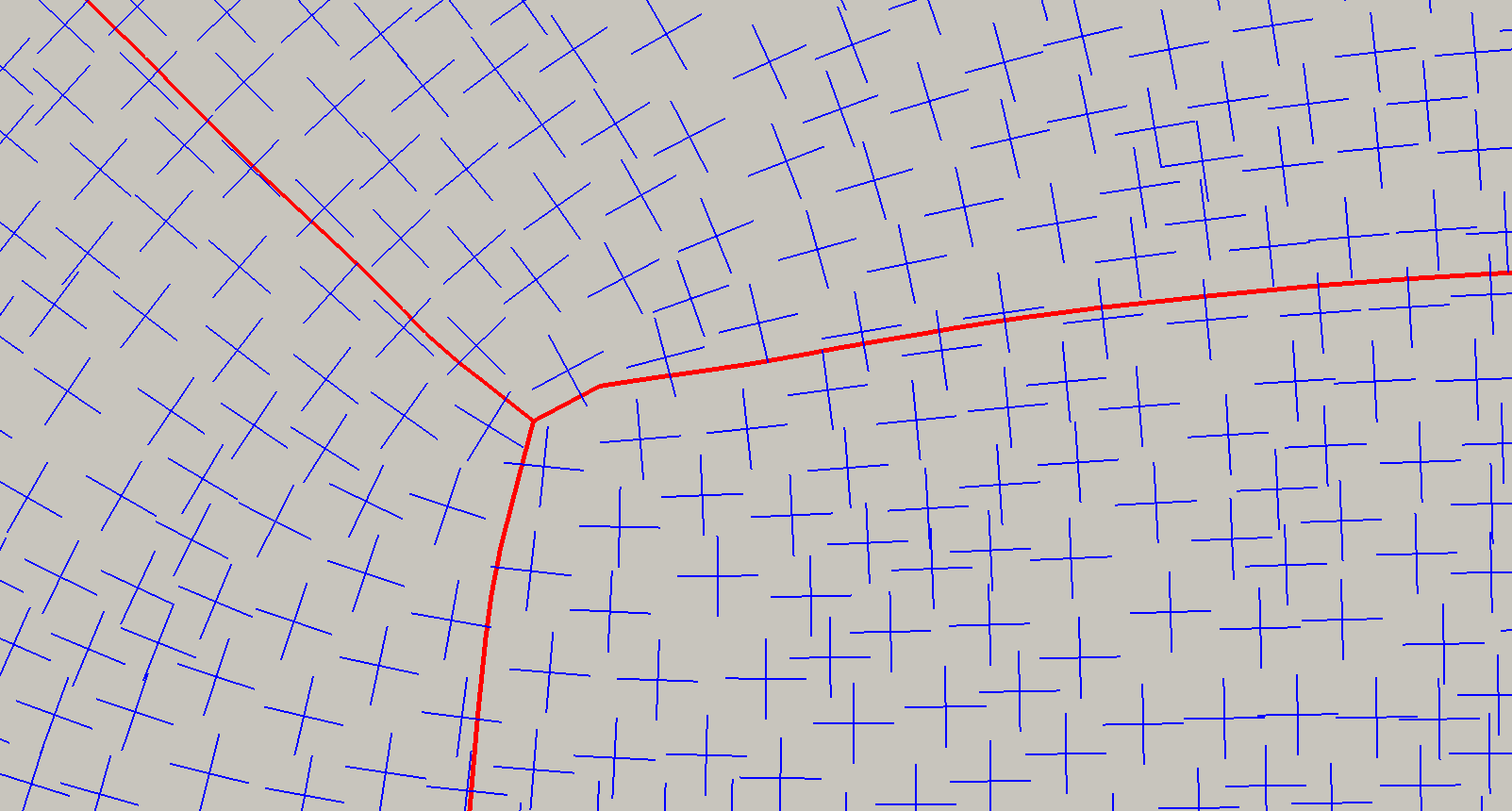} \quad
\includegraphics[width=.32\linewidth, trim=90 -30 260 0, clip]{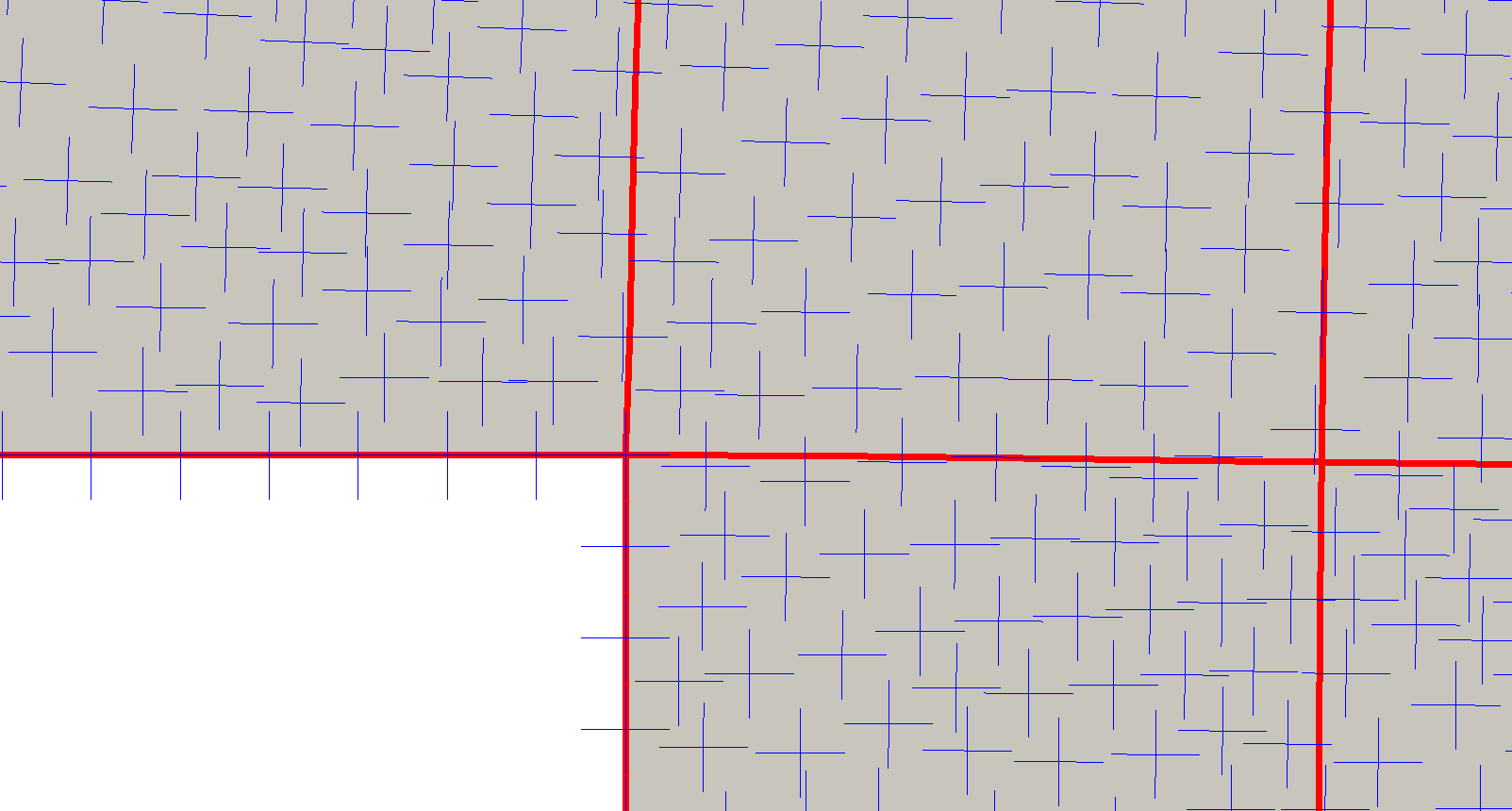} \\
\includegraphics[width=.32\linewidth, trim=0 0 -70 -30, clip]{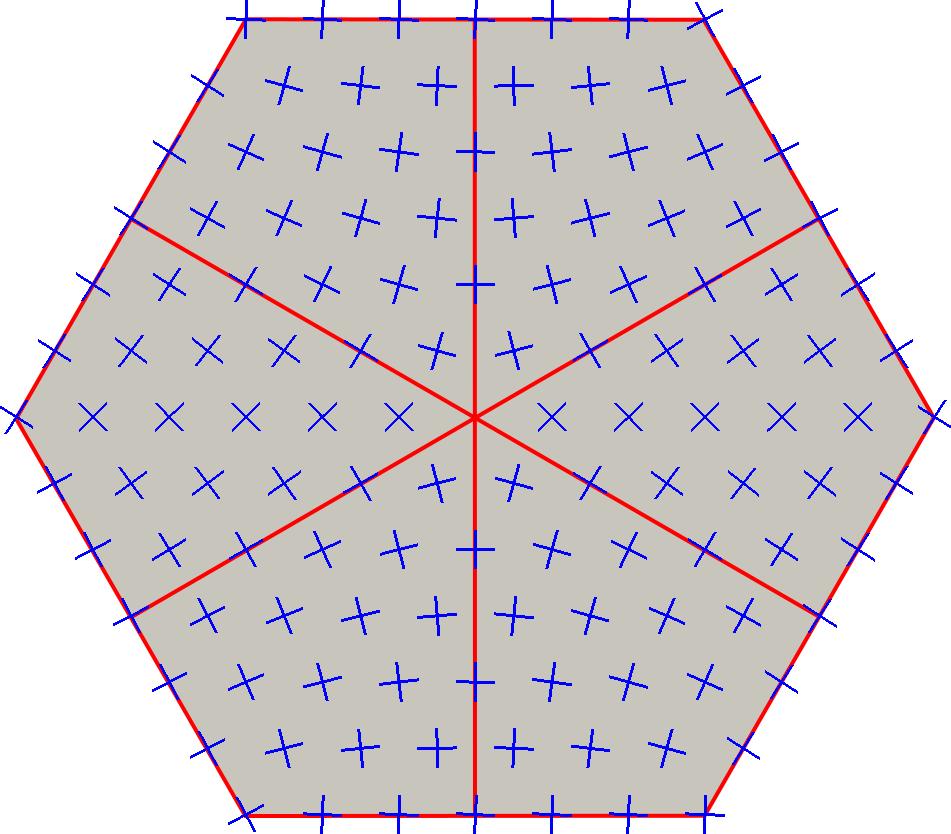} \quad
\includegraphics[width=.32\linewidth, trim=-70 0 0 -30, clip]{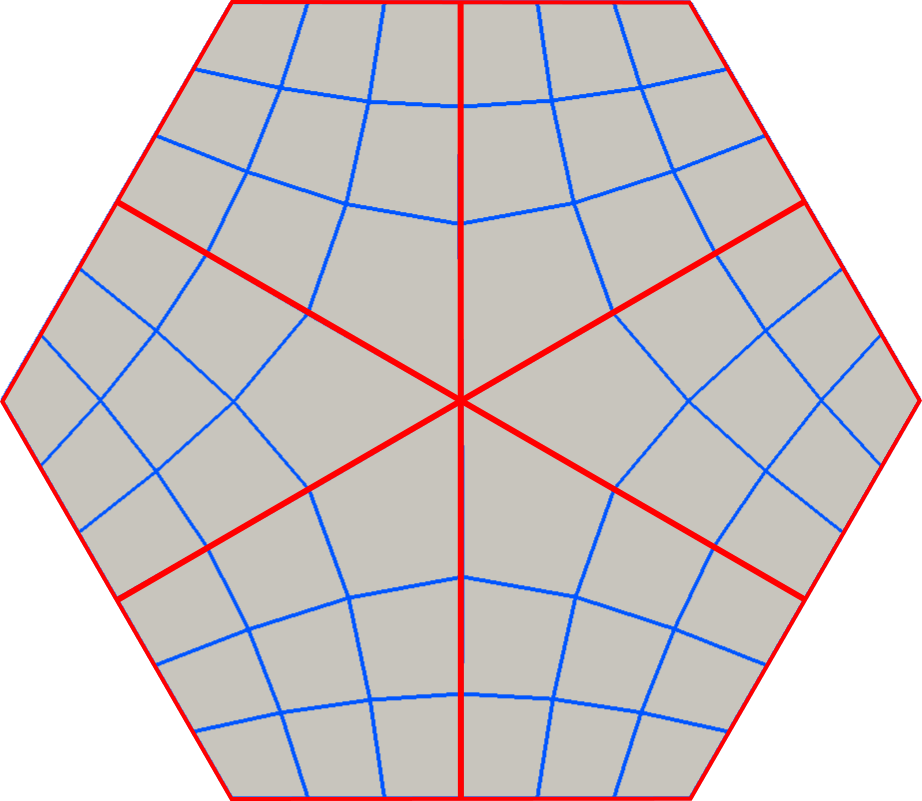}
\end{center}
\caption{{\bf (top)} Local behavior of cross fields around an interior singularity ({\bf left}) and boundary singularity ({\bf right}). The thick red lines show the separatrices exiting singularities as described by \cref{lem:int_singularity_partition,lem:bnd_singularity_partition}. Locally, the cross field on each sector is identical to a constant cross field on a $90^{\circ}$ corner, stated another way, if the separatrices are considered as boundaries of the sectors, then the index of each corner is $+1/4$; see \cref{def:boundary_singularity}. {\bf (bottom left)} A cross field and separatrix partition for a regular hexagon. The singularity in the center has index $-1/2$. {\bf (bottom right)} The corresponding quad mesh with skeleton highlighted in red; see \cref{sec:quad-meshing-def}.}
\label{fig:LocalAnalysis}
\end{figure}

Our first result is that singularities locally partition an $N$-direction field into evenly angled sectors. The number of sectors depends only on the index of the singularity.

\begin{lemma}\label{lem:int_singularity_partition}
  Let $f$ be a boundary-aligned canonical harmonic $N$-direction field on $D$. Let $a$ be an interior singularity of $f$ of index $d/N$ with $d < N$. There are exactly $N - d$ separatrices meeting at $a$. These separatrices partition a neighborhood of $a$ into $N - d$ evenly angled sectors.
\end{lemma}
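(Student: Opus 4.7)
The strategy is to determine the leading-order angular behaviour of the canonical harmonic representation map $u_0 = R \circ f$ near $a$ using the explicit formula from \cref{thm:fixedField}, and then read off the admissible asymptotic directions of separatrices. By \cref{thm:fixedField},
\[
u_0(z) = A(z)\,\frac{(z-a)^d}{|z-a|^d},
\qquad
A(z) := e^{i\varphi(z)} \prod_{a_i \neq a} \frac{(z-a_i)^{d_i}}{|z-a_i|^{d_i}},
\]
where $A$ is smooth in a neighbourhood of $a$ and $A(a) = \alpha \in \mathbb{T}$. Writing $z - a = r e^{i\theta}$ gives $\arg u_0(z) = \arg\alpha + d\theta + O(r)$, so the $N$ unit vectors constituting $f(z) = R^{-1}(u_0(z))$ have arguments
\[
\phi_k(r,\theta) = \tfrac{1}{N}\bigl(\arg\alpha + d\theta\bigr) + \tfrac{2\pi k}{N} + O(r),
\qquad k = 0, 1, \dots, N-1.
\]

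The next step is to derive the algebraic condition characterising the asymptotic directions of separatrices. A separatrix ending at $a$, parametrised by arclength as $\gamma(s) - a = r(s) e^{i\theta(s)}$, must have unit tangent $\gamma'(s)$ equal to one of the $\phi_k$ at each $s$; since $r(s) \to 0$, the tangent angle $\psi(s) = \arg\gamma'(s)$ and the radial angle $\theta(s)$ must agree in the limit $s \to 0^+$, so passing to the limit and setting $\theta_0 = \lim \theta(s)$ gives
\[
\theta_0 \equiv \tfrac{1}{N}\bigl(\arg\alpha + d\theta_0\bigr) \pmod{\tfrac{2\pi}{N}},
\qquad \text{i.e.,} \qquad (N-d)\theta_0 \equiv \arg\alpha \pmod{2\pi}.
\]
Since $d < N$, this congruence has exactly $N-d$ solutions $\theta_0^{(j)} = (\arg\alpha + 2\pi j)/(N-d)$ for $j = 0, 1, \dots, N-d-1$, evenly spaced with angular gap $2\pi/(N-d)$. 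This already gives the upper bound on the number of separatrices and pins down their asymptotic directions.

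To promote each admissible $\theta_0^{(j)}$ to a genuine separatrix of $f$ and to rule out spiralling orbits, I would work with the covering vector field $\widehat\Lambda_N$ from \cref{sec:streamline-def}. Near a lift of $a$, $\widehat\Lambda_N$ is a continuous vector field whose principal part in polar coordinates is $\alpha^{1/N} e^{i d\tilde\theta/N}$; the streamline ODE reduces to $\dot r = \cos\Phi + O(r)$ and $r\dot{\tilde\theta} = \sin\Phi + O(r)$ with $\Phi = (\arg\alpha - (N-d)\tilde\theta)/N$. The equilibria of the angular equation are precisely the $\theta_0^{(j)}$, and each is transverse because $\partial_{\tilde\theta}\sin\Phi = -\tfrac{N-d}{N}\cos\Phi \neq 0$ at these points, so a standard stable/unstable-manifold argument produces exactly one separatrix per equilibrium and excludes orbits that spiral into $a$. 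Projecting the $N-d$ cover separatrices down to $D$ yields $N-d$ separatrices of $f$ emerging from $a$ along the directions $\theta_0^{(j)}$, which partition a small punctured disk around $a$ into $N-d$ sectors of equal angular width $2\pi/(N-d)$.

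The main obstacle is the last step: one must justify that the candidate asymptotic angles are actually realised by streamlines of the full $\widehat\Lambda_N$, not merely of its leading-order model, and that no separatrix approaches $a$ non-radially. The perturbation is only H\"older-continuous at the lift of $a$ (inherited from the $C^{1,\alpha}$ regularity in \cref{thm:existence} and from the smoothness of $A$ away from $a$), so classical invariant-manifold theorems are not directly applicable; however, the transversality of the angular dynamics at each $\theta_0^{(j)}$ is uniform, and the resulting blow-up / fixed-point argument extends routinely to this regularity class, giving the required existence and uniqueness.
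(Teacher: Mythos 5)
Your proposal is correct and its core computation is the same as the paper's: both use the leading-order form $u(z) \approx \alpha\, e^{i d\theta}$ near $a$ (available from \cref{thm:fixedField}, either via the pointwise estimate or via the explicit product formula as you do) and solve $e^{i\theta} = \lim_{r\downarrow 0} u(z)^{1/N}$, i.e.\ $(N-d)\theta \equiv \arg\alpha \pmod{2\pi}$, to obtain exactly $N-d$ evenly spaced directions. Your additional blow-up/transversality argument on the covering vector field---promoting each candidate angle to a genuine separatrix and excluding orbits that spiral into $a$---goes beyond the paper, which stops at counting solutions of the direction equation and implicitly identifies candidate directions with separatrices; your closing caveat about the merely H\"older-continuous perturbation is exactly the point the paper leaves unaddressed, and your sketch of how to handle it is plausible though not fully carried out.
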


While the result in \cref{lem:int_singularity_partition} is often leveraged \cite{kowalski_domain_2013,myles_robust_2014,campen_quantized_2015}, relatively little work has been done to disclose its exact nature. To the best of our knowledge this was first studied in \cite{palacios_rotational_2007}. By linearizing around a singularity, they show that the the angles at which separatrices exit a singularity must occur at the zeros of an $N+1$ degree polynomial, bounding the number of separatrices above by $N+1$. Kowalski et al. \cite{kowalski_pde_2013} show that for a discrete cross field interpolated linearly over triangle mesh elements, singularities of index $\pm 1/4$ will have 3 and 5 separatrices respectively, but do not address the question for the underlying continuous field. Beaufort et al. \cite{beaufort_computing_2017} identify the relationship of $N-d$ separatrices for a singularity of index $d/N$, and show that this relationship holds for cross fields that are \emph{a priori} aligned to a quad mesh. Here we prove the relationship holds for singularities of a canonical harmonic $N$-direction field, including local minimizers of the Ginzburg-Landau energy \cref{eq:GL_functional}.

\begin{proof}[Proof of \cref{lem:int_singularity_partition}] Let  $u$ be the representation vector field for $f$. Write $z = a + re^{i\theta}$. The estimate \cref{eq:estimate} gives
  \begin{equation}\label{eq:linearization}
      u(z) = \alpha e^{i d\theta} + o(r) \qquad \textrm{for } \theta \in [0,2\pi) \textrm{ and } |\alpha| = 1.
  \end{equation}
  We seek directions where the vector $\mathbf{v} = z - a$ is parallel to any of the component vectors of the cross $f(z) = u(z)^{\frac{1}{N}}$. Thus, writing $\alpha = e^{i \theta_0/N}$, we want to solve the equation
\begin{equation}\label{eq:interior-separatrices}
e^{i\theta} = \lim_{r \downarrow 0}u(z)^{\frac{1}{N}} = e^{i(\frac{d\theta + \theta_0}{N} + \frac{2\pi k}{N})}
\quad \implies \quad
\theta = 2\pi k/(N-d) + \theta_0/(N-d)
\end{equation}
for $k \in \mathbb{Z}$. Since we are looking for solutions where $\theta \in [0,2\pi)$ we have exactly $N - d$ solutions. This gives $N - d$ separatrices and $N - d$ sectors, each with interior angle $2 \pi / (N - d)$.
\end{proof}

\begin{lemma}\label{lem:bnd_singularity_partition}
  Let $f$ be the extension of a canonical harmonic direction field $\widetilde{f}$. Let $c$ be a boundary singularity of $f$ of index $d/N$ with $d < N/2$. There are exactly $\frac{N}{2} - d + 1$ separatrices meeting at $c$ (including the boundaries themselves). These separatrices partition a neighborhood of $c$ into $\frac{N}{2} - d$ evenly angled sectors.
\end{lemma}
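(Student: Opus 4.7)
My plan is to mirror the proof of Lemma~5.1, applying it at a boundary singularity via the explicit description of the extension $f$ on the corner region $D\setminus\widetilde{D}$ from Section~3.3. First I set up local polar coordinates $(r,\theta)$ at $c$: place $c$ at the origin and arrange the two boundary edges meeting at $c$ along the rays $\theta=0$ and $\theta=\beta$, where $\beta = \interior(c)$; the interior of $D$ near $c$ is then the sector $\theta\in(0,\beta)$. By construction of the extension, on the corner region $f$ is constant along rays from $c$, so the representation field $u = R\circ f$ depends only on $\theta$ there.

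Next I pin down the local form of $u$. The boundary-alignment condition gives the boundary values $u|_{\theta=0^+} = (-i)^N = e^{-iN\pi/2}$ and $u|_{\theta=\beta^-} = e^{iN(\beta+\pi/2)}$, and the linear-interpolation convention of Section~3.3 for the cross on $\partial\widetilde{D}_c$, combined with the constant-along-rays propagation, forces a continuous branch of $\arg u$ to vary linearly in $\theta$. Matching this linear profile against Definition~\ref{def:boundary_singularity} for $\Delta\arg(c)$ together with the hypothesis $\Index(c) = d/N$ yields
\[
u(re^{i\theta}) = \alpha\, e^{ik\theta}, \qquad \alpha = e^{-iN\pi/2}, \qquad k = \frac{2\pi d - N(\pi - \beta)}{\beta},
\]
so that $N - k = \tfrac{2\pi}{\beta}\bigl(\tfrac{N}{2} - d\bigr) > 0$ under the assumption $d < N/2$.

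With this asymptotic in hand I argue exactly as in the proof of Lemma~\ref{lem:int_singularity_partition}. A separatrix at $c$ in direction $\theta$ corresponds to a direction for which $e^{i\theta}$ is parallel to one of the $N$ component vectors of the cross $R^{-1}(u)$, i.e.,
\[
e^{i\theta} = \alpha^{1/N}\, e^{ik\theta/N}\, e^{2\pi i m/N}
\]
for some integer $m$. Solving this and absorbing $m$ together with any multiple of $N$ into a single integer $p$ gives
\[
\theta = \frac{\beta\,(p - N/4)}{\tfrac{N}{2} - d}, \qquad p \in \mathbb{Z}.
\]
Restricting to $\theta \in [0,\beta]$ picks out exactly the integer values $p = N/4, N/4+1, \ldots, N/4 + N/2 - d$, for a total of $\tfrac{N}{2} - d + 1$ solutions. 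The two endpoints $\theta = 0$ and $\theta = \beta$ are precisely the two boundary edges meeting at $c$ (which are streamlines since a boundary-aligned cross contains the tangent direction), and the remaining $\tfrac{N}{2} - d - 1$ solutions are interior separatrices spaced at equal angular gap $\beta/(\tfrac{N}{2}-d)$. Together they partition a neighborhood of $c$ into $\tfrac{N}{2} - d$ evenly angled sub-sectors, as claimed.

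The step I expect to require the most care is pinning down the asymptotic form of $u$. Linearity of $\arg u$ in $\theta$ is essentially immediate from the extension construction and the linear-interpolation convention on $\partial\widetilde{D}_c$, but matching the slope $k$ to the index $d/N$ requires following a continuous branch of $\arg u$ between the two boundary values and reconciling it with Definition~\ref{def:boundary_singularity}, with some bookkeeping of sign conventions and $2\pi$-ambiguity. Once the asymptotic form is established, the rest is a direct analogue of the Lemma~\ref{lem:int_singularity_partition} argument followed by elementary counting of integer solutions in a closed interval.
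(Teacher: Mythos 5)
Your proposal is correct and follows essentially the same route as the paper: both write the representation field near the corner as $u = \alpha\,e^{ik\theta}$ with slope $k = -\Delta\arg(c)/\interior(c)$ determined by \cref{def:boundary_singularity} and the constant-along-rays extension, then solve $e^{i\theta} = u^{1/N}$ and count the $\tfrac{N}{2}-d+1$ solutions in $[0,\interior(c)]$. The only difference is cosmetic: you carry the phase constant $\alpha = e^{-iN\pi/2}$ explicitly (so the endpoint solutions appear at $p = N/4$), whereas the paper normalizes it away; both versions tacitly use that the boundary tangent lies in the cross, which holds in the relevant case $N=4$.
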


\begin{proof}
  Let $\phi_c = \interior(c)$, the interior angle at $c$. By the definition of a boundary singularity (\cref{def:boundary_singularity}), we calculate that $\Delta \arg(c) = N(\pi - \phi_c) - 2\pi d$. We can parameterize the geometry near the corner with polar coordinates $z = re^{i \phi}$, where $r$ is the distance between from the corner and $\phi \in [0,\phi_c]$ measures the angle from the segment succeeding the corner. In this coordinate system, the representation field, $u$, near the corner can be written
\begin{equation}\label{eq:bnd_linearization}
u(z) = e^{-i\phi\frac{\Delta \arg(c)}{\phi_c}} = e^{-i \phi \frac{N(\pi - \phi_c) - 2\pi d}{\phi_c}}.
\end{equation}
As in \cref{lem:int_singularity_partition}, we are looking for values of $\phi$ where $\mathbf{v} = z - c$ is parallel to a component vector of the cross $f(z) = u(z)^{\frac{1}{N}}$. Thus, we want to solve
\begin{equation}\label{eq:bnd_separatrices}
e^{i\phi} = u(z)^{\frac{1}{N}} = e^{-i\phi\frac{N(\pi - \phi_c) - 2\pi d}{N\phi_c} + \frac{2 \pi k}{N}}
\quad \implies \quad
\phi = \frac{\phi_c}{\frac{N}{2}-d}k
\end{equation}
for $k \in \mathbb{Z}$. Since we are looking for solutions where $\phi \in [0,\phi_c]$, for $d < N/2$ and $N$ even we have solutions for $k = 0, \ldots ,\frac{N}{2}-d$. Thus, $\frac{N}{2} - d + 1$ separatrices partition a neighborhood of the corner $c$ into $\frac{N}{2}-d$ sectors, each with interior angle $\phi_c / (\frac{N}{2} - d)$.
\end{proof}

In \cref{lem:bnd_singularity_partition}, if we allow $d = N/2$ then \cref{eq:bnd_separatrices} simplifies to the identity, and so for every value of $\theta$ we get a separatrix. In the case of a cross field, this means that any boundary singularity with index $+1/2$ has infinitely many separatrices; see \cref{fig:degenerate-singularities} (left). Likewise, \cref{eq:interior-separatrices} reduces to the identity when $d = N$ and $\theta_0 = 0$.

\begin{figure}
  \begin{center}
    \resizebox{\linewidth}{!}{%
    \includegraphics[scale=1, trim=0 0 0 0, clip]{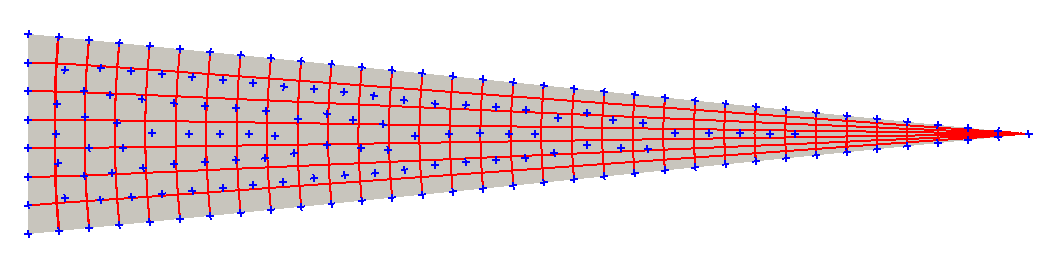}
    \includegraphics[scale=.75, trim=0 0 0 0, clip]{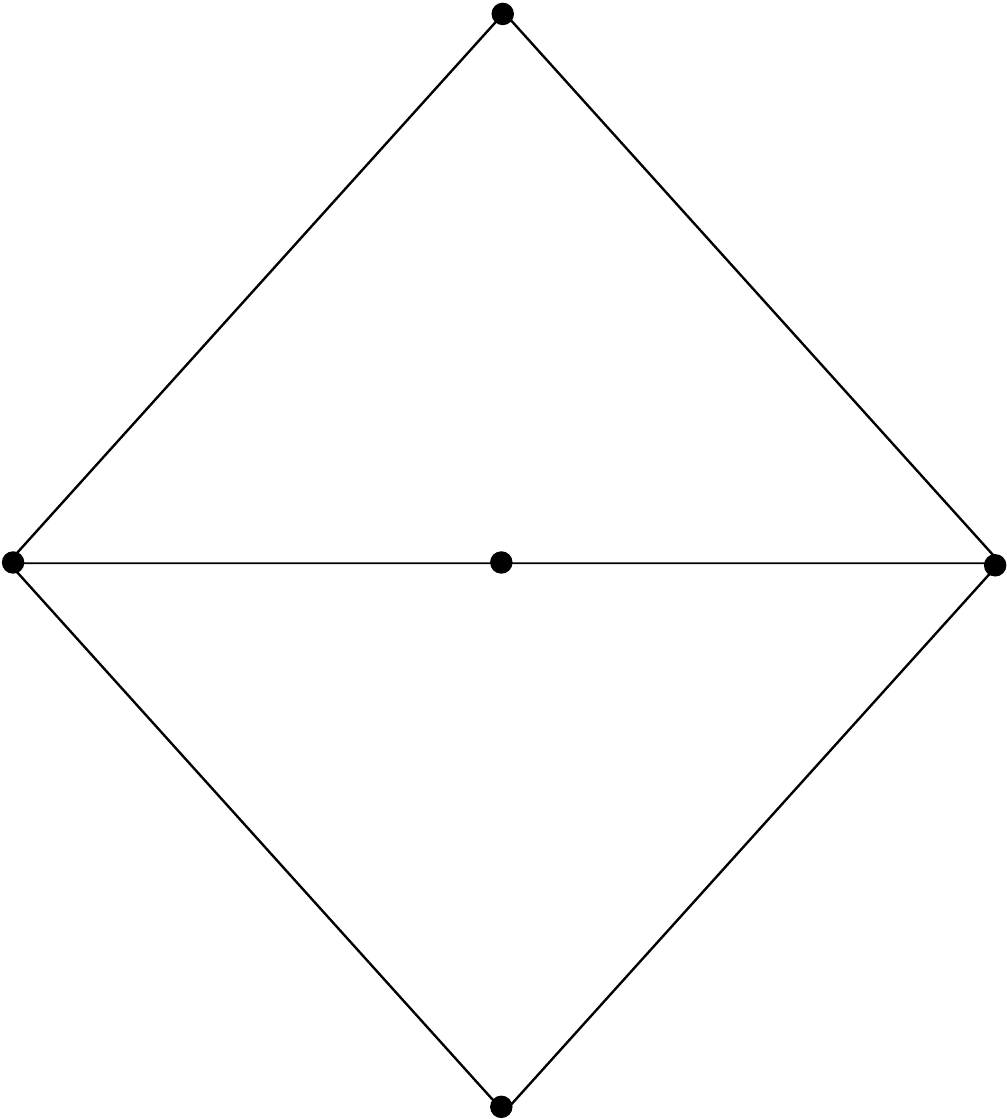}}
  \end{center}
\caption{Singularities of index $\geq 1/2$ lead to degenerate quad meshes. {\bf (left)} A boundary singularity of index $1/2$ leads to an infinite number of separatrices converging to a single point. {\bf (right)} An internal singularity of index $1/2$ leads to a pair of the degenerate ``doublet'' quads.}
\label{fig:degenerate-singularities}
\end{figure}

The following Lemma specifies the behavior of the cross field within each sector around either a boundary or an interior singularity.

\begin{lemma}\label{lem:sector_index}
Let $N$ be even and $d < N/2$. Consider a component of the partition described in \cref{lem:int_singularity_partition,lem:bnd_singularity_partition}.
A singularity, $c$, in the corner of this component, when viewed as a boundary singularity, has index $ \frac{1}{2} - \frac{1}{N}$.
\end{lemma}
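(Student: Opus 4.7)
The plan is to apply \cref{def:boundary_singularity} directly to the corner $c$ of the component, treating separately the two cases (interior singularity from \cref{lem:int_singularity_partition}, boundary singularity from \cref{lem:bnd_singularity_partition}). In each case the interior angle $\interior(c)$ of the sector at $c$ is read off from the relevant partition lemma, and $\Delta\arg(c)$ is computed by continuously lifting $\arg R(f)$ along a small curve $\gamma$ in the sector, using the asymptotic expansion of the representation field derived in the proof of that lemma.

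For the interior-singularity case, $\interior(c) = 2\pi/(N-d)$. From $R(f)(z) \approx \alpha e^{id\theta}$ (\cref{eq:linearization}), $\arg R(f)$ is constant along each of the two separatrices meeting at $c$ (where $\theta$ is fixed) and varies linearly in $\theta$ on any arc joining them through the sector. With the orientation of $\gamma$ prescribed by \cref{def:boundary_singularity}, the continuous lift gives $\Delta\arg(c) = -2\pi d/(N-d)$, and substituting into \cref{eq:boundary_index} yields
\begin{equation*}
\Index(c) \;=\; \frac{1}{2\pi}\!\left(\pi - \frac{2\pi}{N-d} + \frac{2\pi d}{N(N-d)}\right) \;=\; \frac{1}{2} - \frac{N-d}{N(N-d)} \;=\; \frac{1}{2} - \frac{1}{N}.
\end{equation*}

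For the boundary-singularity case, $\interior(c) = \phi_c/(N/2-d)$, where $\phi_c$ is the interior angle of $\partial D$ at $c$. By \cref{eq:bnd_linearization}, $\arg R(f)$ is linear in the angular parameter $\phi$, so an analogous continuous lift gives $\Delta\arg(c) = 2[N(\pi-\phi_c) - 2\pi d]/(N-2d)$. The crucial step is an algebraic cancellation: combining $\interior(c)$ and $\frac{1}{N}\Delta\arg(c)$ over the common denominator $N(N-2d)$, the $\phi_c$-dependent terms cancel and the sum collapses to $2\pi/N$, independently of both $\phi_c$ and $d$. Hence $\Index(c) = (\pi - 2\pi/N)/(2\pi) = \tfrac12 - \tfrac1N$.

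The only subtlety is the sign of $\Delta\arg(c)$: this is dictated by the orientation of $\gamma$ in \cref{def:boundary_singularity}, and one needs to use the orientation under which the two separatrices meeting at $c$ play the role of the two boundary edges of $D$ in that definition. Once that sign is fixed, both cases reduce to elementary simplifications; the fact that the resulting index $\tfrac12 - \tfrac1N$ is independent of $d$ in the interior case and of both $d$ and $\phi_c$ in the boundary case is a reassuring consistency check that every such sector corner behaves as a standard right-angle corner in the sense of boundary singularity index.
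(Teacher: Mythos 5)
Your proposal is correct and follows essentially the same route as the paper's proof: read off $\interior(c)$ from \cref{lem:int_singularity_partition} or \cref{lem:bnd_singularity_partition}, compute $\Delta\arg(c)$ from the asymptotic forms \cref{eq:linearization} and \cref{eq:bnd_linearization} evaluated between consecutive separatrix angles, and substitute into \cref{eq:boundary_index}; your intermediate values $\Delta\arg(c) = -2\pi d/(N-d)$ and $\Delta\arg(c) = 2[N(\pi-\phi_c)-2\pi d]/(N-2d)$ and the final cancellations agree exactly with the paper's computation.
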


\begin{proof}
If the singularity is an internal singularity, then \cref{eq:interior-separatrices} gives that on each of the sectors
\[ \interior(c) = \frac{2\pi}{N-d}. \]
From \cref{eq:linearization}, we compute
\begin{align*}
-\Delta \arg(c) &= d\left[\frac{2\pi(k+1)}{N-d} + \frac{\theta_0}{N-d}\right] + \theta_0 - \left(d\left[\frac{2\pi k}{N-d} + \frac{\theta_0}{N-d}\right] + \theta_0\right) = d\frac{2\pi}{N-d}.
\end{align*}
Using \cref{eq:boundary_index}, the index of the corner is
\[
\Index(c) = \frac{\pi - \frac{2\pi}{N-d} - \frac{d}{N}\frac{-2\pi}{N-d}}{2\pi} = \frac{\frac{N}{2}-1}{N}.
\]

If the singularity is a boundary singularity, then from eq.~\cref{eq:bnd_separatrices},
\[ \interior(c) = \frac{\phi_c}{\frac{N}{2}-d}. \]
Using \cref{eq:bnd_linearization}, we compute
\[
\Delta \arg(c) = \frac{N(\pi - \phi_c) - 2\pi d}{\phi_c}\left(\frac{\phi_c (k+1)}{\frac{N}{2} - d} - \frac{\phi_c k}{\frac{N}{2} - d}\right) = \frac{N(\pi - \phi_c) - 2\pi d}{\frac{N}{2} - d}.
\]
Thus \cref{eq:boundary_index} gives that the index of the corner is
\[
\Index(c) = \frac{\pi - \frac{\phi_c}{\frac{N}{2}-d} - \frac{1}{N}\frac{N(\pi - \phi_c) - 2\pi d}{\frac{N}{2} - d}}{2\pi} = \frac{\frac{N}{2} - 1}{N}.
\]
\end{proof}

\Cref{lem:sector_index} says that the index of the boundary singularity around the corner of any sector is $(\frac{N}{2} - 1)/N$. When $N = 4$, this simply means that each corner of a sector looks like a right angle with respect to the cross field. Stated another way, the cross field in each sector allows for a local $(u,v)$ parameterization; see \cref{fig:LocalAnalysis}.

The next section shows how we can use the local structure of singularities to determine the structure of a quad layout on $D$.

\subsection{Partitioning into Four-Sided Regions}
The skeleton of a quad mesh (\cref{sec:quad-meshing-def}) gives the basic structure of the mesh in the sense that it partitions the domain into the coarsest possible quad layout for the given choice of irregular nodes and connectivity between them; see \cref{fig:overview} (bottom), \cref{fig:LocalAnalysis} (bottom left), and \cref{fig:ExampleMeshes} (right). Any mesh with this structure, including the original mesh, can be seen as simply a refinement of this quad layout.

It is well known that a cross field can be generated on a quad mesh by locally aligning the crosses with quad edges (see \cite{beaufort_computing_2017,wang_frame_2016}). Beaufort et al. \cite{beaufort_computing_2017} show that a cross field created in such a way will have singularities exactly at the irregular nodes. Further, the separatrices of the cross field will be exactly the curves traced out by the skeleton of the quad mesh. Presumably, one could create the skeleton of a quad layout by reversing this process, i.e., simply tracing separatrices of the cross field (this is the approach taken in \cite{kowalski_pde_2013}). This would allow for meshing the domain by conformally mapping a regular grid into each region of the quad layout. Unfortunately, it is not always so simple. \Cref{fig:LimitCycle} shows a geometry that contains a limit cycle. The yellow separatrix begins at the corner, but continues indefinitely as it approaches the limit cycle. The following theorem shows that this is the only type of failure case that can occur.  The second part of the proof follows \cite{myles_robust_2014}.

\begin{figure}
\begin{center}
\includegraphics[width=.45\linewidth, trim=300 0 300 0, clip]{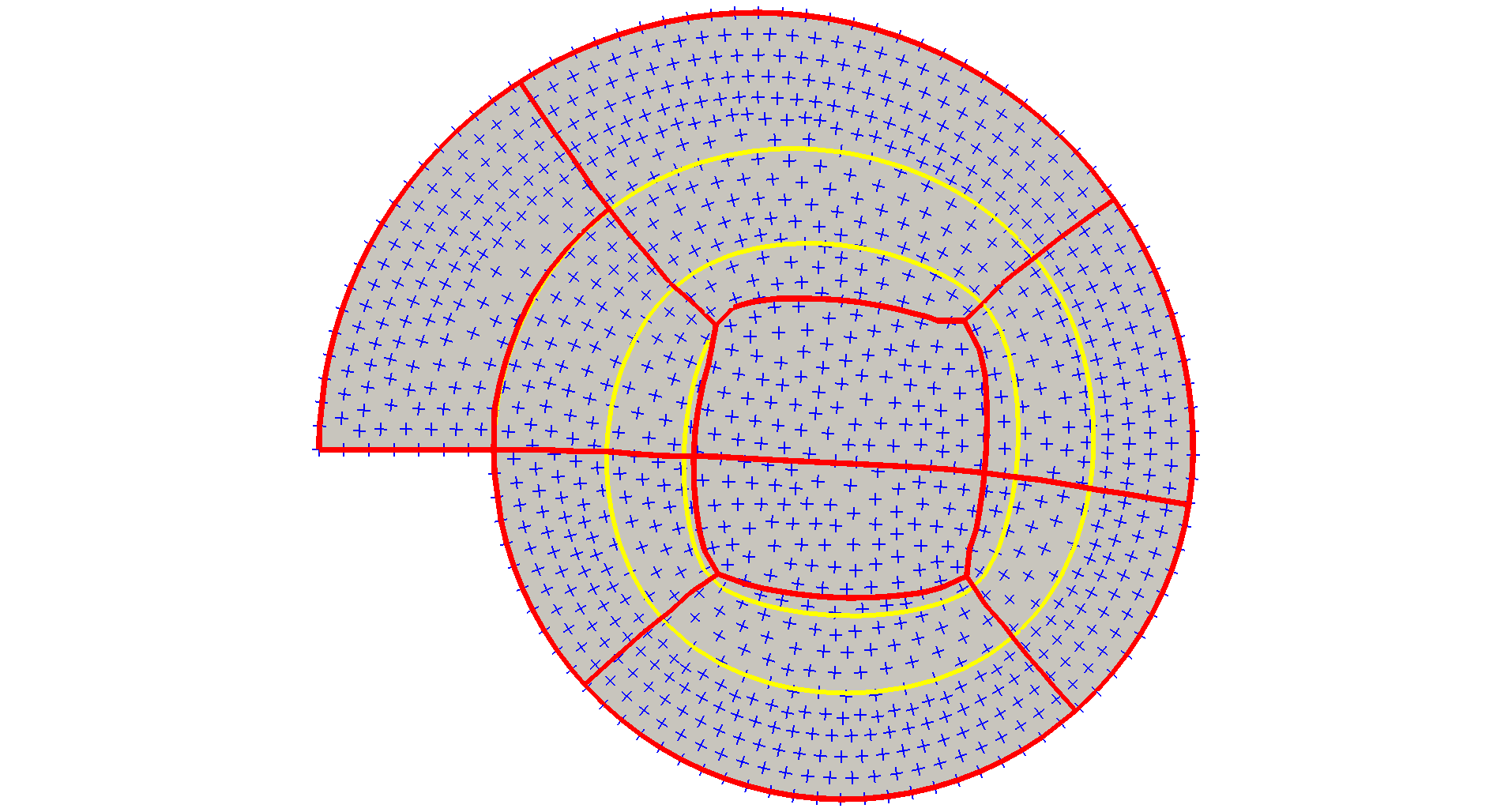}
\includegraphics[width=.45\linewidth, trim=300 0 300 0, clip]{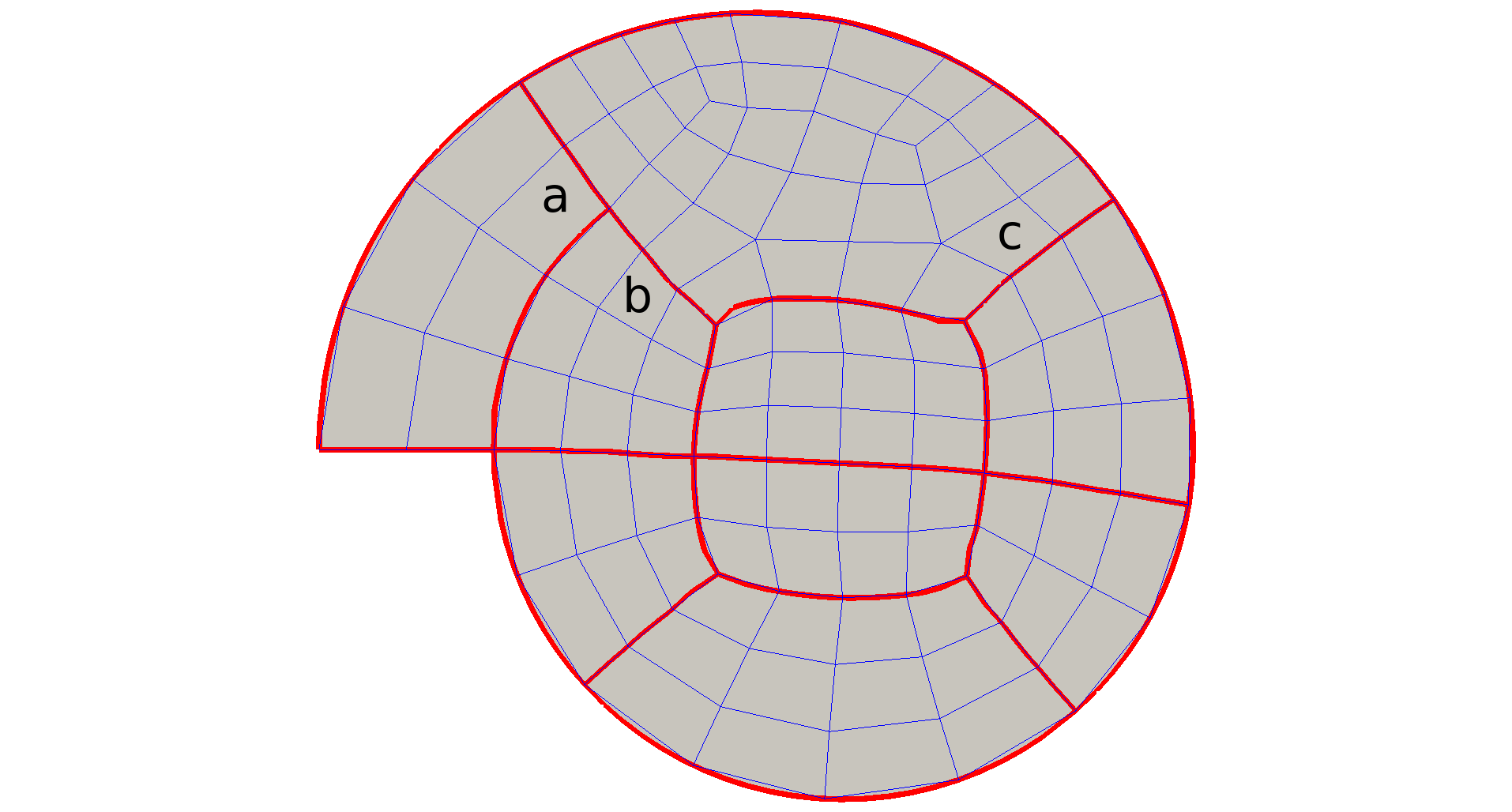}
\end{center}
\caption{A cross field on a domain that exhibits a limit cycle. {\bf (left)} The separatrix traced in yellow converges to a limit cycle. The partition shown in red inserts a T-junction at the first place where the yellow separatrix intersects another one. {\bf (right)} The four-sided regions without a T-junction can be meshed by conformal mapping. The region with the T-junction requires the insertion of more irregular nodes in order to conform with the mesh on its right and left sides.}
\label{fig:LimitCycle}
\end{figure}

\begin{theorem}\label{thm:quad-partitioning}
  Let $f$ be a boundary-aligned canonical harmonic cross field on $D$ whose singularities have index $\leq 1/4$. If no separatrix of $f$ converges to a limit cycle, then the separatrices of $f$, along with $\partial D$ partition $D$ into a quad layout.
\end{theorem}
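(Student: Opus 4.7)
My plan is to reduce the theorem to an angle-counting argument on each face of the partition. First, using the no-limit-cycle hypothesis, I would show that every separatrix has a well-defined terminus. Lifting to the Riemann cover $\mathcal{R}$ of Section~\ref{sec:streamline-def}, each separatrix is an integral curve of the smooth covering vector field $\widehat{\Lambda}_N$; by a Poincar\'e--Bendixson-type argument such a curve must either limit to a singularity, reach $\partial\mathcal{R}$, or accumulate on a cycle. The cycle alternative is excluded by hypothesis, and the others project down to termination at an interior singularity, a boundary singularity, or a smooth boundary point (which the separatrix must strike orthogonally by the boundary-aligned condition). Since integral curves of $\widehat{\Lambda}_N$ on $\mathcal{R}$ are pairwise disjoint, every intersection of separatrices inside $D$ is isolated and, away from singularities, occurs only at the prescribed angles $k\pi/2$. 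The separatrices together with $\partial D$ thus form a finite planar graph $G$, and its faces are the candidate regions of the partition.

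Next, I would pin down the local structure at each vertex of $G$. Because every singularity of $f$ has index $\le 1/4$, Lemmas~\ref{lem:int_singularity_partition} and \ref{lem:bnd_singularity_partition} precisely describe how the separatrices meeting a singular vertex divide a neighborhood into evenly angled sectors; at a smooth boundary vertex or a smooth interior crossing the adjacent sectors are $\pi/2$ right angles. In every case, Lemma~\ref{lem:sector_index} (or the direct calculation with $\Delta\!\arg(c)=0$ and $\interior(c)=\pi/2$ at smooth vertices) certifies that each corner of each face, viewed as a boundary singularity of $f|_{\Omega}$, has cross-field index exactly $1/4$. Substituting index $=1/4$ and $N=4$ into \eqref{eq:boundary_index} yields the uniform identity
\[
 \bigl(\pi-\interior(c)\bigr)\;-\;\tfrac{1}{4}\,\Delta\!\arg(c)\;=\;\tfrac{\pi}{2}
\]
at every corner $c$ of every face $\Omega$.

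Finally, I would count the corners by a Hopf Umlaufsatz / winding-number calculation on each $\Omega$. I would first verify that every face is simply connected: each interior singularity emits at least three separatrices, and the no-limit-cycle hypothesis prevents any closed subgraph of $G$ from being isolated from $\partial D$, so $G$ is connected to the outer boundary and no face has a handle. Since $f$ has no singularities inside $\Omega$, one branch of the cross field lifts to a continuous scalar angle $\theta$ on $\overline{\Omega}$ away from corners, and single-valuedness of that branch on the loop $\partial\Omega$ forces the total change of $\theta$ around the loop to vanish. On each smooth arc, the boundary tangent coincides with one of the four directions of the cross, so $\theta$ and the tangent angle change by the same amount there, while the Hopf Umlaufsatz gives total tangent rotation $2\pi$. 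The whole $2\pi$ is thus carried by the corner contributions, and the uniform identity above shows that each corner contributes exactly $\pi/2$, so the number of corners must equal $n=4$. The main obstacle in this plan is the bookkeeping in the last step: one must carefully match the sign and branch of $\theta$ at each vertex type (interior singularity, boundary singularity, smooth boundary point, smooth crossing) with the orientation convention used for $\Delta\!\arg(c)$ in Definition~\ref{def:boundary_singularity}, and justify the Hopf Umlaufsatz for a piecewise-smooth boundary that may exhibit reflex Euclidean corners (for instance the narrow sectors coming from negative-index singularities). Once this is settled, the four-sided count holds on every face and the separatrices together with $\partial D$ constitute a quad layout.
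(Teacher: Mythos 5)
Your proposal is correct and follows essentially the same skeleton as the paper's proof: Poincar\'e--Bendixson to terminate every separatrix at a singularity or at $\partial D$, the corner computations of \cref{lem:int_singularity_partition,lem:bnd_singularity_partition,lem:sector_index} to certify that every corner of every face has index $+1/4$, and a global index count to force four corners per face. The only real divergence is the last step: the paper invokes the Poincar\'e--Hopf theorem for cross fields to equate the total corner index of a face with its Euler characteristic, whereas you re-derive that identity by hand from the Hopf Umlaufsatz together with the single-valuedness of a branch of the cross angle on a singularity-free region; these are the same computation, so nothing is gained or lost there. Where your route is genuinely weaker is the treatment of faces with two boundary components. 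The Poincar\'e--Hopf argument applies to a face of any topology and disposes of the doubly-connected case at once (total index $0$, all corner indices positive, hence no corners, hence the face is an annulus bounded by closed streamlines), while your Umlaufsatz only applies after you know every face is a disk --- and the justification you offer for that (``the no-limit-cycle hypothesis prevents any closed subgraph of $G$ from being isolated from $\partial D$'') is not an argument: a singularity on a closed loop of separatrices could send its remaining separatrices inward rather than toward $\partial D$, so connectivity of $G$ to the outer boundary does not follow from the separatrix count alone. The clean way to exclude an annular face within your framework is the same index computation: a second boundary component would have to be a corner-free closed curve built from separatrices, which is impossible because separatrices terminate at singularities and the sector angles $2\pi/(N-d)$ there are never equal to $\pi$. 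With that substitution your proof closes; the remaining sign and branch bookkeeping you flag is real but routine.
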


\begin{proof}
  By \cref{lem:int_singularity_partition}, a finite number of separatrices meet at each singularity. Since no separatrix converges to a limit cycle on $D$ (and consequently, none on $\mathcal{R}$) the Poincar\'e-Bendixson theorem for manifolds \cite{schwartz_generalization_1963} guarantees that each separatrix must either end at another singularity, or exit the domain orthogonal to the boundary. The set of separatrices along with $\partial D$ then partition the domain into bounded regions that contain no singularities. If a curve of the boundary of any region meets another curve, the corner where they meet must have index $+1/4$ because they are either separatrices intersecting each other or the boundary orthogonally, or they meet at singularities and by \cref{lem:sector_index} have index $+1/4$. Since there are no internal singularities, the total index must come from corner singularities, and since the index for each corner is positive, the sum must be non-negative. By the Poincar\'e-Hopf theorem for cross fields \cite{ray_n-symmetry_2008}, the total index of a given region must equal the Euler characteristic of that region. The genus of each region is zero because the domain is defined in two dimensions. Thus there are only two possibilities; either there is one boundary and the Euler characteristic is one, in which case we have four corners each of index +1/4, a quad element. Otherwise, there are two boundaries and the Euler characteristic is zero. In this case the total index is zero, so there are no singularities, i.e., an annulus.
\end{proof}

The main takeaway is that when no separatrix of the cross field converges to a limit cycle, the topology of the cross field is sufficient to uniquely determine a quad layout, which can then easily be meshed by mapping a regular grid into each region. The irregular nodes of any such mesh mirror exactly the singularity structure of the cross field through the relationship established in \cref{lem:int_singularity_partition,lem:bnd_singularity_partition}. Further, since the index of each singularity is $\leq 1/4$, each singularity will have at least three separatrices, and so none of the quad elements of the mesh will be degenerate; see \cref{fig:degenerate-singularities}.

The failure case occurs when one of the separatrices converges to a limit cycle. In a case like this, we propose that such separatrices can easily be handled by allowing T-junctions on the quad layout; see \cref{fig:LimitCycle}. This complicates the meshing problem slightly; a domain can no longer be meshed by simply mapping a regular grid into each region of the quad layout. Instead, regions adjacent to T-junctions will require additional singularities to resolve the differing number of quads needed on opposite sides of the region. This however has been addressed in \cite{_cubit_2017,takayama_robust_2013,takayama_pattern-based_2014}, and so partitioning the domain into a quad layout with T-junctions is sufficient to produce a quad mesh where the number, location, and valence of its irregular nodes are determined only by the singularities and T-junctions in the partition, and the target density of the quad mesh. See \cref{sec:meshing} for more details.

It is not immediately clear that a separatrix can always be cut off by another separatrix to form a T-junction. The question is even more delicate on 2-manifolds with genus greater than zero. Though not actually used in the robust implementation of their algorithm, in the proof of their main result, Myles et al. \cite{myles_robust_2014} rely on including sections of a sufficiently fine background grid in their partition in order to guarantee that separatrices will terminate. The following two results show that in two dimensions, separatrix tracing alone is sufficient to guarantee a partition into four-sided regions.

\begin{theorem}\label{thm:must_intersect}
Let $f$ be a smooth boundary-aligned cross field on $D$ with a finite singularity set. Every separatrix that converges to a limit cycle intersects at least one other separatrix.
\end{theorem}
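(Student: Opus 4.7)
The plan is to argue by contradiction. Suppose $\gamma$ is a separatrix emanating from a singularity $a$ that converges to a limit cycle $L$, yet $\gamma$ does not intersect any other separatrix of $f$ at a regular (non-singular) point. Since $D$ is simply connected and $L$ is a Jordan curve, $L$ bounds a topological disk $U \subset D$; without loss of generality $\gamma$ lies outside $L$. Locally near $L$ the cross field splits into two orthogonal line fields $\mathcal{L}_1$ and $\mathcal{L}_2$, with $L$ (and the spiraling tail of $\gamma$) an orbit of $\mathcal{L}_1$. Because two orbits of the same line field lift to non-crossing orbits of the covering vector field on $\mathcal{R}$ (\cref{sec:streamline-def}), any separatrix that could meet $\gamma$ at a regular point must be an orbit of $\mathcal{L}_2$, and consequently meet $L$ transversally since $\mathcal{L}_2 \perp \mathcal{L}_1$ along $L$.

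The first step is to exhibit such an $\mathcal{L}_2$-separatrix $\eta$ crossing $L$. The closed disk $\overline{U}$ is invariant under the $\mathcal{L}_1$-flow, so a Poincar\'e-Hopf argument applied to the cross field on a disk whose boundary is an $\mathcal{L}_1$-orbit forces at least one cross field singularity $a' \in U$. From $a'$ emanate several separatrices by \cref{lem:int_singularity_partition}; an angle computation at $a'$ (using that the $N-d$ separatrix directions pick up consecutive cross-direction labels modulo $N$, so both the $\mathcal{L}_1$ and $\mathcal{L}_2$ labels are represented whenever $N-d \geq 2$) shows that at least one of these separatrices is an $\mathcal{L}_2$-orbit away from $a'$. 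If this $\mathcal{L}_2$-separatrix escapes $\overline{U}$, it must exit transversally through $L$ and we take it to be $\eta$. Otherwise every $\mathcal{L}_2$-separatrix from every singularity in $U$ is trapped inside $\overline{U}$, converging either to another interior singularity or to an inner limit cycle of $\mathcal{L}_2$; recursing the same argument within each inner limit cycle and invoking the finiteness of the singularity set forces the recursion to terminate with some separatrix escaping through $L$.

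Once $\eta$ is produced, the conclusion is immediate from the density of the spiral. Let $q \in L$ be the transversal crossing point of $\eta$ and $L$. Because $\gamma$ converges to $L$, every tubular neighborhood of $L$ around $q$ is entered by infinitely many turns of $\gamma$; since $\eta$ is smooth and transverse to $L$ at $q$, it penetrates every such tubular neighborhood on the $\gamma$-side and therefore crosses infinitely many turns of $\gamma$. Hence $\eta \cap \gamma$ contains regular points, contradicting the hypothesis that $\gamma$ meets no other separatrix. I expect the main technical obstacle to be the trapping analysis in the middle paragraph: rigorously ruling out a fully closed $\mathcal{L}_2$-separatrix network inside $U$ requires careful use of the finite singularity set together with an index count for $\mathcal{L}_2$ on $U$ (with $L$ transverse to $\mathcal{L}_2$), and handling nested inner limit cycles cleanly may require an induction on the number of singularities contained in the innermost remaining limit cycle at each stage of the recursion.
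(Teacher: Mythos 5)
Your closing step is sound: a separatrix crossing the limit cycle transversally at a point $q$ must meet infinitely many turns of the spiraling separatrix, by the standard flow-box argument. The genuine gaps are in the construction of that transversal separatrix $\eta$. First, your argument rests on a global splitting of the cross field into orthogonal line fields $\mathcal{L}_1,\mathcal{L}_2$ on the disk $U$ bounded by $L$. But Poincar\'e--Hopf forces $U$ to contain singularities of total index $+1$, and for a cross field these generically have fractional index, so the holonomy around them permutes the branches: there is no consistent $\mathcal{L}_1/\mathcal{L}_2$ labeling on $U$ minus its singularities, only on a collar of $L$. Consequently ``an $\mathcal{L}_2$-separatrix emanating from a singularity in $U$'' is not well-defined, and the dichotomy driving your recursion collapses. (Your appeal to \cref{lem:int_singularity_partition} is also not available under the hypotheses of \cref{thm:must_intersect}, which assume only a smooth cross field with finite singularity set, not a canonical harmonic one.) Second, the trapping recursion is not a proof: your case analysis for a trapped separatrix omits the possibilities that it spirals onto $L$ itself from inside, is periodic, or accumulates in some other way, and even granting the stated cases nothing forces the recursion to terminate with a separatrix that escapes through $L$ rather than cycling among nested limit cycles. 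You flag this as the main obstacle, but it is precisely the content that would need to be supplied.

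The paper sidesteps all of this by never entering the disk bounded by $\gamma_\infty$. It launches the one-parameter family of orthogonal streamlines $s_{\perp,t}^+$ from the points $s(t)$ of the given separatrix toward $\gamma_\infty$; those with $t$ large enough hit $\gamma_\infty$. Either this holds for all $t$ (so $s_{\perp,0}$, passing through the starting singularity, is already a separatrix meeting $s$), or one takes the infimum $\tau$ of such $t$ and uses continuous dependence on initial conditions to rule out, via Poincar\'e--Bendixson, every fate for $s_{\perp,\tau}^+$ except terminating at a singularity --- making $s_{\perp,\tau}$ a separatrix that meets $s$ at $s(\tau)$. If you want to salvage your outline, you would need to replace the interior Poincar\'e--Hopf/recursion step with an argument of this continuity type, or restrict to canonical harmonic fields and carefully track branch labels along explicit paths from $L$ to each singularity.
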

\begin{proof}
  Let $s \colon [0,\infty) \rightarrow D$ be a separatrix beginning at a singularity $s(0)$ with $s(t)$ converging to the limit cycle, $\gamma_\infty$, as $t \rightarrow \infty$. Since $s$ converges to $\gamma_\infty$, there exists a $t^*$ such that a streamline intersecting $s$ at $s(t^*)$ must also intersect $\gamma_\infty$. Let $s_{\perp,t^*}^+$ be the segment of this streamline beginning at $s(t^*)$ and continuing in the direction towards $\gamma_\infty$. Consider the the family of streamline segments $s_{\perp,t}^+$ for $t \in [0,\infty)$ beginning at $s(t)$, and continuing in the direction from $s$ consistent with $s_{\perp,t^*}^+$. Let this family of curves be parameterized so that each curve, $s_{\perp,t}^+(r)$, has unit speed and starts on $s$ at $s(t)$ when $r = 0$. Finally, let $s_{\perp,t}$ be the corresponding streamlines.

  If $s_{\perp,t}^+$ intersects $\gamma_\infty$, for every $t$, then $s_{\perp,0}^+$ must also be a separatrix (since it starts at a singularity), and since it intersects $\gamma_\infty$ it must also intersect $s$. If $s_{\perp,t}^+$ does not intersect $\gamma_\infty$, for all $t$, then there is a greatest lower bound, $\tau$, such that for $t > \tau$, $s_{\perp,t}^+$ must intersect $\gamma_\infty$.

  We claim that $s_{\perp,\tau}^+$ cannot intersect $\gamma_\infty$. If it did, then by the stability of ordinary differential equations with respect to initial data, \cite[Proposition 2.76]{chicone_ordinary_2006}, there would be a neighborhood around $s(\tau)$ within which any streamline would also intersect $\gamma_\infty$, contradicting the fact that $\tau$ is a greatest lower bound.

  We claim that $s_{\perp,\tau}^+$ connects to a singularity, making $s_{\perp,\tau}$ a separatrix, and argue by contradiction. If $s_{\perp,\tau}^+$ does not connect to a singularity, then by the Poincar\'e-Bendixson theorem, either $s_{\perp,\tau}$ is a periodic orbit (case 1), or $s_{\perp,\tau}^+$ will exit the boundary (case 2) or approach a limit cycle (case 3).

  \underline{Case 1:} If $s_{\perp,\tau}$ is a periodic orbit, then since both $s_{\perp,\tau}$ and $\gamma_\infty$ are closed sets, there is a finite distance, $\delta>0$, between $s_{\perp,\tau}$ and $\gamma_\infty$. Let $V = \{x \in D \colon d(x,s_{\perp,\tau}) < \delta/2\}$. By \cite[Proposition 2.76]{chicone_ordinary_2006}, there exists an $\varepsilon > 0$ such that $s_{\perp,t}^+$ must remain in $V$ for $t \in (\tau-\varepsilon,\tau+\varepsilon)$. Then $s_{\perp,\tau+\varepsilon/2}^+$ must remain in $V$, and on the other hand, must intersect $\gamma_\infty$ since $\tau + \varepsilon/2 > \tau$. This is a contradiction since $V \cap \gamma_\infty$ is empty.

  \underline{Case 2:} Suppose that $s_{\perp,\tau}^+$ exits the boundary. Since $s_{\perp,\tau}^+$ and $\gamma_\infty$ are a compact sets, there exists a $\delta>0$ such that the set $V = \{x \in D \colon  d(x,s_{\perp,\tau}^+) < \delta/2\}$ has an empty intersection with $\gamma_\infty$. Again by \cite[Proposition 2.76]{chicone_ordinary_2006} there is an $\varepsilon>0$ such that $s_{\perp,t}^+$ must remain in $V$ for all $r \geq 0$ when $t \in (\tau-\varepsilon,\tau+\varepsilon)$. But $s_{\perp,t+\varepsilon/2}^+$ intersects $\gamma_\infty$, which again is a contradiction since $V \cap \gamma_\infty$ is empty.

  \underline{Case 3:} Suppose $s_{\perp,\tau}^+$ approaches a limit cycle $\ell_\infty$. Then there exists an $r^*$ and an $\varepsilon_1>0$ such that if $t \in (\tau-\varepsilon_1,\tau+\varepsilon_1)$, then $s_{\perp,t}^+(r)$ approaches $\ell_\infty$ asymptotically as $r$ increases from $r^*$ to $\infty$. Consider the segment of $s_{\perp,\tau}^+(r)$ for $r \in [0,r^*]$. This segment is a closed set, so again there is a $\delta$ such that the set $V = \{x \in D \colon d(x,s_{\perp,\tau}^+(r)) < \delta/2 \text{ for } r \in [0,r^*] \}$  has an empty intersection with $\gamma_\infty$. Again, by \cite[Proposition 2.76]{chicone_ordinary_2006}, there is a $\varepsilon_2>0$ such that $s_{\perp,t}^+$ must remain in $V$ for all $r \in [0,r^*]$ when $t \in (\tau-\varepsilon_2,\tau+\varepsilon_2)$. Let $\varepsilon = \min(\varepsilon_1,\varepsilon_2)$. Then for $t \in (\tau-\varepsilon,\tau+\varepsilon)$, $s_{\perp,t}+(r)  \in V$ for $r < r^*$, and approaches $\ell_\infty$ asymptotically afterwards. Thus $s_{\perp,t+\varepsilon/2}^+$ cannot intersect $\gamma_\infty$, which again is a contradiction.
\end{proof}

\begin{figure}
  \begin{center}
    \includegraphics[width=.65\linewidth, trim=0 0 0 0, clip]{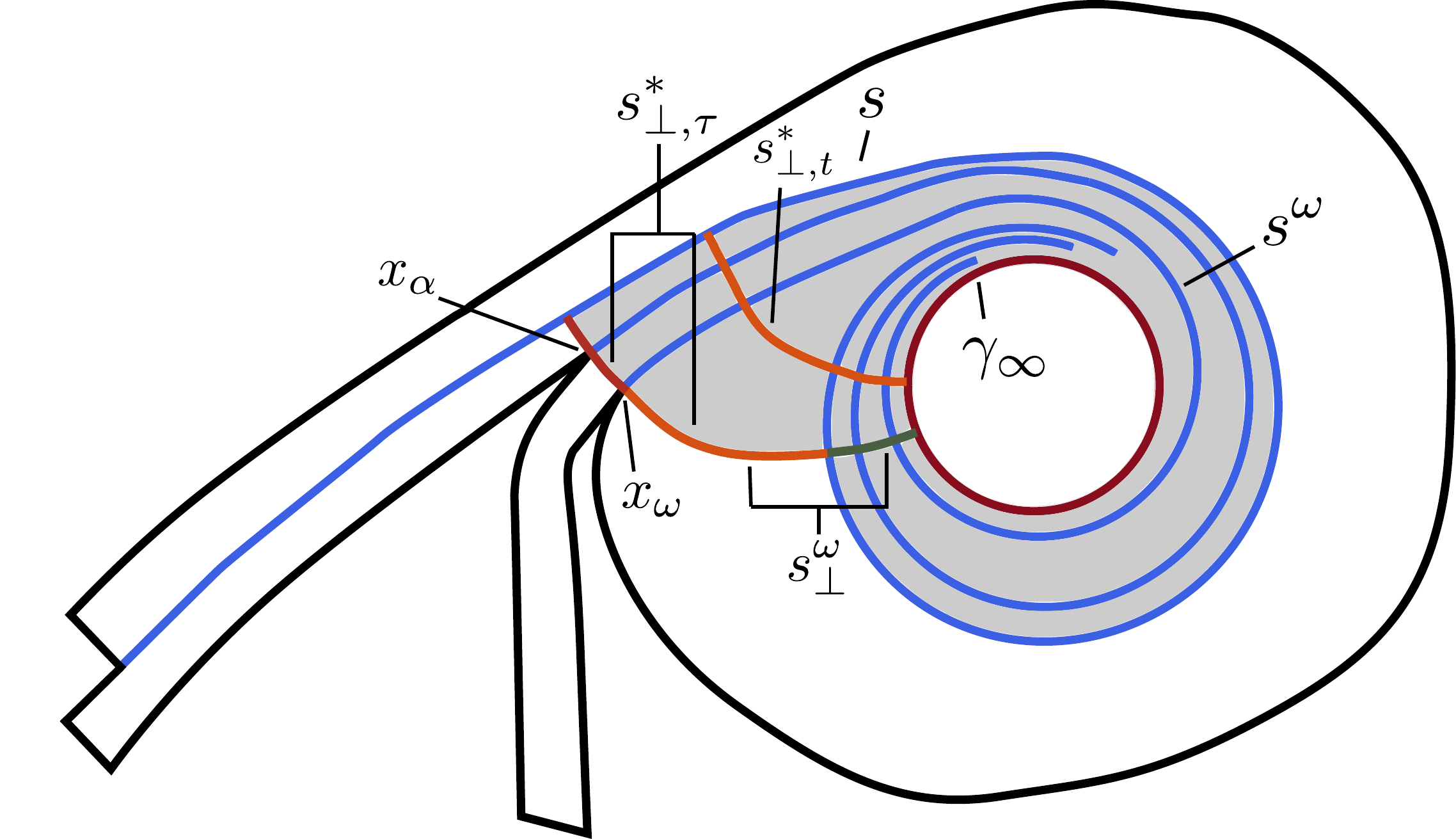}
  \end{center}
\caption{An illustration of case B in the proof  of \cref{cor:orthogonal-separatrix}. The area filled in with grey is the set A. }\label{fig:orthogonal-separatrix}
\end{figure}

\begin{corollary}\label{cor:orthogonal-separatrix}
  Let $f$ be a boundary-aligned canonical harmonic map for some singularity configuration where each singularity has index $\leq 1/4$. Let $\gamma_\infty$ be a limit cycle with a separatrix, $s$, converging to it. There is a separatrix, $s'$, (not necessarily the same as $s$) that converges to $\gamma_\infty$ and intersects a separatrix, $s'_\perp$, that begins at the same singularity as $s'$ and intersects $\gamma_\infty$.
\end{corollary}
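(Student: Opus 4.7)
The plan is to start with the separatrix $s$ from the hypothesis, apply \cref{thm:must_intersect} to obtain a separatrix $t$ that crosses $s$ transversely at some point $p$, and then split into two cases based on whether the forward arc of $t$ past $p$ reaches $\gamma_\infty$. \Cref{lem:int_singularity_partition} (or \cref{lem:bnd_singularity_partition} at the boundary), which guarantees at least three separatrices at each singularity of index $\leq 1/4$, combined with the local sector structure of \cref{lem:sector_index} and the finiteness of the singularity set, will force the desired configuration.

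First I would invoke \cref{thm:must_intersect} to produce a separatrix $t$ intersecting $s$ at some point $p$; this intersection is at a right angle, since streamlines of a cross field can only meet at integer multiples of $\pi/2$. Let $\tilde{a}$ denote the singularity where $t$ begins. In Case A, the arc of $t$ past $p$ meets $\gamma_\infty$ transversely. By \cref{lem:int_singularity_partition}, at least three separatrices emanate from $\tilde{a}$, and by \cref{lem:sector_index} the sectors between them locally resemble right corners. Considering the separatrix from $\tilde{a}$ that is adjacent to $t$ on the side containing the intersection with $\gamma_\infty$, a Poincar\'e-Bendixson argument analogous to the one in the proof of \cref{thm:must_intersect} (ruling out periodic orbits, boundary exits, and convergence to other limit cycles, all using stability of orbits with respect to initial data) should show that this adjacent separatrix itself converges to $\gamma_\infty$. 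Taking $s'$ to be this adjacent separatrix and $s'_\perp = t$ would then give the required pair.

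In Case B (depicted in \cref{fig:orthogonal-separatrix}), the arc of $t$ past $p$ does not reach $\gamma_\infty$, so by the same exclusion arguments used in the proof of \cref{thm:must_intersect} it must instead terminate at a second singularity $\tilde{a}'$; together with the arc of $s$ from its starting singularity to $p$, this cobounds a bounded region $A$ disjoint from $\gamma_\infty$ (the grey region in the figure). I would then reapply \cref{thm:must_intersect} to a portion of $s$ further along (which still converges to $\gamma_\infty$), producing a new separatrix $t'$ starting at a singularity outside $A$. Each Case B iteration either lands in Case A or identifies a fresh singularity unavailable to earlier iterations because of the accumulated excluded region, so by finiteness of the singularity set the process must eventually fall into Case A. The main obstacle is the Case A step: verifying that the adjacent separatrix at $\tilde{a}$ indeed converges to $\gamma_\infty$, rather than escaping to another singularity or a different limit cycle, requires adapting the Poincar\'e-Bendixson exclusion arguments from the proof of \cref{thm:must_intersect} to the local sector at $\tilde{a}$, exploiting that $\gamma_\infty$ is the limit cycle to which $s$ converges in order to control orbits on the relevant side of $t$.
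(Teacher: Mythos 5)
There is a genuine gap, and it sits exactly where you flagged it: the claim in your Case A that the separatrix adjacent to $t$ at $\tilde a$ converges to $\gamma_\infty$ is the entire content of the corollary, and nothing in your sketch establishes it. An arbitrary separatrix leaving $\tilde a$ into the sector containing $\gamma_\infty$ has no reason to converge to $\gamma_\infty$: it may terminate at another singularity, exit through $\partial D$, or spiral into a different limit cycle, and the Poincar\'e--Bendixson exclusion arguments from \cref{thm:must_intersect} do not transfer, because those arguments all hinged on continuity with respect to initial data applied to a one-parameter family of streamlines \emph{known} to reach $\gamma_\infty$ for all parameters past a threshold. You have no such family here, since you started from an arbitrary intersecting separatrix handed to you by the statement of \cref{thm:must_intersect} rather than from the perpendicular family constructed in its proof. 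The paper's proof keeps that family: it considers the streamlines $s_{\perp,t}^+$ leaving $s(t)$ orthogonally toward $\gamma_\infty$, and in its Case A (all of them reach $\gamma_\infty$) the witness is simply $s'=s$ itself with $s'_\perp = s_{\perp,0}$, the perpendicular separatrix issuing from $s$'s own starting singularity. In its Case B it takes the critical parameter $\tau$, shows $s_{\perp,\tau}^+$ must end at a singularity, and forms the region $A=\cup_{t>\tau}s_{\perp,t}^*$, which is foliated by arcs all reaching $\gamma_\infty$; any separatrix entering $A$ orthogonally to those arcs is trapped and therefore forced to converge to $\gamma_\infty$. That trapping region is the mechanism your argument is missing, and your region $A$ (cobounded by arcs of $t$ and $s$) does not have this foliation property.

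Two secondary problems compound this. In your Case B you assert that the forward arc of $t$ past $p$, failing to reach $\gamma_\infty$, ``must instead terminate at a second singularity''; but without the continuity-in-initial-data setup there is no contradiction with it exiting the boundary or approaching some other limit cycle. And your termination argument -- ``reapply \cref{thm:must_intersect} to a portion of $s$ further along'' -- is not available: the theorem asserts that the separatrix $s$ intersects at least one other separatrix, and a tail of $s$ is not a new separatrix to which the theorem can be re-applied, so the claimed production of a fresh singularity at each iteration is unsupported. The paper avoids any iteration by extracting, in a single step, the singularity $x_\omega$ farthest along the critical boundary curve $s_{\perp,\tau}^*$ and reading off both $s'$ (its separatrix entering $A$) and $s'_\perp$ (its continuation along $s_{\perp,\tau}^*$, which crosses $s$ and hence reaches $\gamma_\infty$).
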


\begin{proof}
Let $s_{\perp,t}^+$ be  defined as in the proof of \cref{thm:must_intersect}. Either $s_{\perp,t}^+$ intersects $\gamma_\infty$ for all $t$ (case A), or not (case B). In case A, by the proof of \cref{thm:must_intersect}, $s$ intersects $s_{\perp,0}$, which is a separatrix beginning at the same singularity and intersecting $\gamma_\infty$, so letting $s' = s$ and $s'_\perp = s_{\perp,0}$, the proof is complete. In case B, as in the proof of \cref{thm:must_intersect}, there is a greatest lower bound, $\tau$, such that for $t > \tau$, $s_{\perp,t}^+$ intersects $\gamma_\infty$ and $s_{\perp,\tau}^+$ connects to a singularity, $x_\alpha$. For $t > \tau$, let $s_{\perp,t}^*$ be the segment of $s_{\perp,t}^+$ between $s$ and $\gamma_\infty$. The segment $s_{\perp,t}^*$ does not contain a singularity for any $t > \tau$, otherwise $\tau$ would not be a greatest lower bound. Let $A = \cup_{t > \tau}{s_{\perp,t}^*}$. $A$ contains no singularities and any streamline in it running orthogonal to the segments $s_{\perp,t}^*$ must converge to $\gamma_\infty$.

Let $s_{\perp,\tau}^*$ be  the segment of $\boundary{A}$ that starts at $s(\tau)$, runs orthogonal to $s$, and ends at the first intersection of $\boundary{A}$ and $s$. This curve contains $x_\alpha$ and may contain other singularities. However, $s_{\perp,\tau}^*$ is aligned with the cross field,   so it is a piecewise smooth finite union of separatrices and segments of separatrices. Each singularity on $s_{\perp,\tau}^*$ has a separatrix entering $A$ and running orthogonal to every $s_{\perp,t}^*$, thus each of these separatrices must converge to $\gamma_\infty$. Consider $x_\omega$, the singularity that is farthest from $s(\tau)$ along $s_{\perp,\tau}^*$. Let $s^\omega$ be the separatrix beginning at $x_\omega$ converging to $\gamma_\infty$. The separatrix, denoted $s^\omega_\perp$, beginning at $x_\omega$ and continuing along $s_{\perp,\tau}^*$ intersects $s$, and therefore must intersect $\gamma_\infty$ since $s_{\perp,t}^*$ intersects $\gamma_\infty$ for all $t > \tau$.  Then setting $s' = s^\omega$ and $s'_\perp = s^\omega_\perp$ satisfies the statement of the theorem.
\end{proof}

\Cref{thm:must_intersect,cor:orthogonal-separatrix}  allow us to deal with the issue of limit cycles in a predictable manner. When a limit cycle occurs in the cross field, the partition can be obtained by first tracing out all of the separatrices that don't converge to a limit cycle, and then tracing out each separatrix that does converge to a limit cycle until it reaches another separatrix, placing a T-junction at that point. The exact process for partitioning a domain into a quad layout with T-junctions is specified in \cref{alg:partitioning}. We denote the set of separatrices that converge to any limit cycle by $\mathcal{P}$.

\begin{algorithm}[t!]
\caption{Partitioning $D$ into a quad layout with T-junctions.} \label{alg:partitioning}

\vspace{.2cm}

\begin{algorithmic}[t]
\STATE{\bfseries Input:} A domain $D$ satisfying \cref{as:dom}, and a boundary-aligned canonical harmonic cross field $f$ with singularities of index $\leq 1/4$.

\vspace{.2cm}

\STATE{\bfseries Output:} A set $\mathcal{B}$ containing limit cycles and separatrices that define a quad layout with T junctions.

\vspace{.2cm}

\STATE Let $\mathcal{S}$ be the set of separatrices that do not converge to a limit cycle. Let $\mathcal{P}$ be the set of separatrices that do. Let $\mathcal{L}$ be the set of limit cycles.

\vspace{.2cm}

\STATE Initialize the set $\mathcal{B} = \mathcal{S}$.

\vspace{.2cm}

\FOR {$l \in \mathcal{L}$}
  \IF {no element of $\mathcal{B}$ intersects $l$}
  \STATE (i) Add $l$ to $\mathcal{B}$.
  \STATE (ii) By \cref{cor:orthogonal-separatrix}, there is an element of $\mathcal{P}$ that intersects $l$. Let $\rho'$ be the portion of that separatrix beginning at the singularity and ending in a T-junction with $l$.
  \STATE (iii) Add $\rho'$ to $\mathcal{B}$.
  \STATE (iv) remove $\rho$ from $\mathcal{P}$.
  \ENDIF

\ENDFOR

\FOR {$\rho \in \mathcal{P}$}
\STATE Let $\rho'$ be the curve segment of $\rho$ beginning at the singularity and continuing until it intersects an element of $\mathcal{B}$. Add $\rho'$ to $\mathcal{B}$.
\ENDFOR

\end{algorithmic}
\end{algorithm}

\begin{theorem}\label{thm:partitioning_termination}
Given a domain D satisfying \cref{as:dom}, and a boundary-aligned canonical harmonic cross field f with singularities of index $\leq 1/4$, \Cref{alg:partitioning} is well-defined, terminates in finite time, and partitions $D$ into a quad layout with exactly $|\mathcal{P}|$ T-junctions.
\end{theorem}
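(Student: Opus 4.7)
The plan is to verify the three claims — well-definedness, finite termination, and the structural assertion about the quad layout — in sequence. As a preliminary, I would argue that $\mathcal{S}$, $\mathcal{P}$, and $\mathcal{L}$ are all finite. By \cref{thm:existence} together with the renormalized-energy analysis of \cref{sec:renormalized} and \cref{lem:int_singularity_partition,lem:bnd_singularity_partition}, $f$ has finitely many interior and boundary singularities, each emitting only finitely many separatrices, so $\mathcal{S}\cup\mathcal{P}$ is finite; since each relevant limit cycle in $\mathcal{L}$ is the $\omega$-limit set of at least one element of $\mathcal{P}$ (limit cycles accumulated only by periodic orbits contribute nothing to the partition, so may be dropped from $\mathcal{L}$), we have $|\mathcal{L}|\le|\mathcal{P}|<\infty$.

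For well-definedness, the two nontrivial points are step (ii) of the first \textbf{for} loop and the body of the second \textbf{for} loop. In step (ii), \cref{cor:orthogonal-separatrix} applied to $l$ supplies a separatrix $s'\in\mathcal{P}$ converging to $l$ together with a sibling $s'_\perp$ from the same singularity that crosses $l$ transversally; truncating at the first point at which this separatrix meets $l$ — unambiguously defined by the orbit structure near the limit cycle — produces the finite curve $\rho'$. In the second loop, any remaining $\rho\in\mathcal{P}$ converges to some limit cycle $l'$ for which, by construction of the first loop, either $l'\in\mathcal{B}$ or some element of $\mathcal{B}$ already crosses $l'$; since $\rho$ accumulates on $l'$, stability of ODE flows with respect to initial data (as invoked in the proof of \cref{thm:must_intersect}) forces $\rho$ to intersect $\mathcal{B}$ at finite arc length, and the first such crossing defines $\rho'$. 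Finite termination is then immediate from the finiteness of $\mathcal{L}$ and $\mathcal{P}$.

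For the quad-layout claim, the T-junction count is pure bookkeeping: each $\rho\in\mathcal{P}$ is truncated exactly once — either as the $\rho'$ inserted in step (iii) or in the second loop — so the number of T-junctions equals $|\mathcal{P}|$. To show that the connected components of $D\setminus\bigcup\mathcal{B}$ are four-sided, I would adapt the proof of \cref{thm:quad-partitioning}. Each component is singularity-free; its boundary is piecewise smooth and composed of separatrix segments, arcs of $\partial D$, and possibly arcs of a limit cycle; by \cref{lem:sector_index} every genuine corner has index $+1/4$, and a T-junction contributes a $+1/4$ corner only to the region the truncated separatrix enters, behaving as a straight edge on the far side. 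A region-by-region application of the Poincar\'e--Hopf theorem for cross fields then reduces, exactly as in \cref{thm:quad-partitioning}, to Euler characteristic $1$ with four $+1/4$ corners (a quad, possibly with T-junctions on some edges), with annular components ruled out because any closed inner boundary would be a limit cycle already pierced by a separatrix in $\mathcal{B}$. The main obstacle is this last Poincar\'e--Hopf bookkeeping at T-junctions: correctly attributing the interior angle at each T-junction to only one of the two adjacent regions, and ruling out pathological annular components after T-junction insertion, are the delicate steps; the remainder is a direct consequence of the lemmas in \cref{sec:behavior}.
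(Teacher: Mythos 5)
Your proposal is correct and follows essentially the same route as the paper's proof: well-definedness of the first loop via \cref{cor:orthogonal-separatrix}, of the second via \cref{thm:must_intersect} together with the transversal separatrix added in the first loop, the T-junction count as direct bookkeeping, and four-sidedness via the corner-index and Poincar\'e--Hopf argument recycled from \cref{thm:quad-partitioning}. The additional care you take (finiteness of $\mathcal{S}$, $\mathcal{P}$, $\mathcal{L}$, attributing each T-junction corner to a single region, and excluding annular components) only makes explicit points the paper's proof leaves implicit.
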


\begin{proof}
The first `for' loop is well-defined by \cref{cor:orthogonal-separatrix}.

In the second `for' loop, each separatrix that converges to a limit cycle is guaranteed to intersect another separatrix by \cref{thm:must_intersect}. If it does not intersect a separatrix in $\mathcal{S}$, then it must intersect one that runs orthogonal to the limit cycle itself which is added to $\mathcal{B}$ in the first for loop.

Clearly \cref{alg:partitioning} partitions the domain into regions without singularities. As in \cref{thm:quad-partitioning}, each corner of the partition must have an index of $1/4$ because it either meets at a singularity, intersects another separatrix, or exits the boundary orthogonally. Following the argument from the second part of the proof of \cref{thm:quad-partitioning}, this is enough to guarantee that each region is a quad or an annulus. There are exactly $|\mathcal{P}|$ T-junctions because they are created exactly when we trace a separatrix in $\mathcal{P}$ until it reaches another separatrix that is already in the set $\mathcal{B}$.
\end{proof}

\section{Computational Methods} \label{sec:computational_methods}
  In this section we describe computational methods to practically apply the results of \cref{sec:ginzburg-landau,sec:topology}. We introduce a new method for cross field design, discuss its implementation on a discrete mesh, and compare it to other cross field design methods. We also implement \cref{alg:partitioning} in a discrete setting and use the resulting partitions to obtain quad meshes on various geometries.

\subsection{Cross Field Design and the MBO Diffusion Generated Method}\label{sec:MBO}
Another consequence of the connection between the cross field design problem and Ginzburg-Landau theory is an efficient computational method for minimizing the Ginzburg-Landau functional \cref{eq:GL_functional} based on a generalization of the Merriman-Bence-Osher (MBO) diffusion generated method.
The MBO diffusion generated method was originally introduced in \cite{merriman_computational_1992,merriman_diffusion_1993,merriman_motion_1994} in the context of mean curvature flow and extended to  \cref{eq:GL_functional} in  \cite{ruuth2001} to study the evolution of vortex filaments in three dimensions. The details of the algorithm are given in  \cref{alg:MBO}. We view this method as an energy splitting method for the Ginzburg-Landau functional \cref{eq:GL_functional}. The first term of \cref{eq:GL_functional} is the Dirichlet energy of the representation field, whose gradient flow is diffusion. Thus in the first step of  \cref{alg:MBO}, we diffuse for a short time $\tau>0$. The second term of \cref{eq:GL_functional} penalizes complex numbers that do not lie on the unit circle, $\mathbb{T}$, and the gradient flow for this term is
$$
u_t = \varepsilon^{-2} (|u|^2 - 1) u.
$$
The field evolves pointwise in the radial direction; i.e., the representation vectors lengthen or shorten radially toward $\mathbb{T}$. Since this term is multiplied by a factor $\varepsilon^2$, and $\varepsilon$ is small, we can rescale time as $\tilde t = t/\varepsilon^2$, giving $u_{\tilde t} =  (|u|^2 - 1) u$. As $\varepsilon \rightarrow 0$, $\tilde{\tau} = \tau/\varepsilon^2 \rightarrow \infty$ and the solution tends towards the closest point on $\mathbb{T}$. Thus, we approximate the second step of the energy splitting method with pointwise renormalization as in the second step of \cref{alg:MBO}.

The MBO method (\cref{alg:MBO}) provides an efficient way to approximate local minimizers for the Ginzburg-Landau energy \cref{eq:GL_functional}. By the discussion in \cref{sec:renormalized}, local minimizers will have isolated singularities of degree $\pm 1$. Such a field is a boundary-aligned canonical harmonic map for some singularity configuration, so the results in \cref{sec:topology} apply, and \cref{thm:quad-partitioning,thm:partitioning_termination} guarantee that its separatrices will partition the domain into four-sided regions.

\begin{remark}
While the original MBO diffusion generated method is known to converge to mean curvature flow, Laux and Yip \cite{laux_analysis_2018} have recently proven the codimension two result that the method converges to the gradient flow of \cref{eq:GL_functional}.
\end{remark}

\begin{algorithm}[t!]
\caption{A diffusion generated method algorithm for approximating minimizers of the Ginzburg-Landau energy \cref{eq:GL_functional}. } \label{alg:MBO}

\vspace{.2cm}

\begin{algorithmic}[t]
\STATE{\bfseries Input:} Let $D$ be a domain satisfying \cref{as:dom},  $\tau > 0$, and $\delta > 0$.
Fix boundary conditions $g$ on $\partial U$ as in \cref{def:boundary-aligned}.
Initialize a representation map of the cross field, $u_0\colon D \to \mathbb C$, with $u_0(x) = g(x)$ for every $x \in \partial D$. Let $k=0$.

\vspace{.2cm}

\WHILE {$k >0 $ and $\| u_k - u_{k-1} \| > \delta $,}
\STATE (i) Solve the diffusion equation,
\begin{subequations}
\begin{align}
\partial_t v(t,x) &= \Delta v(t,x) && x\in D \\
v(t,x) &= g(x) && x \in \partial D \\
\qquad v(0,x) &= u_k(x) && x \in D,
\end{align}
\end{subequations}
until time $\tau$. Denote the solution by $\tilde{u}_{k+1} = e^{\tau \Delta}u_k = v(\tau)$.

\bigskip

\STATE  (ii) Set $u_{k+1} = \frac{\tilde{u}_{k+1}}{ | \tilde{u}_{k+1} | } $ and $k = k+1$.

\ENDWHILE
\end{algorithmic}
\end{algorithm}

\subsection{Discrete Computation and Meshing} \label{sec:discrete}

We now turn our attention to an example implementation of our method. Given a domain $D$, we discretize it with a triangle mesh $M$ and use a $P1$ Lagrange basis. This assigns crosses to nodes of the mesh and singularities to the faces. For a canonical harmonic cross field with isolated singularities of degree $\pm 1$, a $P1$ basis is sufficient to represent the topology of the field without aliasing problems \cite{vaxman_directional_2016}. Further, the results on singularities of Kowalski et al. \cite{kowalski_pde_2013} apply to our discretization and thus the properties of singularities of canonical harmonic fields proven in \cref{sec:topology} still hold in the discrete setting.

We can approximate a canonical harmonic cross field with simple isolated singularities either by fixing the singularity locations and degrees and using \cref{eq:u_0} or via the MBO method. The former consists of a single solve of the real scalar Laplace equation (see \cref{thm:fixedField}).
To implement the MBO method, we initialize the field either with a canonical harmonic map using \cref{eq:u_0} or with a harmonic map obtained by solving the complex Laplace equation on $D$. In our experimentation we have also initialized with randomly assigned fields. We have not found an example domain where the initialization affects the final solution. We implement the diffusion steps of the MBO method using the backward Euler time discretization.
The sparse symmetric positive definite matrix appearing in this discretization is factored efficiently once and then reused at each iteration of the MBO method. This is an advantage over most other cross field methods, which require a full system solve at each iteration.

To choose the parameter $\tau$ in \cref{alg:MBO}, we use a time scale related to the characteristic length of the diffusion equation. Roughly speaking, the solution will evolve at length scales less than $\sqrt{\tau}$. Taking the characteristic length to be $\lambda_1^{-\frac 1 2}$, where $\lambda_1$ is the principle eigenvalue of the Dirichlet-Laplacian, we take $\tau = \lambda_1^{-1}$.

\Cref{fig:MBO_repVec} illustrates the time evolution of the MBO method until the solution becomes stationary on a half disk. We randomly initialize the field with a uniformly distributed unit vector field. The computation is performed on a mesh with 29,856 nodes with $\tau = \lambda_1^{-1}/10$. We choose a smaller value of $\tau$ here in order to slow down the evolution of the system in the first few MBO iterations. After 1600 iterations the field satisfies $\| u_{k} - u_{k-1} \|_{\ell^2} \leq  2n \times 10^{-15}$ where $n$ is the number of interior nodes of the mesh, however visually the solutions at each iteration are indistinguishable after $\approx$ 300 iterations.

\subsubsection{Comparison to Other Approaches} \label{sec:comparison}
In this section, we provide a comparison of the MBO method (\cref{alg:MBO}) to minimizing \cref{eq:GL_functional} directly using the L-BFGS method and to instant meshing \cite{jakob_instant_2015}.

For the L-BFGS method, we use the \verb+minlbfgsoptimize+ command in the numerical library \cite{alglib_2018} with default parameters.
To initialize \cref{alg:MBO} and the L-BFGS methods, we use the solution to the Laplace equation with boundary-aligned boundary conditions,
\begin{align*}
& \Delta u = 0  && \Omega \\
& u = g && \partial \Omega.
\end{align*}
For a convergence criterion for \cref{alg:MBO}, and L-BFGS, we use the condition
$\| u_{k} - u_{k-1} \|_{\ell^2} \leq  2n \times 10^{-4}$,
where $n$ is the number of nodes in the triangle mesh. Here we have chosen a relatively large convergence criterion, which is appropriate for meshing.

For instant meshes, we use the implementation at the interactive geometry lab website\footnote{\url{http://igl.ethz.ch/projects/instant-meshes/}} with the intrinsic energy and boundary alignment options selected  \cite{jakob_instant_2015}.
We ran instant meshes on a single thread so that the times are comparable with the other methods.

The number of iterations and wall-clock time for each method on several discretizations of various models are tabulated in \cref{t:MethComp}. The instant meshes method and \cref{alg:MBO} are roughly the same speed and an order of magnitude faster than the L-BFGS method.

A comparison of the singularity placements for a representative simulation on the gear model is given in \cref{fig:comparison}.
The singularity placements for \cref{alg:MBO} and L-BFGS are nearly the same.
Since instant meshes performs local iterations, the solution has not converged and this causes poor singularity placement for boundary-aligned fields.

\begin{table}
\centering
{\small
\begin{tabular}{l l ll c ll c ll}
& & \multicolumn{2}{c}{{\bf MBO}} & \ &\multicolumn{2}{c}{{\bf L-BFGS}} & \ &\multicolumn{2}{c}{{\bf instant meshes}}  \\
\cline{3-4} \cline{6-7} \cline{9-10}
Model       & nodes & iters  & time && iters & time && iters & time  \\
\hline
\bf{half disk}   & 269   & 47    & 0.0017 && 10 & 0.0425 && 66 & 0.0092    \\
\hline
{\bf disk}  & 264   & 223  & 0.0072 && 17 & 0.0280   && 66 & 0.0116 \\
            & 5967   & 16  & 0.0370 && 46 & 1.0724   && 96 & 0.0411 \\
\hline
{\bf gear}  & 6375   & 34  & 0.0289 && 8 & 0.8484   && 102 & 0.057 \\
            & 108787   & 14  & 0.6843 && 29 & 27.938   && 252 & 1.853 \\
\hline
{\bf nautilus}  & 964   & 95  & 0.0156 && 17 & 0.0996   && 78 & 0.017 \\
\end{tabular} }
\caption{A comparison of three methods (MBO, L-BFGS, and instant meshes) on four different models for different discretizations. For each iterative method, we report the number of iterations and the time (in seconds) for the computation. Details are given in \cref{sec:comparison}.}
\label{t:MethComp}
\end{table}

\begin{figure}
  \begin{center}
    \includegraphics[width=1\linewidth, trim={0 5cm 0 10cm}, clip]{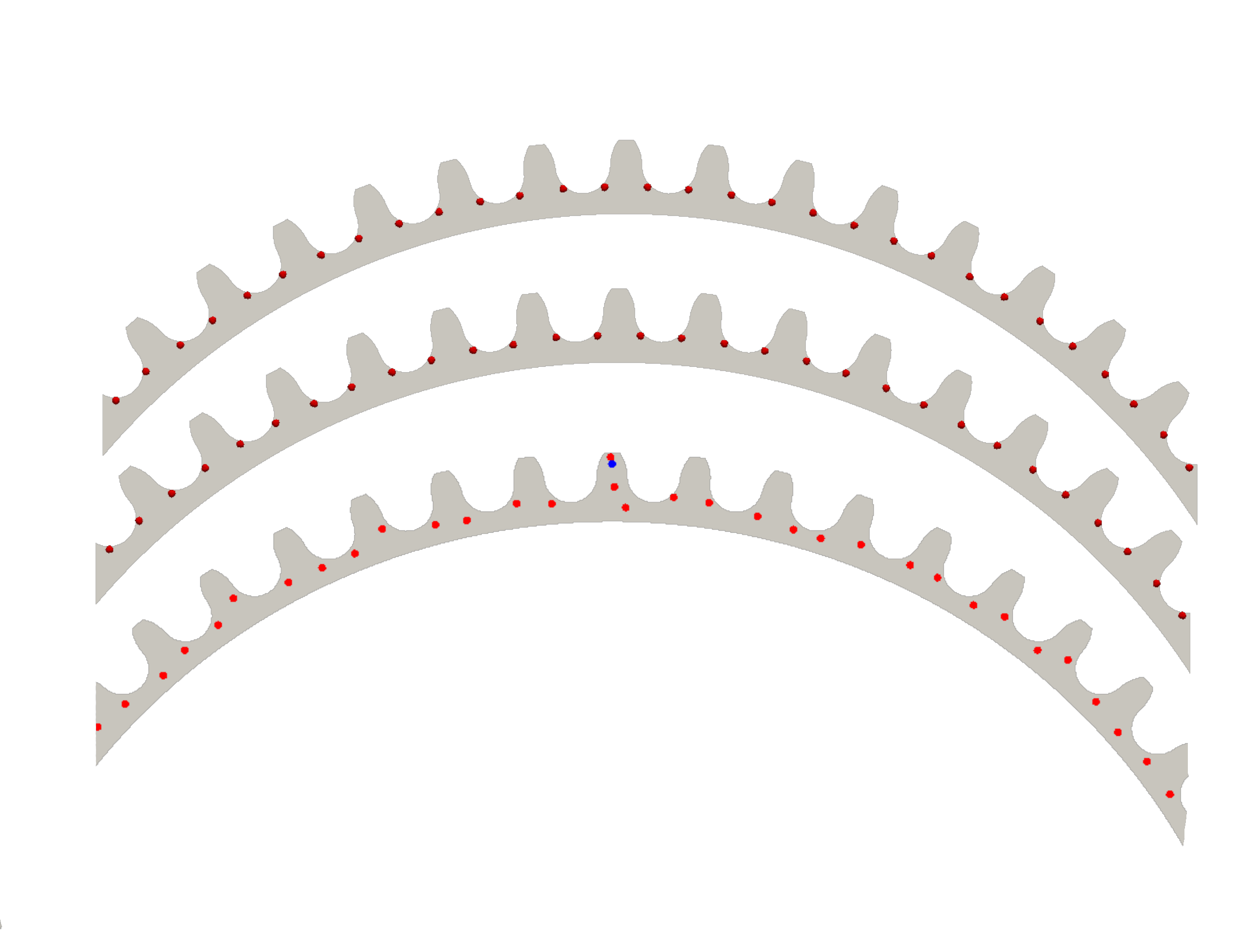}
  \end{center}
  \caption{The singularities of the gear model generated using \cref{alg:MBO} {\bf (top)}, the L-BFGS method {\bf (center)}, and instant meshes {\bf (bottom)}. See \cref{sec:comparison} for details.}
\label{fig:comparison}
\end{figure}

\begin{figure}
\begin{center}
\includegraphics[width=.48\linewidth]{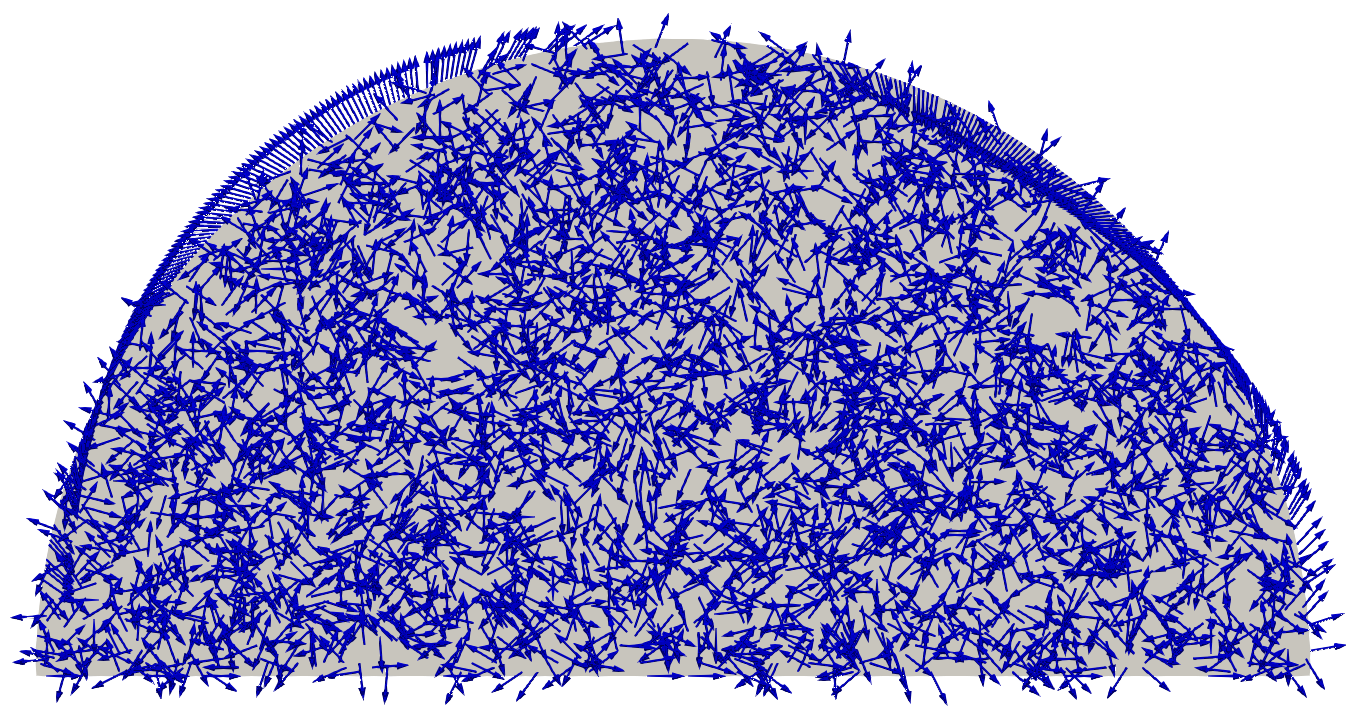}
\includegraphics[width=.48\linewidth]{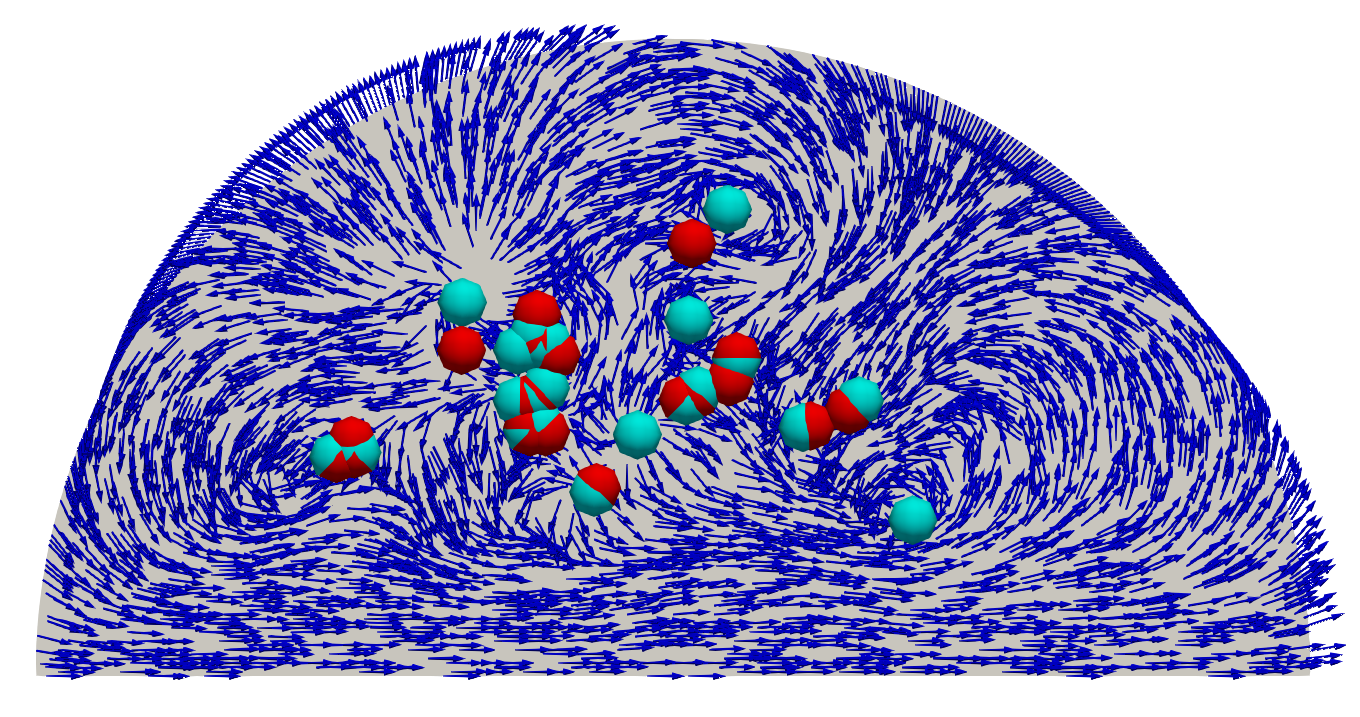} \\
\includegraphics[width=.48\linewidth]{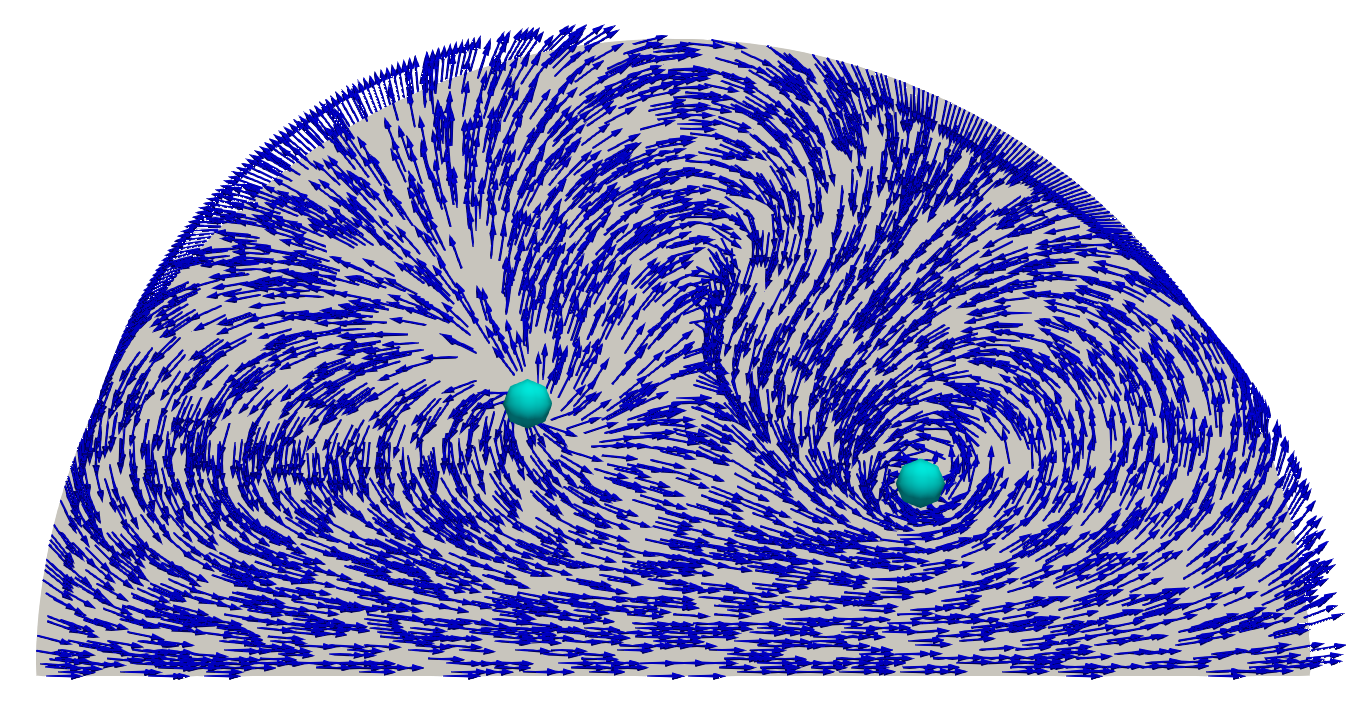}
\includegraphics[width=.48\linewidth]{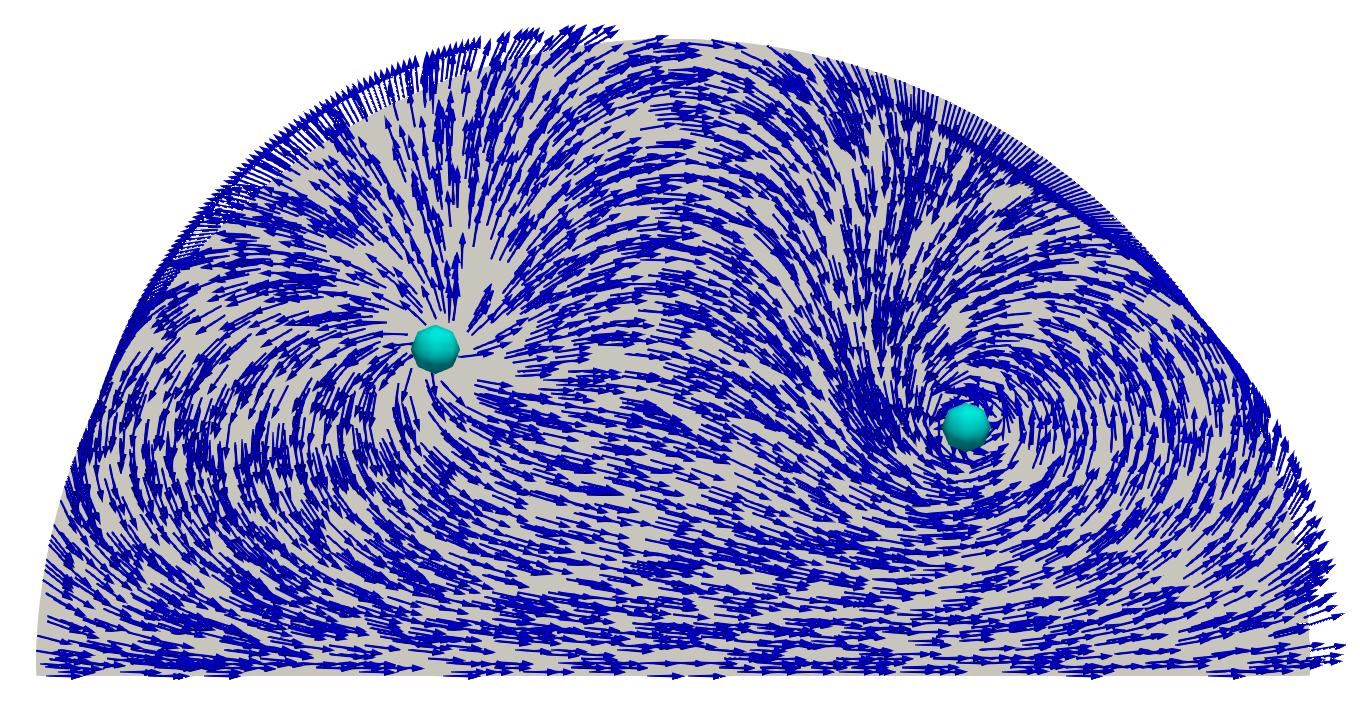} \\
\includegraphics[width=.48\linewidth]{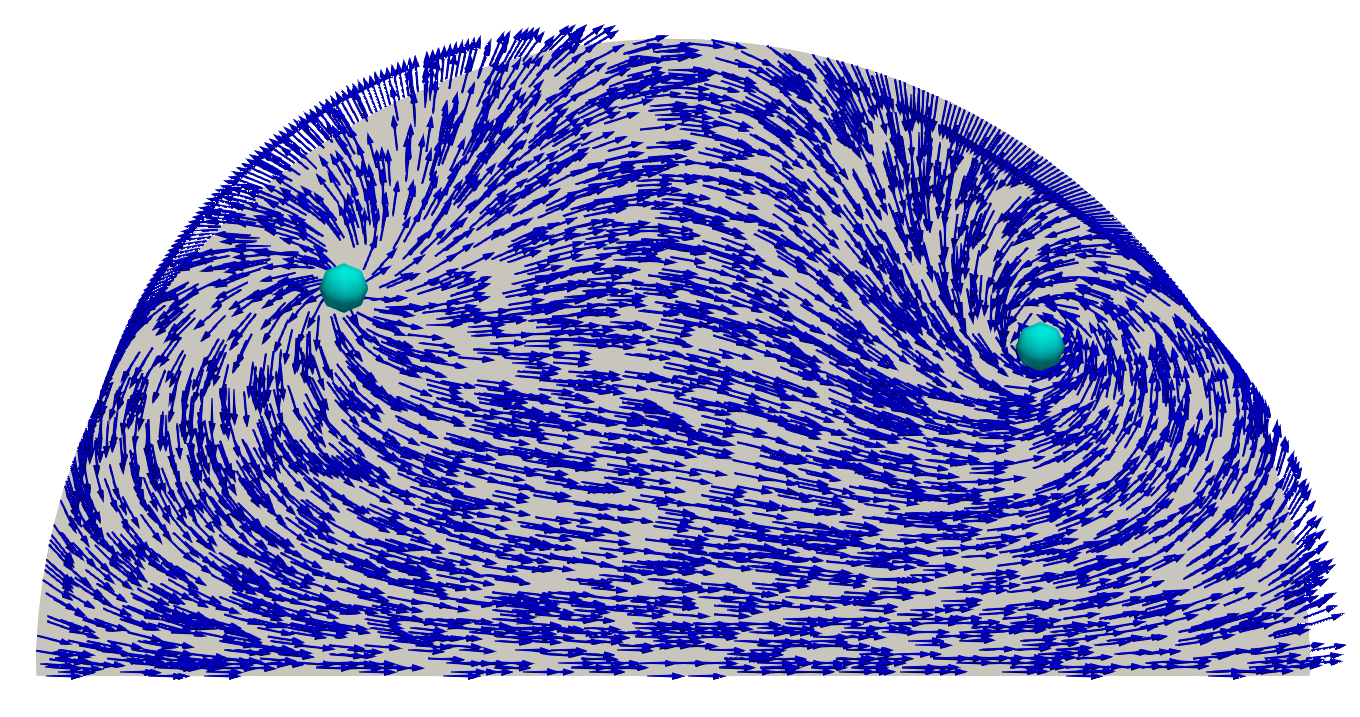}
\includegraphics[width=.48\linewidth]{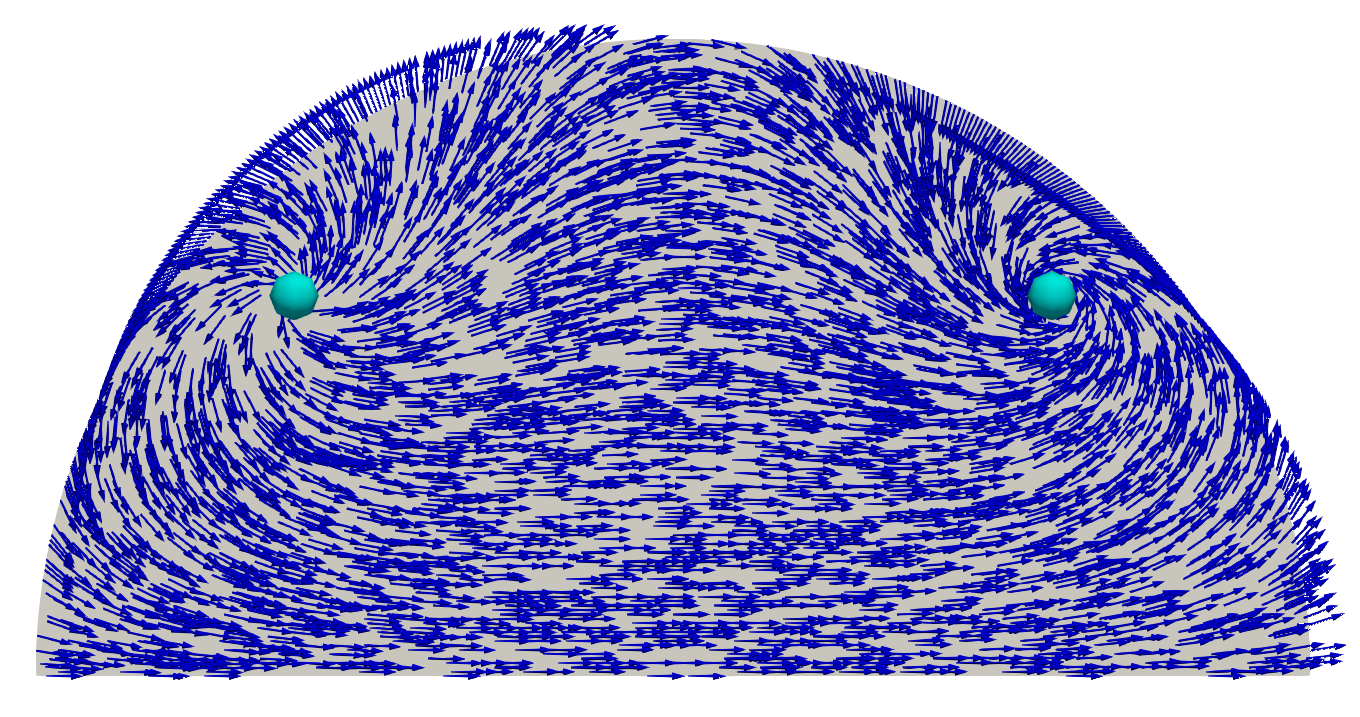} \\
\end{center}
\caption{A random initialization and iterations 1, 3, 10, 30, and 300 of \cref{alg:MBO} for a half disk. At each iteration the representation field is drawn. Singularities of index -1 are shown in red while singularities of index +1 are shown in cyan for each MBO iteration (not shown for the random initialization). As time evolves the singularities in the field move as to reduce the Ginzburg-Landau energy \cref{eq:GL_functional}. Details are included in \cref{sec:discrete}.}
\label{fig:MBO_repVec}
\end{figure}

\subsubsection{Creating a Quad Mesh from a Representation Field}\label{sec:meshing}
The discrete representation map produced by our implementation of the MBO method can be transformed into a discrete cross field by taking the fourth root of the representation map at any point. Singularities occur at the zeros of the representation map. The directions that separatrices exit a singularity are then computed by solving for locations on the boundary of a singular triangle where one of the cross vectors is pointing at the singularity location. Separatrices are then computed using a fourth order Runge-Kutta method and traced out as specified in \cref{alg:partitioning}. There is no need to compute a Riemann surface for the domain for streamline tracing because locally the appropriate vector field is completely determined. Separatrices are snapped to singularities if the distance to the singularity falls below a fixed tolerance. This method is identical to the one described in \cite{kowalski_pde_2013} except that we offer a way in which to treat separatrices converging to limit cycles. We apply it here primarily for the purposes of illustration, as neither the Runge-Kutta tracing method nor the separatrix snapping approach are robust for a variety of geometries. For more robust streamline based domain partitioning algorithms we refer the reader to \cite{campen_quantized_2015,myles_robust_2014}, and leave a robust implementation of \cref{alg:partitioning} as a subject of future research \cite{viertel_toward_2017}.

Following this approach, we obtain a partition of the domain into a quad layout with exactly one T-junction for every separatrix that approaches a limit cycle. For quad layouts that contain no T-junctions, a regular grid can be conformally mapped into each quad region to obtain the mesh. We do this using the CUBIT software \cite{_cubit_2017}.

When a T-junctions appears in a quad layout, a mesh can no longer be obtained by simply mapping a grid into each region. The problem is illustrated in \cref{fig:LimitCycle} (right). To mesh a region, the opposite sides of that region must have the same number of quads. If we were to map a regular grid into each subsequent region, we would have to satisfy the conditions $a + b = c$, $c = b$, $a > 0$, $b > 0$ and $c > 0$ where $a$ and $b$ are the number of quads required on the sides adjacent to the T-junction, and $c$ is the number of quads required on the opposite side; see \cref{fig:LimitCycle}. Clearly these conditions are incompatible, so any mesh in such a region will require additional singularities. In our examples, we mesh these regions using the paving algorithm \cite{blacker_paving:_1991}, however a more deterministic method such as \cite{takayama_pattern-based_2014} could be used.

These singularities do not necessarily need to be placed in the region adjacent to the T-junction as shown in \cref{fig:LimitCycle}, but can be distributed throughout the regions by assigning the number of quads to appear on each curve. This is a combinatorial problem similar to the user specified interval assignment problem considered in \cite{mitchell_high_2000}.

\subsection{Example Mesh Generation}
\Cref{fig:ExampleMeshes} shows several example meshes using this method. The first example is a surface from a CAD designed mechanical part \cite{_cubit_2017}. This domain has Brouwer degree two and the cross field contains two singularities of index $+1/4$. Three separatrices meet at each of these singularities, and thus the corresponding quad mesh has 3-valent nodes at the singularity locations.

The second example is a block U. The field was initialized with a canonical harmonic cross field with a boundary condition of Brouwer degree $-2$ and a singularity configuration placing two singularities of index $-1/4$ in the bottom corners of the U. The corresponding quad mesh has two 5-valent nodes.

The next two examples are multiply connected domains. These examples could be handled within the framework of this paper by cutting the domain or handled directly by applying the results in \cite{rubinstein_homotopy_1996}, however in practice we apply the MBO method directly to the multiply connected domain. The first multiply connected domain has Brouwer degree $-1$, which we count by subtracting the Brouwer degree of the interior boundaries from the exterior one. One singularity of index $-1/4$ appears near the curve that contributes to the negative Brouwer degree. The cross field in the last example contains two singularities with index $-1/4$, and also contains periodic orbits that intersect themselves.

\begin{figure}[t!]
\begin{center}
  \includegraphics[width=.32\linewidth, trim=400 0 400 0, clip]{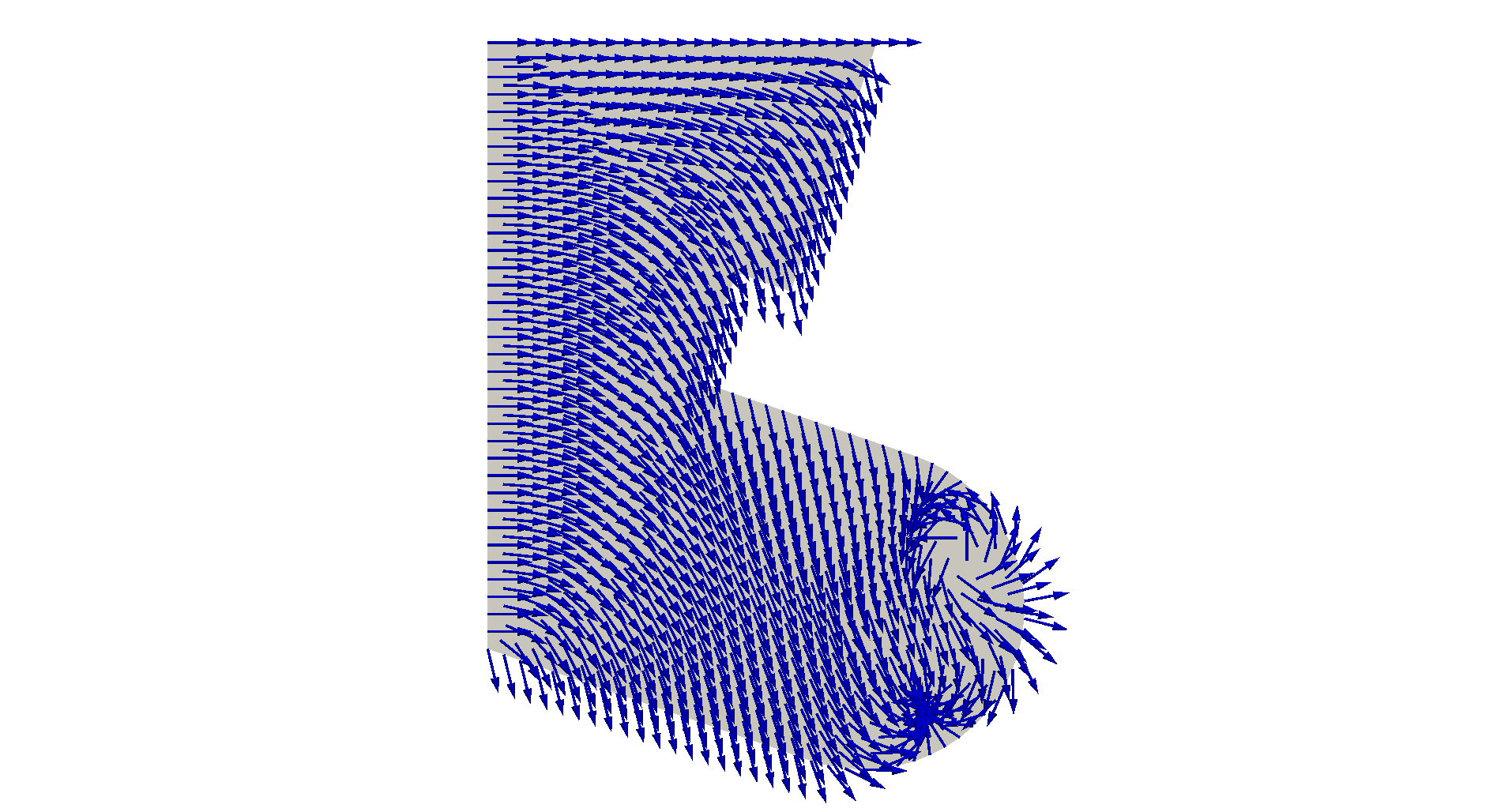}
  \includegraphics[width=.32\linewidth, trim=400 0 400 0, clip]{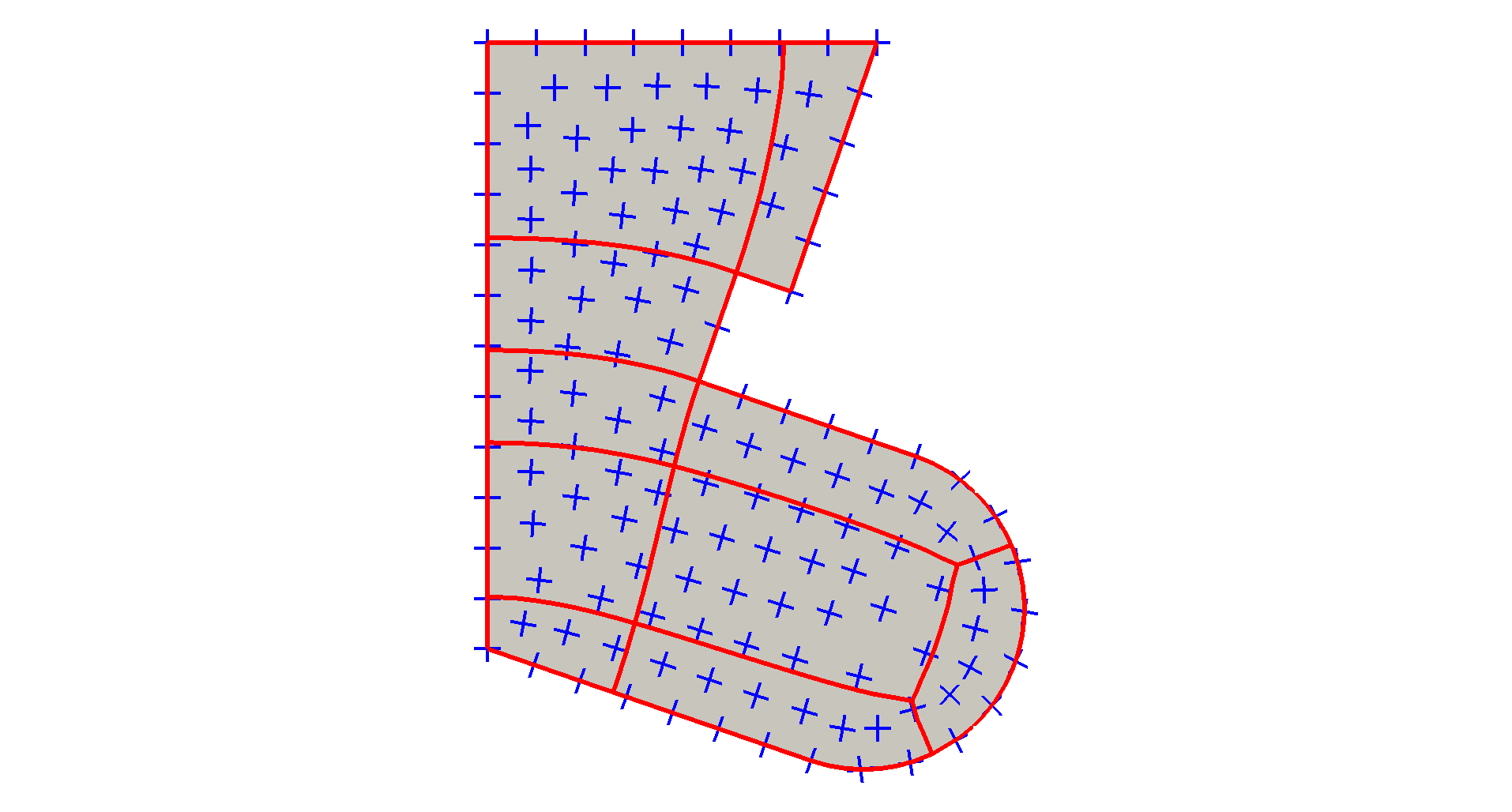}
  \includegraphics[width=.32\linewidth, trim=400 0 400 0, clip]{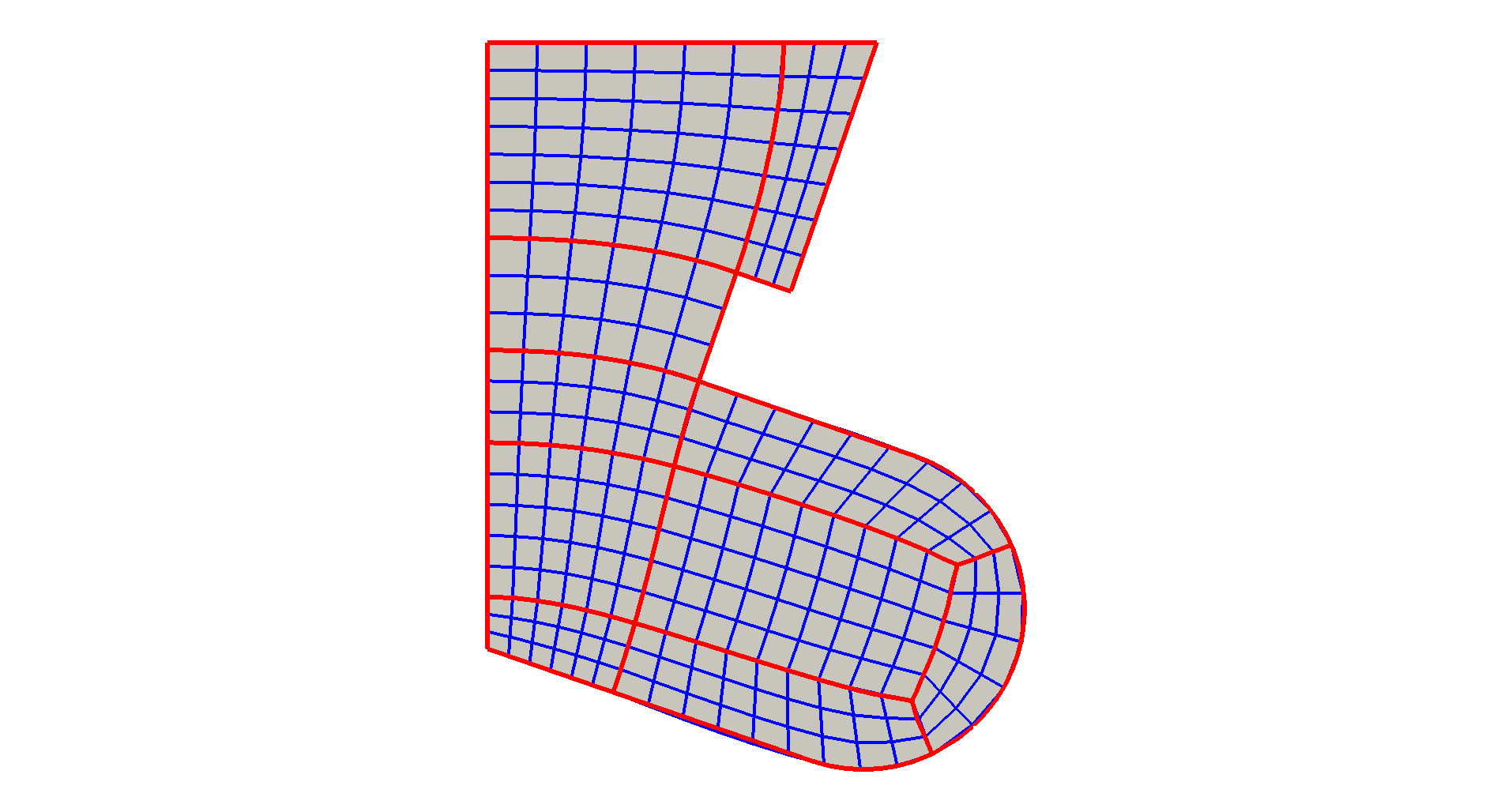} \\
\vspace{.8cm}
  \includegraphics[width=.32\linewidth, trim=270 0 270 0, clip]{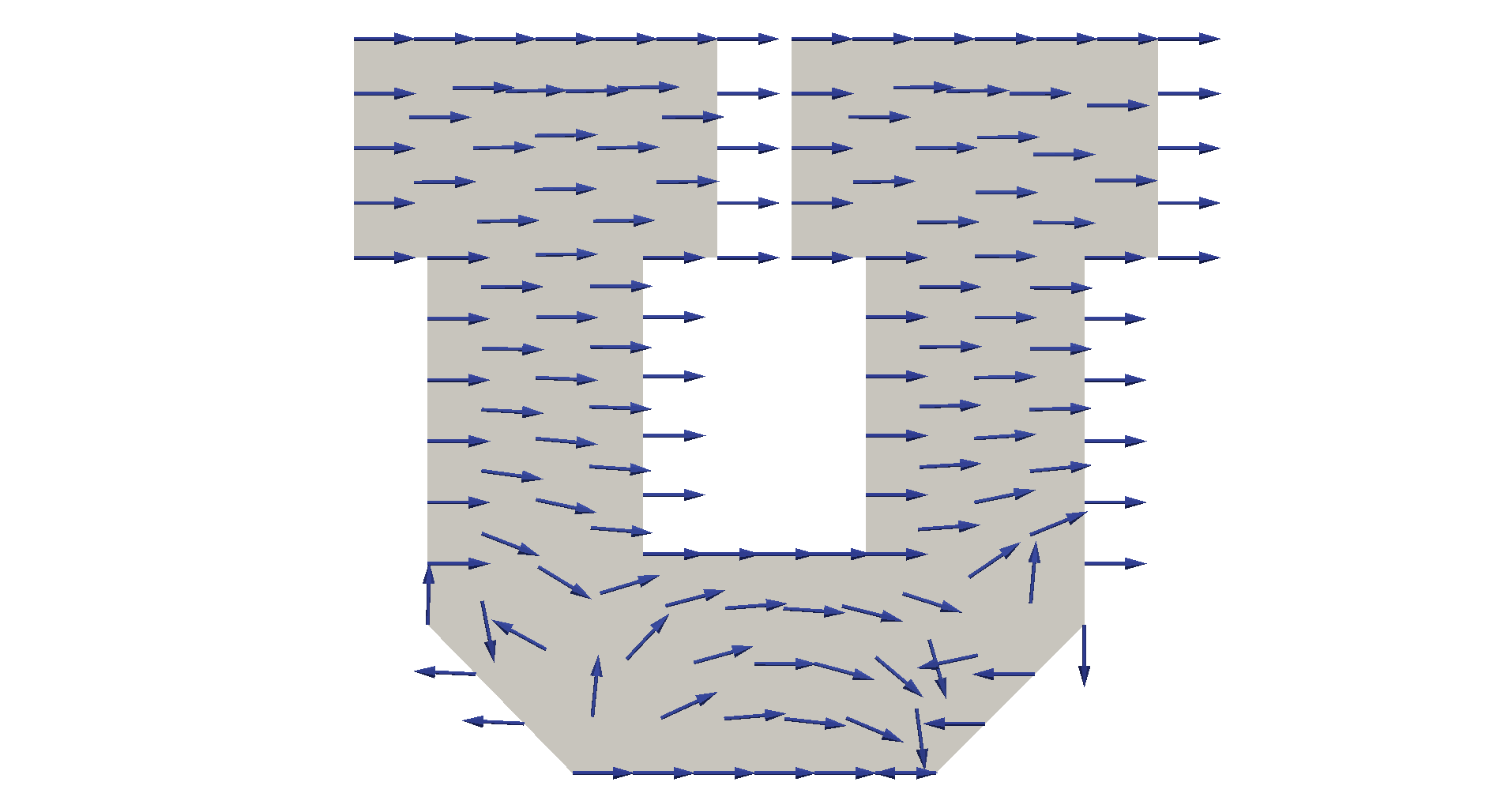}
  \includegraphics[width=.32\linewidth, trim=270 0 270 0, clip]{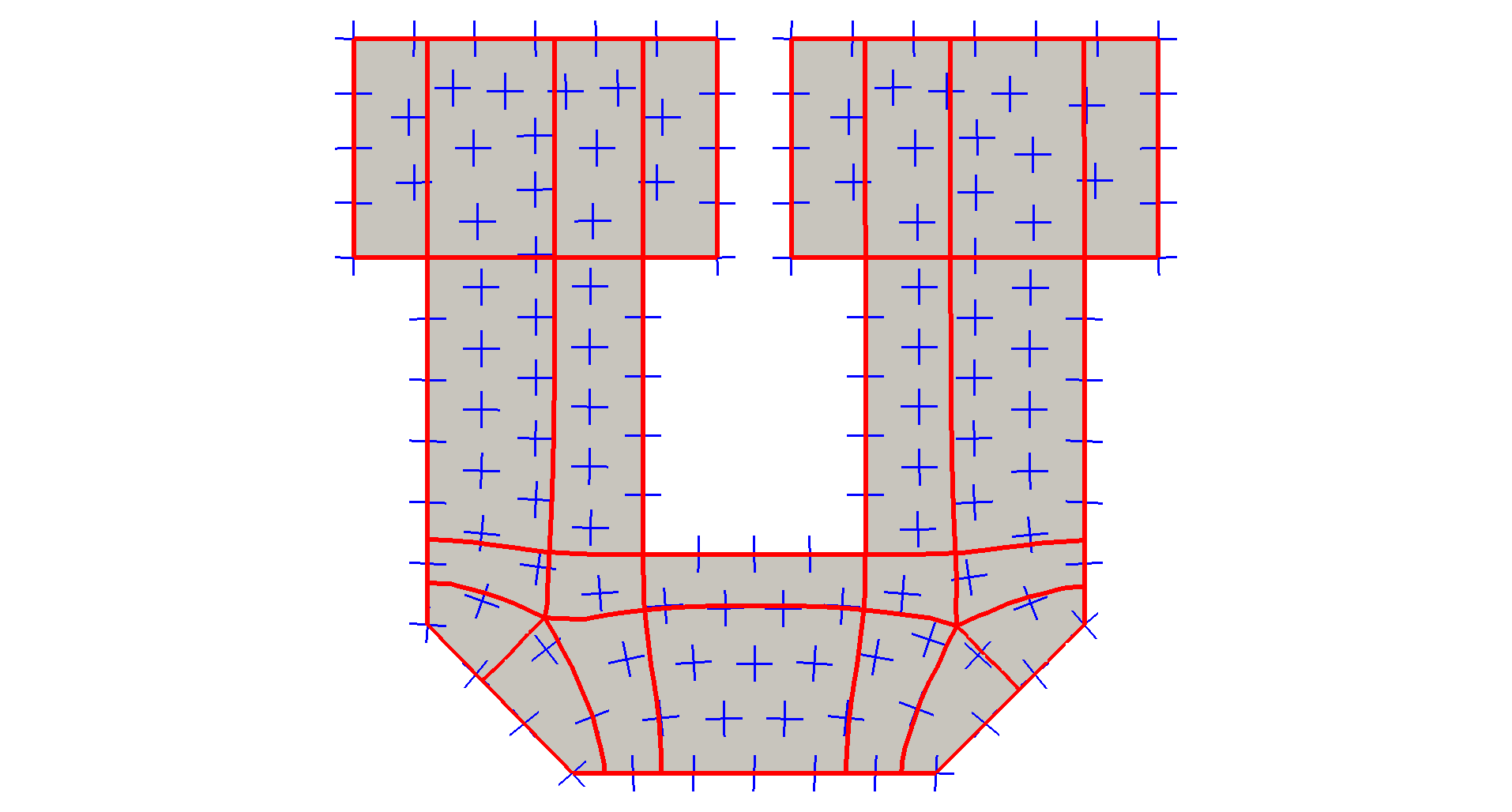}
  \includegraphics[width=.32\linewidth, trim=270 0 270 0, clip]{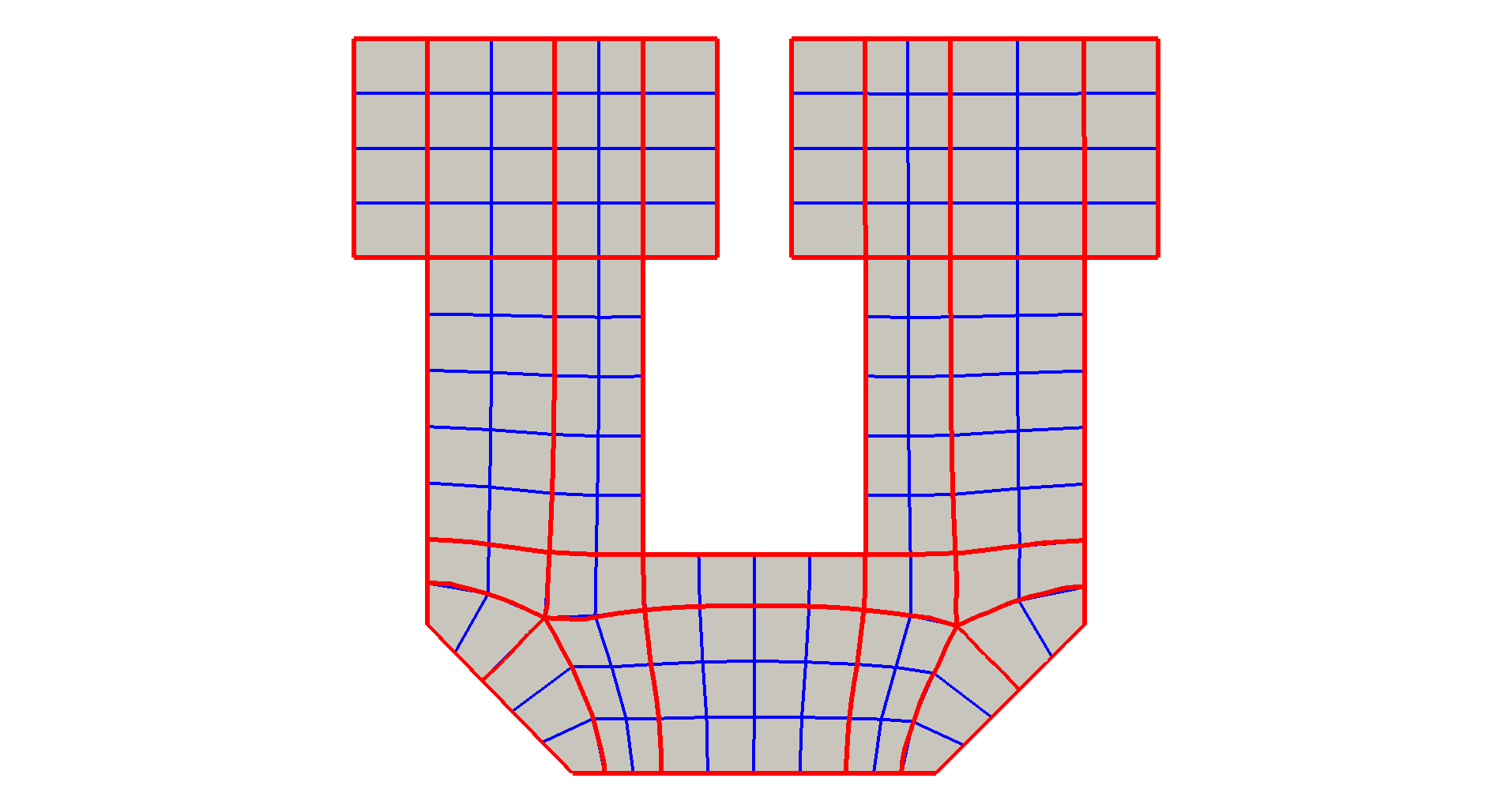} \\
\vspace{.6cm}
  \includegraphics[width=.32\linewidth, trim=250 0 250 0, clip]{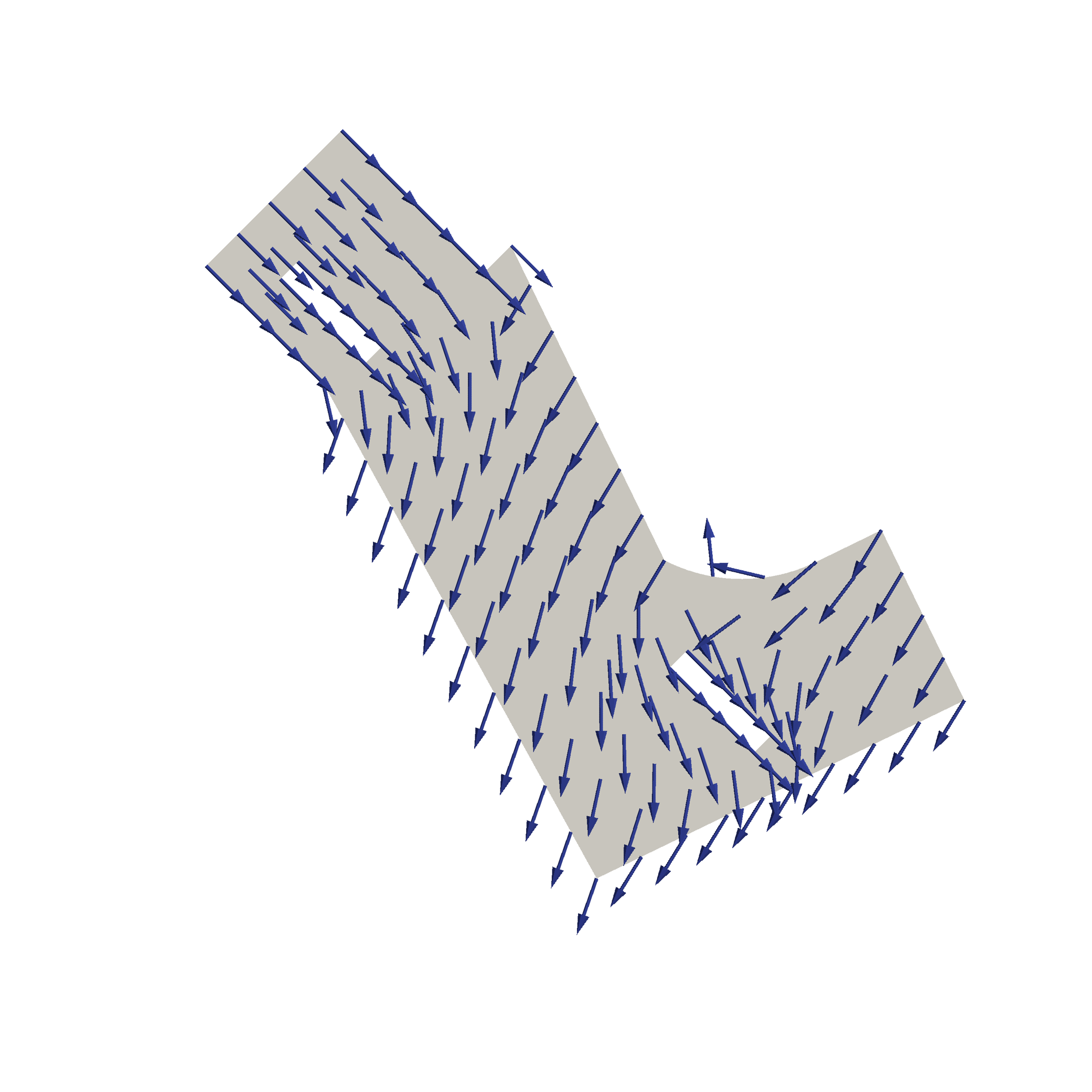}
  \includegraphics[width=.32\linewidth, trim=250 0 250 0, clip]{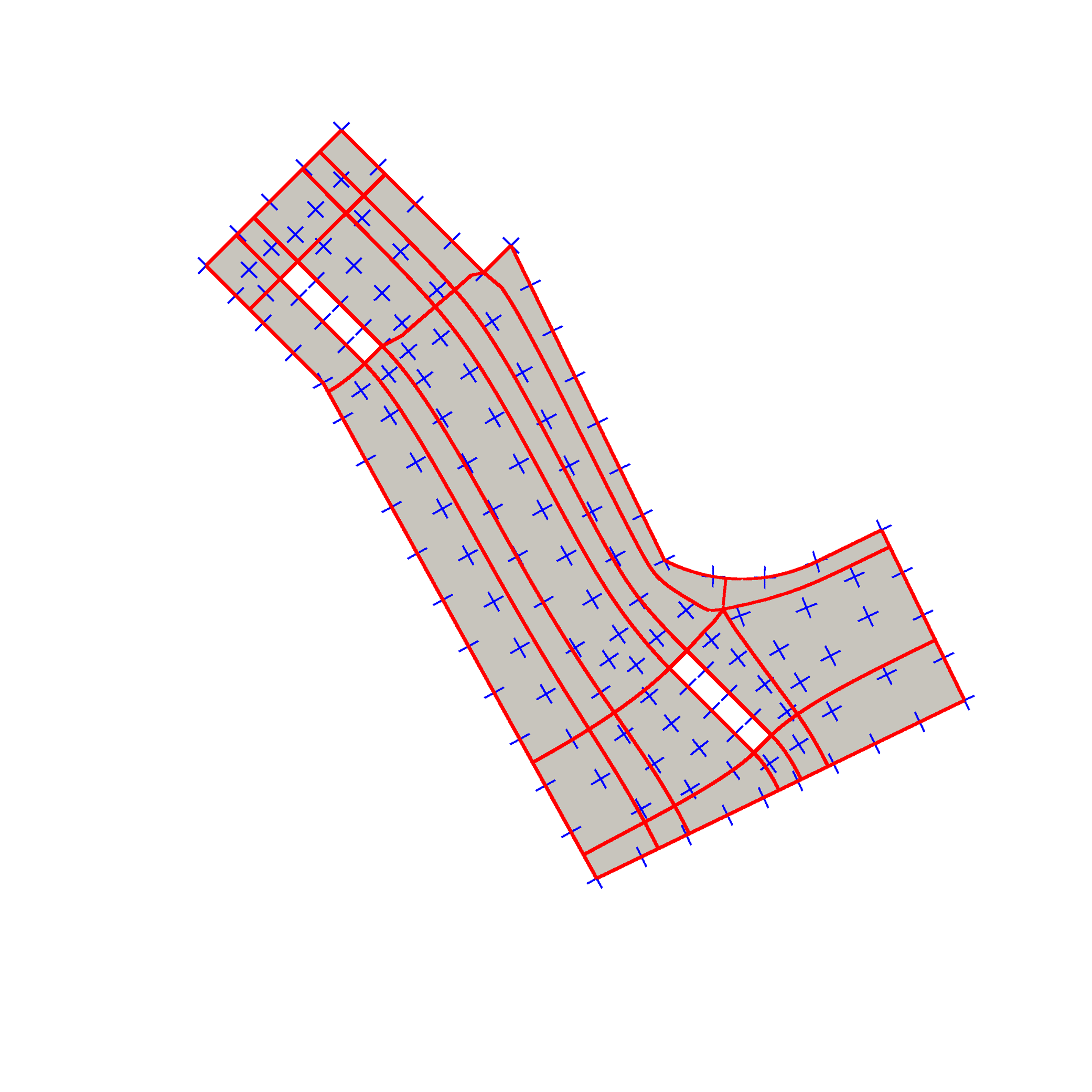}
  \includegraphics[width=.32\linewidth, trim=250 0 250 0, clip]{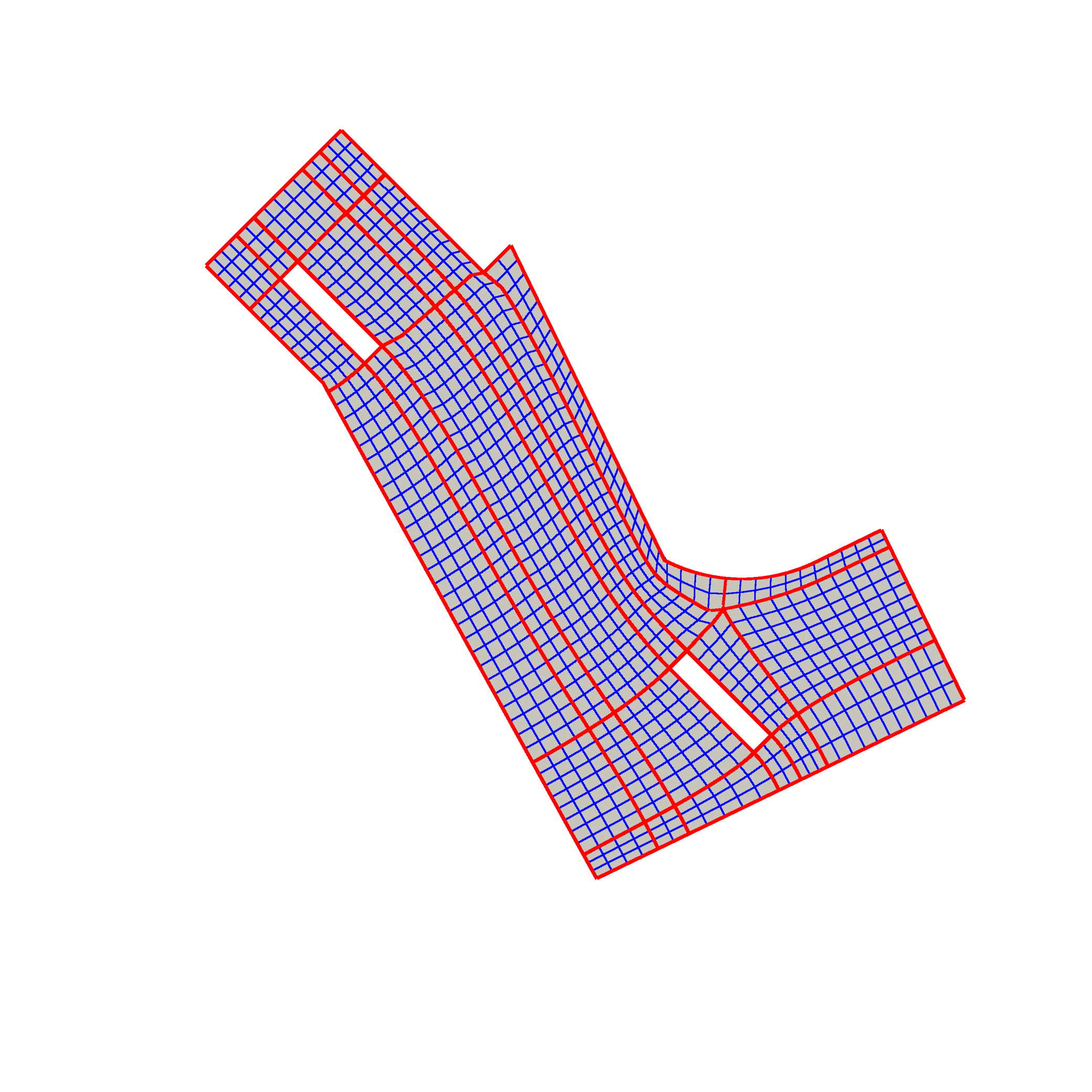} \\
\vspace{.1cm}
  \includegraphics[width=.32\linewidth, trim=500 0 500 0, clip]{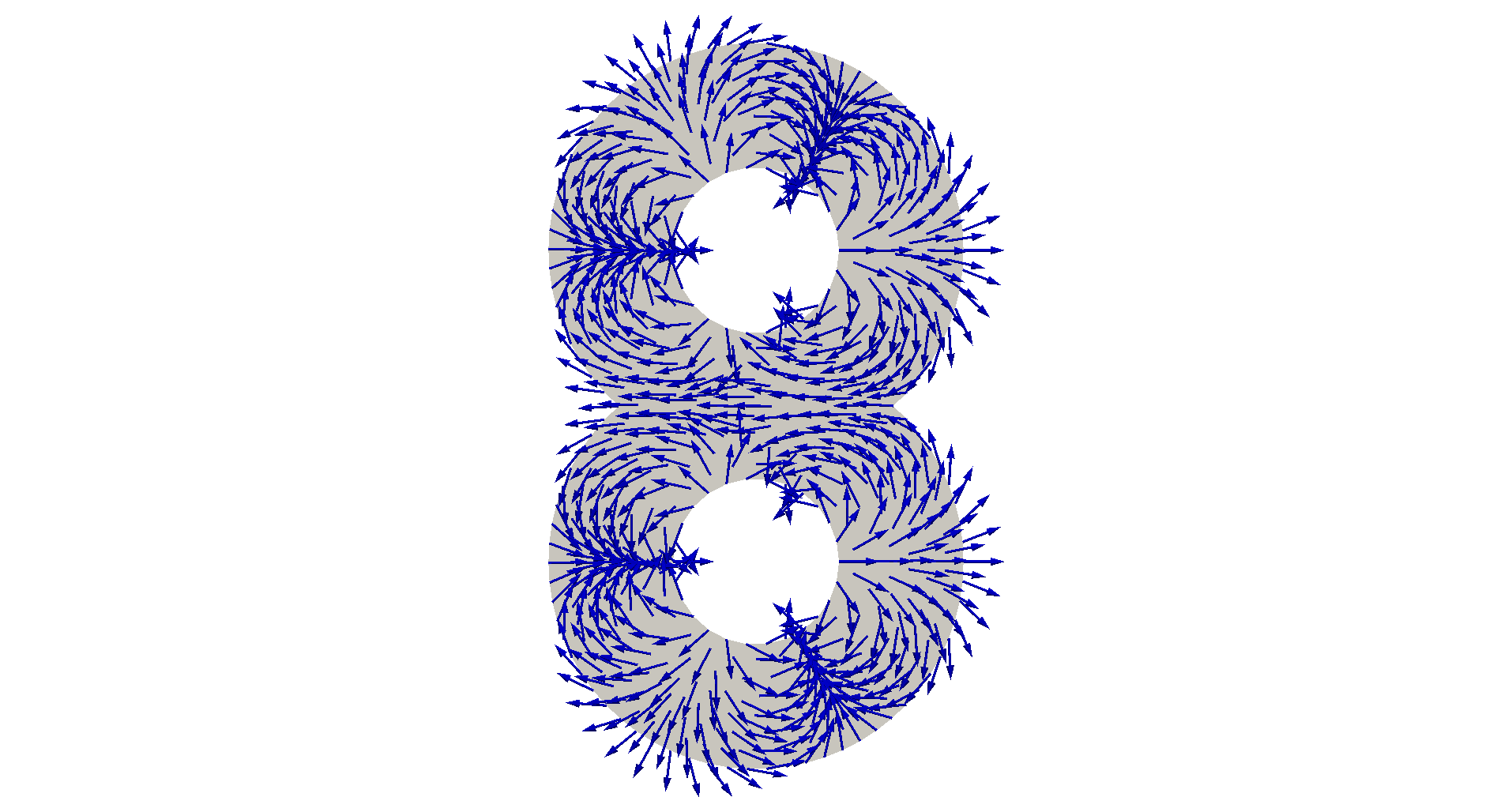}
  \includegraphics[width=.32\linewidth, trim=500 0 500 0, clip]{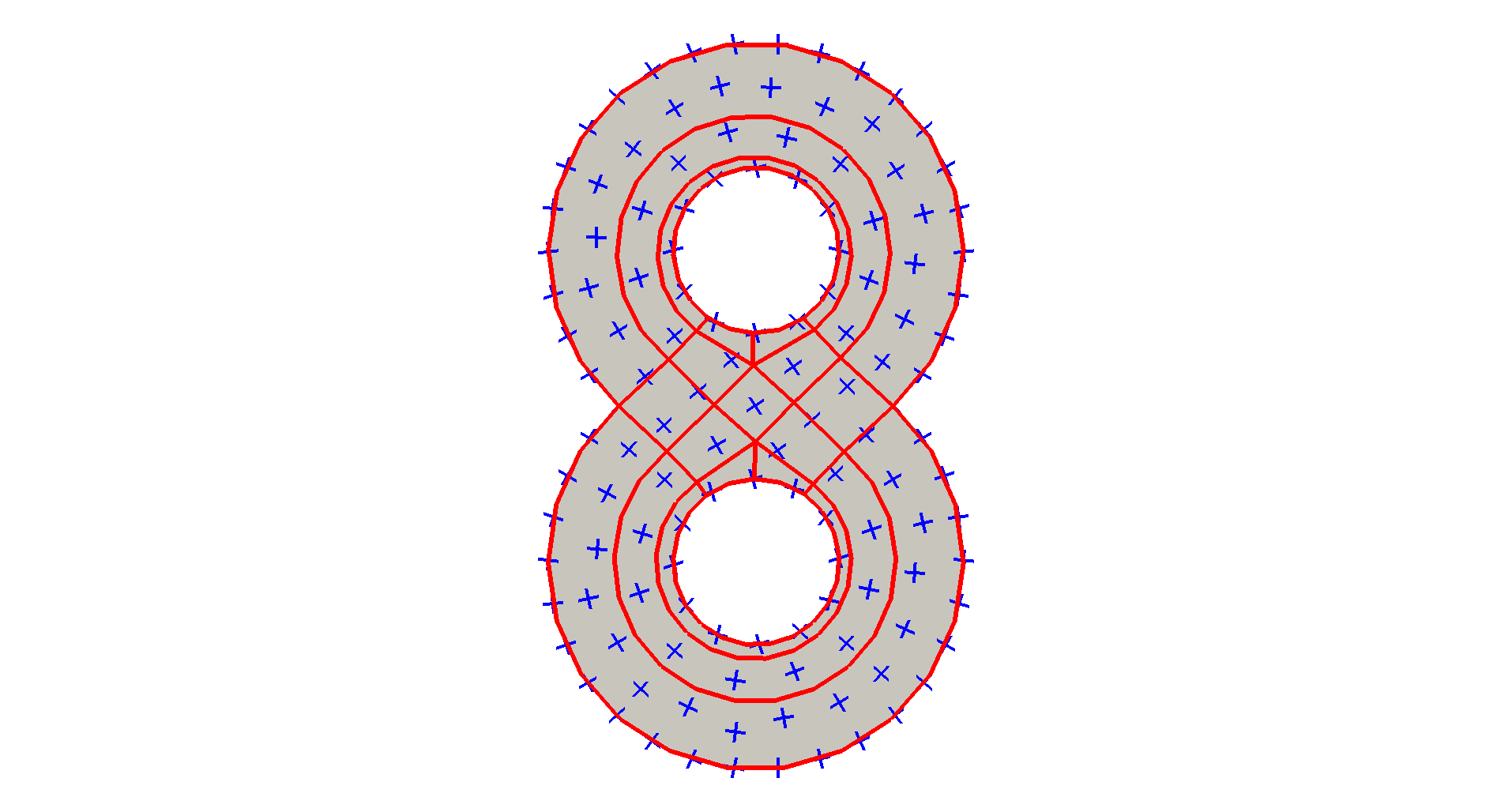}
  \includegraphics[width=.32\linewidth, trim=500 0 500 0, clip]{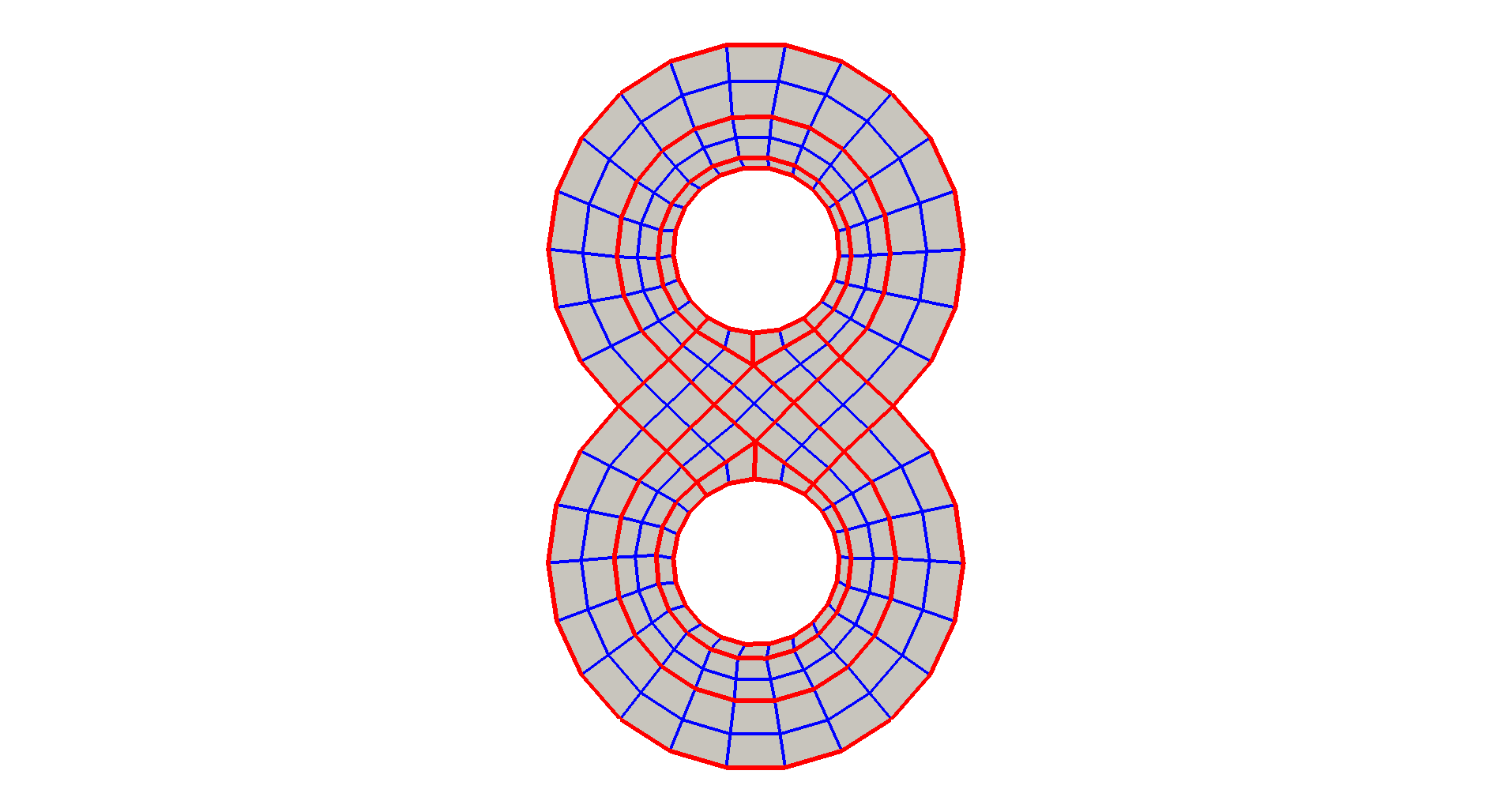}
\end{center}
\caption{ For several different geometries (rows), we  plot the
{\bf (left)} representation field obtained via the MBO method (\cref{alg:MBO}),
{\bf (center)} the cross field and quad layout obtained from the separatrices of the cross field, and
{\bf (right)} quad mesh with skeleton drawn in red.}
\label{fig:ExampleMeshes}
\end{figure}

\section{Discussion and Future Directions}\label{sec:discussion}
In this paper, we have made the observation that cross field design for two-dimensional quad meshing is related to the well-known Ginzburg-Landau problem from mathematical physics. Using this identification, we prove that this procedure can be used to produce a cross field whose separatrices partition the domain into four sided regions.  This identification also allows for an extension of the Merriman-Bence-Osher (MBO) threshold dynamics method to be used to find representation fields that approximately minimize the  Ginzburg-Landau energy. The methods are demonstrated with a number of computational examples.

Some limitations exist when using the energy \cref{eq:energy}. Since the problem is non-convex, for most domains, we cannot guarantee that we will reach a global minimum. Ironically, a global minimum may not always result in the best mesh. For example, the cross field in \cref{fig:mushroom} (left) is the global minimizer of the Ginzburg-Landau energy for this domain because it is the canonical harmonic cross field with no singularities. The one shown in \cref{fig:mushroom} (center) is the cross field obtained through the MBO method. Isotropy of mesh elements is a desirable property in many meshing applications, and in this case the local minimizer found by the MBO method produces a more isotropic mesh than the global minimizer because of the additional singularities. This suggests that this definition of cross field energy may assign too much weight to singularities. The infinite energy at singularities is also problematic because the discrete measure of field energy depends on the discretization. As the mesh is refined near singularities, the energy is increased.

Despite these limitations, we have shown that approximate solutions to problem \cref{eq:energy} has many desirable properties for meshing. For example, the MBO method produces cross fields with isolated singularities of index $\pm 1/4$ (\cref{sec:renormalized}) that locally exhibit the same structure as irregular mesh nodes \cref{sec:behavior}. Further, the separatrices of these cross fields are guaranteed to partition a domain into a quad layout, possibly with T-junctions.

There are a number of future directions for this work. One direction is a more careful numerical analysis of the finite element problem arising in the discretization of \cref{eq:GL_functional}. Along the same line, a more careful comparison should be made with the quad meshing methods in \cite{jakob_instant_2015,jiang_frame_2015,knoppel_globally_2013,kowalski_pde_2013,ray_periodic_2006}. Also, a comparison between the location of singularities of fields generated via this method and those generated via the method in \cite{knoppel_globally_2013}, utilizing a different cross field energy, is necessary.

In \cref{sec:definitions} we discuss a method to smooth the corners of a piecewise smooth domain so as to be able to apply the Ginzburg-Landau theory. A more elegant solution would be to extend the Ginzburg-Landau theory to handle domains with piecewise smooth boundary.

In this paper, the index of a boundary singularity is determined by the corner smoothing operation in \cref{sec:definitions}. While this is a natural choice, it does not allow us to mesh geometries with sharp corners such as that in \cref{fig:degenerate-singularities} (left), since a boundary singularity of index $1/2$ would be assigned. While quad elements are more isotropic when the index values are chosen close to $\pi - \interior(c)/2\pi$, in reality, there is ambiguity in the index assignment of any corner. For example, the bottom four corners on the block U in \cref{fig:ExampleMeshes} were each assigned an index of $1/4$, but an index of zero would have been just as reasonable, and would change the resulting cross field and mesh. It may be preferable to let the user have control of the index assignment for each corner to allow for greater flexibility for the mesh and enable meshing of geometries with sharp corners. This is often as simple as modifying the assignment of the boundary cross on a singular corner.

Proving general symmetry results for solutions of the Ginzburg-Landau energy is a difficult problem. In \cite{LiebLoss1994}, a the symmetric solution on a disc is analyzed and is shown to be stable under perturbations. See also \cite{Brezis1999} for further discussion and open problems. However, we observe in numerous numerical examples that the symmetries of a domain are inherited by the configuration of the Ginzburg-Landau vortices. Analytical results in this direction \emph{for the specific boundary conditions considered here} could be used to guarantee symmetries in the resulting mesh.

For simplicity, in this paper we have restricted to \emph{simple planar Euclidean domains} and not \emph{surfaces} with boundary. However, due to the topological nature of our results and the Ginzburg-Landau theory, we expect that many of these results can be extended to surfaces \cite{sternberg_dynamics_2013}. Further, the problem of designing a cross field aligned to a prescribed set of orthogonal directions, such as principle curvature directions, is often desired on curved surfaces. This is typically achieved by adding a least-squares fitting term to the energy. While we are not aware of any results from the classical Ginzburg-Landau theory that treat this extension of the energy, the MBO method can be adapted to this setting; see \cite{Osting_Wang_2018}.

In three dimensions, the analogous approach is to minimize the Dirichlet energy over the set of $H^1$ functions taking values in $SO(3) / O$ where $O$ is the octahedral symmetry group \cite{huang_boundary_2011,ray_practical_2016,solomon_boundary_2017}. Unfortunately, this approach is more complicated as the field topology is no longer sufficient to determine the underlying structure of a hex mesh \cite{viertel_analysis_2016}. Finding an efficient representation for elements of the set $SO(3) / O$ and studying the singularity structure for generalized solutions of this problem requires further attention.

\section*{Acknowledgments}
We would like to thank Dan Spirn and Matthew Staten for helpful discussions and comments, and Franck Ledoux for providing access to the GMDS meshing library. We would also like to thank the referees for their valuable comments.


\bibliographystyle{siamplain}
\bibliography{frame-fields}
\end{document}